\newcounter{MYtempeqncnt}
\newcommand {\exe} {\stackrel{\cdot} {=}}
\newcommand {\exeq} {\stackrel{\cdot} {\leq}}
\newcommand {\bq} {\mbox{\boldmath $q$}}
\newcommand {\bxt} {\mbox{\footnotesize\boldmath $x$}}
\newcommand {\byt} {\mbox{\footnotesize\boldmath $y$}}
\newcommand {\blb} {\mbox{\boldmath $l$}}
\newcommand {\blt} {\mbox{\footnotesize\boldmath $l$}}
\newcommand {\bu} {\mbox{\boldmath $u$}}
\newcommand {\bx} {\mbox{\boldmath $x$}}
\newcommand {\by} {\mbox{\boldmath $y$}}
\newcommand {\bz} {\mbox{\boldmath $z$}}
\newcommand {\bE} {\mathbb{E}}
\newcommand {\bU} {\mbox{\boldmath $U$}}
\newcommand {\bX} {\mbox{\boldmath $X$}}
\newcommand {\bY} {\mbox{\boldmath $Y$}}
\newcommand {\bZ} {\mbox{\boldmath $Z$}}
\newcommand {\bzt} {\mbox{\boldmath \footnotesize $\bz$}}
\newcommand{\calA}{{\cal A}}
\newcommand{\calB}{{\cal B}}
\newcommand{\calC}{{\cal C}}
\newcommand{\calD}{{\cal D}}
\newcommand{\calE}{{\cal E}}
\newcommand{\calF}{{\cal F}}
\newcommand{\calG}{{\cal G}}
\newcommand{\calI}{{\cal I}}
\newcommand{\calJ}{{\cal J}}
\newcommand{\calL}{{\cal L}}
\newcommand{\calM}{{\cal M}}
\newcommand{\calN}{{\cal N}}
\newcommand{\calR}{{\cal R}}
\newcommand{\calS}{{\cal S}}
\newcommand{\calU}{{\cal U}}
\newcommand{\calV}{{\cal V}}
\newcommand{\calX}{{\cal X}}
\newcommand{\calY}{{\cal Y}}
\newcommand{\calZ}{{\cal Z}}
\newcommand{\be}{\begin{equation}}
\newcommand{\ee}{\end{equation}}
\newcommand{\beqna}{\begin{eqnarray}}
\newcommand{\eeqna}{\end{eqnarray}}
\DeclareFontFamily{U}{mathx}{\hyphenchar\font45}
\DeclareFontShape{U}{mathx}{m}{n}{
      <5> <6> <7> <8> <9> <10>
      <10.95> <12> <14.4> <17.28> <20.74> <24.88>
      mathx10
      }{}
\DeclareSymbolFont{mathx}{U}{mathx}{m}{n}
\DeclareMathSymbol{\bigtimes}{1}{mathx}{"91}
\newcommand{\abs}[1]{\left|#1\right|}
\newcommand{\defi}{{\triangleq}}
\newtheorem{theorem}{Theorem}
\newtheorem{proof}{Proof}
\newtheorem{lemma}{Lemma}
\newcommand{\p}[1]{\left(#1\right)}
\newcommand{\pp}[1]{\left[#1\right]}
\newcommand{\ppp}[1]{\left\{#1\right\}}
\newcommand{\markov}{\mbox{$-\hspace{-2.3mm}\circ\hspace{1mm}$}}
\begin{document}

\title{Random Coding Error Exponents for the Two-User Interference Channel}
\author{Wasim~Huleihel
        and~Neri~Merhav
				\\
        The Andrew \& Erna Viterbi Faculty of Electrical Engineering \\
Technion - Israel Institute of Technology \\
Haifa 3200003, ISRAEL\\
E-mail: \{wh@campus, merhav@ee\}.technion.ac.il
\thanks{$^\ast$This research was partially supported by The Israeli Science Foundation (ISF), grant no. 412/12. This paper was presented in part at the 2016 International Zurich Seminar on Communications.}
}
\maketitle

\IEEEpeerreviewmaketitle

\begin{abstract}
\boldmath This paper is about deriving lower bounds on the error exponents for the two-user interference channel under the random coding regime for several ensembles. Specifically, we first analyze the standard random coding ensemble, where the codebooks are comprised of independently and identically distributed (i.i.d.) codewords. For this ensemble, we focus on optimum decoding, which is in contrast to other, suboptimal decoding rules that have been used in the literature (e.g., joint typicality decoding, treating interference as noise, etc.). The fact that the interfering signal is a codeword, rather than an i.i.d. noise process, complicates the application of conventional techniques of performance analysis of the optimum decoder. Also, unfortunately, these conventional techniques result in loose bounds. Using analytical tools rooted in statistical physics, as well as advanced union bounds, we derive single-letter formulas for the random coding error exponents. We compare our results with the best known lower bound on the error exponent, and show that our exponents can be strictly better. Then, in the second part of this paper, we consider more complicated coding ensembles, and find a lower bound on the error exponent associated with the celebrated Han-Kobayashi (HK) random coding ensemble, which is based on superposition coding. 

\end{abstract}
\begin{IEEEkeywords}
Random coding, error exponent, interference channels, superposition coding, Han-Kobayashi scheme, statistical physics, optimal decoding, multiuser communication.  
\end{IEEEkeywords}
\section{Introduction}

\PARstart{T}{he} two-user interference channel (IFC) models a general scenario of communication between two transmitters and two receivers (with no cooperation at either side), where each receiver decodes its intended message from an observed signal, which is interfered by the other user, and corrupted by channel noise. The information-theoretic analysis of this model has begun over more than four decades ago and has recently witnessed a resurgence of interest. Most of the previous work on multiuser communication, and specifically on the IFC, has focused on obtaining inner and outer bounds to the capacity region (see, for example, \cite[Ch. II.7]{GamalKim}). In a nutshell, the study of this kind of channel started in \cite{ShannonTwoWay} and continued in \cite{Ahlswede}, where simple inner and outer bounds to the capacity region were given. Then, in \cite{Carleial}, by using the well-known superposition coding technique, the inner bound of \cite{Ahlswede} was strictly improved. In \cite{Sato}, various inner and outer bounds were obtained by transforming the IFC model into some multiple-access or broadcast channel. Unfortunately, the capacity region for the general interference channel is still unknown, although it has been solved for some special cases \cite{Carleial2,Benzel}. The best known inner bound is the Han-Kobayashi (HK) region, established in \cite{HanKobayashi}, and which will also be considered in this paper.

To our knowledge, \cite{Etkin,ChangEtkin} are the only previous works which treat the error exponents for the IFC under optimal decoding. Specifically, \cite{Etkin} derives lower bounds on error exponents of random codebooks comprised of i.i.d. codewords uniformly distributed over a given type class, under maximum likelihood (ML) decoding at each user, that is, optimal decoding. Contrary to the error exponent analysis of other multiuser communication systems, such as the multiple access channel \cite{scarletNew}, the difficulty in analyzing the error probability of the optimal decoder for the IFC is due to statistical dependencies induced by the interfering signal. Indeed, for the IFC, the marginal channel determining each receiver's ML decoding rule is induced also by the codebook of the interfering user. This extremely complicates the analysis, mostly because the interfering signal is a codeword and not an i.i.d. process. Another important observation, which was noticed in \cite{Etkin}, is that the usual bounding techniques (e.g., Gallager's bounding technique) on the error probability fail to give tight results. To alleviate this problem, the authors of \cite{Etkin} combined some of the ideas from Gallager's bounding technique \cite{Gallager} to get an upper bound on the average probability of decoding error under ML decoding, the method of types \cite{CsisKro}, and used the method of type class enumerators, in the spirit of \cite{MerhavDistance}, which allows to avoid the use of Jensen's inequality in some steps.

The main purpose of this paper is to extend the study of achievability schemes to the more refined analysis of error exponents achieved by the two users, similarly as in \cite{Etkin}. Specifically, we derive single-letter expressions for the error exponents associated with the average error probability, for the finite-alphabet two-user IFC, under several random coding ensembles. The main contributions of this paper are as follows:

\begin{itemize}[leftmargin=*]
\item Similarly as in recent works (see, e.g., \cite{scarletNew,MerhavEr1,MerhavEr2,MerhavEr3,MerhavEr4} and references therein) on the analysis of error exponents, we derive single-letter lower bounds for the random coding error exponents. For the standard random coding ensemble, considered in Subsection \ref{sub:simp}, we analyze the optimal decoder for each receiver, which is interested solely in its intended message. This is in contrast to usual decoding techniques analyzed for the IFC, in which each receiver decodes, in addition to its intended message, also part of (or all) the interfering codeword (that is, the other user's message), or other conventional achievability arguments \cite[Ch. II.7]{GamalKim}, which are based on joint-typicality decoding, with restrictions on the decoder (such as, ``treat interference as noise" or to ``decode the interference"). This enables us to understand whether there is any significant degradation in performance due to the sub-optimality of the decoder. Also, since \cite{Etkin} analyzed the optimal decoder as well, we compare our formulas with those of \cite{Etkin}, and show that our error exponent can be strictly better, which implies that the bounding technique in \cite{Etkin} is not tight. It is worthwhile to mention that the analytical formulas of our error exponents are simpler than the lower bound of \cite{Etkin}. 
\item As was mentioned earlier, in \cite{Etkin} only random codebooks comprised of i.i.d. codewords (uniformly distributed over a type class) were considered. These ensembles are much simpler than the superposition codebooks of \cite{HanKobayashi}. Unfortunately, it is very tedious to analyze superposition codebooks using the methods of \cite{Etkin}. In this paper, however, the new tools that we have derived enable us to analyze more involved random coding ensembles. Indeed, we can consider the coding ensemble used in the HK achievability scheme \cite{HanKobayashi} and derive the respective error exponents. We also discuss an ensemble of hierarchical/tree codes \cite{MerhavEr5}. 
\item The analysis of the error exponents, carried out in this paper, turns out to be much more difficult than in previous works on point-to-point and multiuser communication problems, see, e.g., \cite{scarletNew,MerhavEr1,MerhavEr2,MerhavEr3,MerhavEr4}. Specifically, we encounter two main difficulties in our analysis: First, typically, when analyzing the probability of error, the first step is to apply the union bound. Usually, for point-to-point systems, under the random coding regime, the average error probability can be written as a union of pairwise independent error events. Accordingly, in this case, it is well known that the truncated union bound is exponentially tight \cite[Lemma A.2]{ShluRate}. This is no longer the case, however, when considering multiuser systems, and in particular, the IFC. For the IFC, the events comprising the union are strongly dependent, especially due to the fact that we are considering the optimal decoder. To alleviate this difficulty, following the ideas of \cite{scarletNew}, we derived new upper bounds on the probability of a union of events, which take into account the dependencies among the events. The second difficulty that we have encountered in our analysis is that in contrast to previous works, applying the type class enumerator method \cite{MerhavDistance} is not simple, due to the reason mentioned above. Using some methods from large deviations theory, we were able to tackle this difficulty.  
\item Recently, in \cite{ScarlettCog,ScarlettCogThesis}, the authors independently suggested lower bounds on the error exponents of both standard and cognitive multiple-access channels (MACs), assuming suboptimal successive decoding scheme, and using the standard random coding ensemble (considered in Subsection \ref{sub:simp}). Although the motivation in \cite{ScarlettCog} is different, the codebook construction and the decoding rule are the same as in the first part of this paper, and thus, essentially, their results apply also for the IFC. It is important to emphasize that while we believe that our error exponent analysis is somewhat simpler, at least conceptually, there is strong resemblance between our analysis and \cite{ScarlettCog}, as they both based on type enumeration techniques. Note, however, that while in \cite{ScarlettCog} the standard union bound was used, here, the new upper bounds mentioned above, provide some potential gain over \cite{ScarlettCog}, even for the ordinary ensemble. Also, as was mentioned above, we consider also the more complicated ensemble pertaining to the HK scheme. The derivation of the lower bound on the error exponent of this ensemble is built upon the derivation of the lower bound on the error exponent of the standard random coding ensemble, and thus it makes useful and convenient to start with the analysis of the latter ensemble. We emphasize that the extension of \cite{ScarlettCog} to the HK ensemble is non-trivial. Finally, we mention that the focus in \cite{ScarlettCog} was on achievable rate region, rather than error exponents, and thus no comparison to \cite{Etkin} was provided.
\item We believe that by using the techniques and tools derived in this paper, other multiuser systems, such as the IFC with mismatched decoding, the MAC \cite{scarletNew}, the broadcast channel, the relay channel, etc., and accordingly, other coding schemes, such as binning \cite{MerhavEr1}, and hierarchical codes \cite{MerhavEr5}, can be analyzed.
\end{itemize}

The paper is organized as follows. In Section \ref{sec:not}, we establish notation conventions. In Section \ref{sec:main}, we formalize the problem and assert the main theorems. Specifically, in Subsections \ref{sub:simp} and \ref{sub:HK}, we give the resulting error exponents under the standard random coding ensemble and the HK coding ensemble, respectively. Finally, Section \ref{sec:proof1} is devoted to the proofs of our main results.
\allowdisplaybreaks

\section{Notation Conventions}\label{sec:not}

Throughout this paper, scalar random variables (RVs) will be denoted by capital letters, their sample values will be denoted by the respective lower case letters, and their alphabets will be denoted by the respective calligraphic letters, e.g. $X$, $x$, and $\calX$, respectively. A similar convention will apply to random vectors of dimension $n$ and their sample values, which will be denoted with the same symbols in the boldface font. We also use the notation $X_i^j$ $(j>i)$ to designate the sequence of RVs $(X_i,X_{i+1},\ldots,X_j)$. The set of all $n$-vectors with components taking values in a certain finite alphabet, will be denoted by the same alphabet superscripted by $n$, e.g., $\calX^n$. Generic channels will be usually denoted by the letters $P$, $Q$, or $W$. We shall mainly consider joint distributions of two RVs $(X,Y)$ over the Cartesian product of two finite alphabets $\calX$ and $\calY$. For brevity, we will denote any joint distribution, e.g. $Q_{XY}$, simply by $Q$, the marginals will be denoted by $Q_X$ and $Q_Y$, and the conditional distributions will be denoted by $Q_{X\vert Y}$ and $Q_{Y\vert X}$. The joint distribution induced by $Q_{X}$ and $Q_{Y|X}$ will be denoted by $Q_{X}\times Q_{Y|X}$, and a similar notation will be used when the roles of $X$ and $Y$ are switched. 

The expectation operator will be denoted by $\bE\ppp{\cdot}$, and when we wish to make the dependence on the underlying distribution $Q$ clear, we denote it by $\bE_Q\ppp{\cdot}$. Information measures induced by the generic joint distribution $Q_{XY}$, will be subscripted by $Q$, for example, $I_Q(X;Y)$ will denote the corresponding mutual information, etc. The divergence (or, Kullback-Liebler distance) between two probability measures $Q$ and $P$ will be denoted by $D(Q||P)$. The weighted divergence between two channels, $Q_{Y|X}$ and $P_{Y|X}$, with weight $P_X$, is defined as
\begin{align}
D(Q_{Y|X}||P_{Y|X}\vert P_X)&\triangleq \sum_{x\in\calX}P_X(x)\nonumber\\
&\ \ \ \sum_{y\in\calY}Q_{Y|X}(y\vert x)\log\frac{Q_{Y|X}(y\vert x)}{P_{Y|X}(y\vert x)}.
\end{align}
For a given vector $\bx$, let $\hat{Q}_{\bxt}$ denote the empirical distribution, that is, the vector $\{\hat{Q}_{\bxt}(x),~x\in{\cal X}\}$, where $\hat{Q}_{\bxt}(x)$ is the relative frequency of the letter $x$ in the vector $\bx$. Let $T(P_X)$ denote the type class associated with $P_X$, that is, the set of all sequences $\bx$ for which $\hat{Q}_{\bxt}=P_X$. Similarly, for a pair of vectors $(\bx,\by)$, the empirical joint distribution will be denoted by $\hat{Q}_{\bxt\byt}$, or simply by $\hat{Q}$, for short. All previously defined notation rules for regular distributions will also be used for empirical distributions.

The cardinality of a finite set $\calA$ will be denoted by $\abs{\calA}$, its complement will be denoted by $\calA^c$. The probability of an event $\calE$ will be denoted by $\Pr\ppp{\calE}$. The indicator function of an event $\calE$ will be denoted by $\calI\ppp{\calE}$. For two sequences of positive numbers, $\ppp{a_n}$ and $\ppp{b_n}$, the notation $a_n\exe b_n$ means that $\ppp{a_n}$ and $\ppp{b_n}$ are of the same exponential order, i.e., $n^{-1}\log a_n/b_n\to0$ as $n\to\infty$, where logarithms are defined with respect to (w.r.t.) the natural basis, that is, $\log\p{\cdot} = \ln\p{\cdot}$. 
Finally, for a real number $x$, we denote $\pp{x}_+ \triangleq \max\ppp{0,x}$.

\section{Problem Formulation and Main Results}\label{sec:main}
We divide this section into three subsections. In the first, we present the model and formulate the problem. In the second, we present a lower bound on the IFC error exponent, assuming a simple random coding ensemble where random codebooks comprised of i.i.d. codewords are uniformly distributed over a type class. It is well-known \cite{ChangEtkin} that this coding scheme can be improved by using superposition coding and introducing the notion of ``private" and ``common" messages (to be defined in the sequel). Accordingly, in the third subsection, we consider the HK coding scheme \cite{HanKobayashi}, and derive lower bounds on the error exponents. Finally, we discuss other ensembles that can be analyzed using the same methods.

\subsection{The IFC Model}
Consider a two-user interference channel of two senders, two receivers, and a discrete memoryless channel (DMC), defined by a set of single-letter transition probabilities, $W_{Y_1Y_2\vert X_1X_2}\p{y_1,y_2\vert x_1,x_2}$, with finite input alphabets, $\calX_1,\calX_2$, and finite output alphabets, $\calY_1,\calY_2$. Here, each sender, $k\in\ppp{1,2}$, communicates an independent message $m_k\in\{1,2,\ldots,M_k\triangleq 2^{nR_k}\}$ at rate $R_k$, and each receiver, $l\in\ppp{1,2}$, decodes its respective message. Specifically, a $(2^{nR_1},2^{nR_2},n)$ code $\calC_n$ consists of:
\begin{itemize}
\item Two message sets $\calM_1\triangleq\ppp{0,\ldots,2^{nR_1}-1}$ and $\calM_2\triangleq\ppp{0,\ldots,2^{nR_2}-1}$ for the first and second users, respectively.
\item Two encoders, where for each $k\in\ppp{1,2}$, the $k$-th encoder assigns a codeword $\bx_{k,i}$ to each message $i\in\calM_k$.
\item Two decoders, where each decoder $l\in\ppp{1,2}$ assigns an estimate $\hat{m}_{l}$ to $m_l$.
\end{itemize}

We assume that the message pair $\p{m_1,m_2}$ is uniformly distributed over $\calM_1\times\calM_2$. It is clear that the \emph{optimal decoder} of the first user, for this problem, is given by
\begin{align}
\hat{m}_1 &= \arg\max_{i\in\calM_1}P\p{\by_1\vert \bx_{1,i}}\\
&= \arg\max_{i\in\calM_1}\frac{1}{M_2}\sum_{j=1}^{M_2-1}P\p{\by_1\vert\bx_{1,i},\bx_{2,j}}\label{optDec}
\end{align}
where $P\p{\by_1\vert\bx_{1,i},\bx_{2,j}}$ is the marginal channel defined as
\begin{align}
P\p{\by_1\vert\bx_{1,i},\bx_{2,j}}\triangleq \prod_{k=1}^nW_{Y_1\vert X_1X_2}(y_{1k}\vert x_{1,i,k},x_{2,j,k}),
\end{align}
and
\begin{align}
&W_{Y_1\vert X_1X_2}(y_{1,k}\vert x_{1,i,k},x_{2,j,k})\nonumber\\
&\triangleq \sum_{y_{2,k}\in\calY_2}W_{Y_1Y_2\vert X_1X_2}(y_{1,k},y_{2,k}\vert x_{1,i,k},x_{2,j,k}).
\end{align}
The optimal decoder of the second user is defined similarly. Since there is no cooperation between the two receivers, the error probabilities for the code $\calC_n$, are defined as
\begin{align}
P_{e,i}\p{\calC_n} &\triangleq 2^{-n(R_1+R_2)}\nonumber\\
&\cdot\sum_{\tilde m_1,\tilde m_2}\Pr\ppp{\hat{m}_i\p{Y_i^n}\neq \tilde m_i\vert m_1=\tilde m_1,m_2=\tilde m_2},
\end{align} 
for $i=1,2$.

\subsection{The Ordinary Random Coding Ensemble}\label{sub:simp}
In this subsection, we consider the ordinary random coding ensemble: For each $k\in\ppp{1,2}$, we select independently $M_k$ codewords $\ppp{\bx_{k,i}}$, for $i\in\calM_k$, under the uniform distribution across the type class $T\p{P_{X_k}}$, for a given distribution $P_{X_k}$ on $\calX_k$. Our goal is to assess the exponential rate of $\bar{P}^{(n)}_{e,1} \triangleq \bE \ppp{P_{e,1}\p{\calC_n}}$, where the average is over the code ensemble, that is,
\begin{align}
E_1^*(R_1,R_2) \triangleq \liminf_{n\to\infty}-\frac{1}{n}\log \bar{P}^{(n)}_{e,1},\label{ErrorUser1}
\end{align}
and similarly for the second user. Before stating the main result, we define some quantities. Given a joint distribution $Q_{X_1X_2Y_1}$ over $\calX_1\times\calX_2\times\calY_1$, consider the definitions in \eqref{fdfd}, shown at the top of the next page.
\begin{figure*}[!t]
\normalsize
\setcounter{MYtempeqncnt}{\value{equation}}
\setcounter{equation}{7}
\begin{subequations}\label{fdfd}
\begin{IEEEeqnarray}{rCl}
& &f\p{Q_{X_1X_2Y_1}} \triangleq \bE_Q\pp{\log W_{Y_1\vert X_1X_2}(Y_1\vert X_1,X_2)},\\
& &t_0	(Q_{X_1Y_1}) \triangleq R_2+\max_{\substack{\hat{Q}:\;\hat{Q}_{X_2}=P_{X_2},\;\hat{Q}_{X_1Y_1} = Q_{X_1Y_1}\\ I_{\hat{Q}}(X_2;X_1,Y_1)\leq R_2}}\pp{f(\hat{Q})-I_{\hat{Q}}(X_2;X_1,Y_1)},\label{t0def}\\
& & E_1(\tilde{Q}_{X_1X_2Y_1},Q_{X_1X_2Y_1}) \triangleq \min_{\substack{\hat{Q}:\;\hat{Q}_{X_2}=P_{X_2},\;\hat{Q}_{X_1Y_1} = \tilde{Q}_{X_1Y_1}\\ \hat{Q}\in\calL(\tilde{Q}_{X_1X_2Y_1},Q_{X_1X_2Y_1})}}\pp{I_{\hat{Q}}(X_2;X_1,Y_1)-R_2}_+,\label{E1QtileQdef}\\
& &E_2(\tilde{Q}_{X_1X_2Y_1},Q_{X_1X_2Y_1}) \triangleq \min_{\substack{\hat{Q}:\;\hat{Q}_{X_2}=P_{X_2},\;\hat{Q}_{X_1Y_1} = \tilde{Q}_{X_1Y_1}\\ \hat{Q}\in\hat{\calL}(\tilde{Q}_{X_1X_2Y_1},Q_{X_1X_2Y_1})}}\pp{I_{\hat{Q}}(X_2;Y_1)-R_2}_+,\label{E2QtileQdef}\\
& &\calL(\tilde{Q}_{X_1X_2Y_1},Q_{X_1X_2Y_1}) \triangleq \left\{\hat{Q}:\vphantom{\max\pp{f(\tilde{Q}_{X_1X_2Y_1}),f(\hat{Q})+\pp{R_2-I_{\hat{Q}}(X_2;X_1,Y_1)}_+}}\;\max\pp{t_0(Q_{X_1X_2Y_1}),f(Q_{X_1X_2Y_1})}\right.\nonumber\\
& &\hspace{5cm}\left.\leq\max\pp{f(\tilde{Q}_{X_1X_2Y_1}),f(\hat{Q})+\pp{R_2-I_{\hat{Q}}(X_2;X_1,Y_1)}_+}\right\},\label{LsetCases0}\\
& &\hat{\calL}(\tilde{Q}_{X_1X_2Y_1},Q_{X_1X_2Y_1}) \triangleq \left\{\hat{Q}:\vphantom{\max\pp{f(\tilde{Q}_{X_1X_2Y_1}),f(\hat{Q})+\pp{R_2-I_{\hat{Q}}(X_2;X_1,Y_1)}_+}}\;\max\pp{t_0(Q_{X_1X_2Y_1}),f(Q_{X_1X_2Y_1})}\right.\nonumber\\
& &\hspace{5cm}\left.\leq\max\pp{f(\tilde{Q}_{X_1X_2Y_1}),f(\hat{Q})+\pp{R_2-I_{\hat{Q}}(X_2;Y_1)}_+}\right\},\label{LsetCases}\\
& &\hat{E}_1(Q_{X_1X_2Y_1},R_2) \triangleq \min_{\tilde{Q}:\;\tilde{Q}_{X_1}=P_{X_1},\;\tilde{Q}_{X_2Y_1} = Q_{X_2Y_1}}\pp{I_{\tilde{Q}}(X_1;X_{2},Y_1)+E_1(\tilde{Q}_{X_1X_2Y_1},Q_{X_1X_2Y_1})},\label{E1hatQdef}\\
& &\hat{E}_2(Q_{X_1X_2Y_1},R_2) \triangleq \min_{\tilde{Q}:\;\tilde{Q}_{X_1}=P_{X_1},\;\tilde{Q}_{X_2Y_1} = Q_{X_2Y_1}}E_2(\tilde{Q}_{X_1X_2Y_1},Q_{X_1X_2Y_1}),\label{E2hatQdef}\\
& &E(Q_{X_1X_2Y_1},R_1,R_2) \triangleq \max\ppp{\pp{\hat{E}_1(Q_{X_1X_2Y_1},R_2)-R_1}_+,\hat{E}_2(Q_{X_1X_2Y_1},R_2)},\label{EQdef}\\
& &\tilde E_1(R_1,R_2) \triangleq \min_{\substack{Q_{Y_1|X_1X_2}:\\ Q_{X_1} = P_{X_1},Q_{X_2} =P_{X_2}}}\pp{D(Q_{Y_1|X_{1}X_{2}}||W_{Y_1|X_{1}X_{2}}\vert P_{X_{1}}\times P_{X_{2}})+E(Q_{X_1X_2Y_1},R_1,R_2)}.\label{Estar12}
\end{IEEEeqnarray}
\end{subequations}
\hrulefill
\vspace*{4pt}
\end{figure*}
We devote Appendix \ref{app:2} for a discussion on aspects of the computation of \eqref{Estar12}. We have the following result.
\begin{theorem}\label{th:1}
Let $R_1$ and $R_2$ be given, and let $E^*(R_1,R_2)$ be defined as in \eqref{ErrorUser1}. Consider the ensemble of fixed composition codes of types $P_{X_1}$ and $P_{X_2}$, for the first and second users, respectively. For a discrete memoryless two-user IFC, we have 
\begin{align}
E_1^*(R_1,R_2) \geq \tilde E_1(R_1,R_2),
\end{align}
for any $R_1,R_2\geq0$.
\end{theorem}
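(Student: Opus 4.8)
The plan is to evaluate $\bar P_{e,1}$ by the method of types, isolating a conditional error probability given the empirical statistics of the transmitted triple, and then to dispose of the strongly dependent error events by conditioning on the interfering codebook. By the symmetry of the ensemble I may assume $m_1=m_2=0$ were sent. Discarding the immaterial common factor $1/M_2$, the optimal decoder compares the metrics $Z_i\triangleq\sum_{j}P(\by_1\vert\bx_{1,i},\bx_{2,j})=\sum_{j}\exp\{nf(\hat Q_{i,j})\}$, where $\hat Q_{i,j}$ is the empirical distribution of $(\bx_{1,i},\bx_{2,j},\by_1)$, and an error occurs iff $Z_i\ge Z_0$ for some $i\ne0$. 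First I would condition on the joint type $Q=Q_{X_1X_2Y_1}$ of the transmitted triple $(\bx_{1,0},\bx_{2,0},\by_1)$; since the two transmitted codewords are uniform on their type classes and $\by_1$ is emitted by $W_{Y_1\vert X_1X_2}$, this type is realized with probability $\exe\exp\{-nD(Q_{Y_1\vert X_1X_2}\|W_{Y_1\vert X_1X_2}\vert P_{X_1}\times P_{X_2})\}$ whenever $Q_{X_1X_2}=P_{X_1}P_{X_2}$. This produces the outer divergence term and the outer minimization, reducing the theorem to showing that the conditional error exponent given $Q$ equals $E^*(Q,R_1,R_2)$.

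Each metric $Z_i$ is a partition function over the interfering codebook, and its analysis is the statistical-mechanical heart of the argument. For the correct message $Z_0$ always contains the ``true'' term $\exp\{nf(Q)\}$ contributed by $\bx_{2,0}$, while the remaining $\bx_{2,j}$ form a ``bulk'' whose exponential order is governed by the number of interferers at each conditional type, typically $\exe\exp\{n[R_2-I_{\hat Q}(X_2;X_1,Y_1)]\}$ for $I_{\hat Q}\le R_2$; carrying out this type enumeration shows the bulk grows like $\exp\{n\,t_0(Q_{X_1Y_1})\}$, so that $Z_0\exe\exp\{n\max[f(Q),t_0(Q_{X_1Y_1})]\}$. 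An incorrect metric $Z_i$ has a bulk with the $(X_1,Y_1)$-marginal type of $(\bx_{1,i},\by_1)$ and also a contribution of order $\exp\{nf(\tilde Q)\}$ from the true codeword $\bx_{2,0}$, where $\tilde Q$ is the type of $(\bx_{1,i},\bx_{2,0},\by_1)$; beyond its typical value the bulk can grow only through a large deviation in which some interferer attains a conditional type $\hat Q$ with $I_{\hat Q}(X_2;X_1,Y_1)>R_2$, at cost $[I_{\hat Q}(X_2;X_1,Y_1)-R_2]_+$. The quantity $E_1(\tilde Q,Q)$ is exactly the least such cost for which $Z_i$ reaches the decision threshold, and the feasible set $\calL(\tilde Q,Q)$ is the bookkeeping of the inequalities relating $f(\tilde Q)$, $f(\hat Q)$, $t_0$ and $f(Q)$ that must hold for this crossing to occur.

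To control the union over the exponentially many incorrect messages I would condition on the transmitted $X_1$-codeword, the entire interfering codebook $\{\bx_{2,j}\}$, and the output $\by_1$. Given these, the codewords $\{\bx_{1,i}\}_{i\ne0}$ are i.i.d.\ and independent of the conditioning, so their error events are conditionally i.i.d.\ and the truncated union bound is conditionally exponentially tight; averaging then gives $\bar P_{e\mid Q}\exe\bE[\min\{1,M_1\Pr(Z_1\ge Z_0\mid\{\bx_{2,j}\},\by_1)\}]$. The single-codeword probability splits according to the type $\tilde Q$ of $(\bx_{1,1},\bx_{2,0},\by_1)$, which a random $\bx_{1,1}$ realizes at cost $\exe\exp\{-nI_{\tilde Q}(X_1;X_2,Y_1)\}$, combined with the interference large-deviation cost $E_1(\tilde Q,Q)$; a Laplace evaluation of the averaged $\min\{1,\cdot\}$ over the interference yields the conditional exponent $\min_{\tilde Q:\,\tilde Q_{X_2Y_1}=Q_{X_2Y_1}}\big(E_1(\tilde Q,Q)+[I_{\tilde Q}(X_1;X_2,Y_1)-R_1]_+\big)$. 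Finally the elementary identity $\min_x\{g(x)+[h(x)-R]_+\}=\max\{[\min_x(g+h)-R]_+,\min_x g\}$, applied with $g=E_1$ and $h=I_{\tilde Q}(X_1;X_2,Y_1)$, rewrites this as $\max\{[\hat E_1-R_1]_+,\hat E_2\}=E^*(Q,R_1,R_2)$, and combining with the outer minimization over $Q_{Y_1\vert X_1X_2}$ completes the proof.

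The hard part will be the joint execution of the last two steps. Unlike the point-to-point setting, the decoding metric is a random partition function whose logarithm has a nondegenerate large-deviations rate, so the distance-enumerator method does not apply directly; pinning down the exact exponent of $\Pr(Z_1\ge Z_0\mid\{\bx_{2,j}\},\by_1)$ as a function of the interference configuration, and then proving that the average of the truncated union bound over that configuration is exponentially tight in \emph{both} directions, is precisely where the large-deviations machinery and the refined, dependency-aware union bound must be combined with care. Verifying that $\calL(\tilde Q,Q)$ captures exactly the threshold-crossing configurations, keeping the true-interference contribution $f(\tilde Q)$, the bulk contribution, and the threshold $\max[f(Q),t_0]$ mutually consistent, is the delicate bookkeeping on which the single-letter collapse ultimately rests.
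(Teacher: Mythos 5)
Your outer step (conditioning on the joint type of $(\bx_{1,0},\bx_{2,0},\by_1)$ to peel off the divergence term) and your partition-function analysis of the bulk (the quantities $t_0$, $E_1(\tilde{Q},Q)$ and the feasible set $\calL$) match the paper's proof, which implements them via the type enumerators $\calN_1,\calN_2$, the large-deviations estimate \eqref{larggedeivupperlower}, and the dominant-term argument of Lemma \ref{lem:domin}. The genuine gap is in your treatment of the union over incorrect messages. Conditioning on $\p{\by_1,\ppp{\bx_{2,j}}}$ indeed renders the events $\ppp{Z_i\geq Z_0}$ conditionally i.i.d., so the conditional truncated union bound gives $\bE\pp{\min\ppp{1,M_1\,p(\calC_2)}}$ with $p(\calC_2)\triangleq\Pr\p{Z_1\geq Z_0\mid\calC_2,\by_1}$, exactly as you write. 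But the step you then wave through---the ``Laplace evaluation'' of this expectation---is precisely the route the paper flags as requiring ``a very complicated (if not intractable) large deviations analysis'' and replaces by its key technical contribution, Lemma \ref{lem:Union3} (proved via de Caen's inequality). Because of the clipping $\min\ppp{1,\cdot}$, you need the full rate function of the random variable $p(\calC_2)$, not just its mean: for each $\tilde{Q}$ you must quantify what \emph{fraction} of the conditional type class errs once the interference performs its cheapest enabling deviation. Your evaluation implicitly takes this fraction to be $\exe 1$ at cost $E_1(\tilde{Q},Q)$, type by type, which decouples the interference cost from the codeword-hitting cost at a common $\tilde{Q}$ and yields $\min_{\tilde{Q}}\ppp{E_1(\tilde{Q},Q)+\pp{I_{\tilde{Q}}(X_1;X_2,Y_1)-R_1}_+}$. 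The paper's Lemma \ref{lem:Union3} instead produces, alongside $M_1\Pr\ppp{\text{joint}}$, the term $\Pr\ppp{\ppp{\bX_{2,j}}\in\calB_2}$ carrying \emph{no} factor $M_1$; it is this term that generates $\hat{E}_2$ as a minimization over $\tilde{Q}$ \emph{separate} from the one defining $\hat{E}_1$, reflecting that a single bad interference draw can simultaneously doom exponentially many $\bX_{1,i}$'s and should be ``charged'' only once.

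This loss is then papered over by your ``elementary identity,'' which is false. Pointwise one has $g+\pp{h-R}_+=\max\ppp{g+h-R,\;g}$, so $\min_x\ppp{g(x)+\pp{h(x)-R}_+}=\min_x\max\ppp{g+h-R,\,g}\;\geq\;\max\ppp{\pp{\min_x(g+h)-R}_+,\;\min_x g}$, with equality only when the two inner minimizations share a minimizer. A two-point counterexample: $(g,h)=(0,100)$ at one point, $(5,0)$ at the other, $R=10$; the left side equals $\min\ppp{90,5}=5$ while the right side equals $\max\ppp{\pp{5-10}_+,\,0}=0$. In the theorem, $\hat{E}_1$ and $\hat{E}_2$ are separate minimizations over $\tilde{Q}$ that need not be attained at the same type, so your expression is in general strictly \emph{larger} than $E^*(Q_{X_1X_2Y_1},R_1,R_2)$---an over-optimistic exponent. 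The discrepancy is not algebraic bookkeeping but exactly the dependency effect across incorrect messages that Lemma \ref{lem:Union3} was built to capture; without an analysis of the conditional erring fraction (equivalently, the large deviations of $p(\calC_2)$ jointly in the enabling event and its magnitude), your route cannot recover the term $\hat{E}_2$ and the proof does not close.
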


Several remarks on Theorem \ref{th:1} are in order.
\begin{itemize}
\item Due to symmetry, the error exponent for the second user, that is, $\tilde E_2(R_1,R_2)$ is simply obtained from Theorem \ref{th:1} by swapping the roles of $X_1$, $Y_1$, and $R_1$, with those of $X_2$, $Y_2$, and $R_2$, respectively.
\item An immediate byproduct of Theorem \ref{th:1} is finding the set of rates $(R_1,R_2)$ for which $\tilde E_1(R_1,R_2)>0$, namely, the rates for which the probability of error vanishes exponentially as $n\to\infty$. We show in Appendix \ref{app:3}, that this set is given by:
\begin{align}
&\calR_{\text{ordinary},1} = \ppp{R_1<I\p{X_1;Y_1}}\cup\nonumber\\
&\ppp{\ppp{R_1+R_2<I\p{X_1,X_2;Y_1}}\cap\ppp{R_1<I\p{X_1;Y_1\vert X_2}}}\label{ordinregion}
\end{align}
evaluated with $P_{X_1X_2Y_1} = P_{X_1}\times P_{X_2}\times W_{Y_1|X_1X_2}$. Note that this region can be obtained also by using standard typicality-based achievability arguments (see, e.g., \cite{Bandemer}). Fig. \ref{fig:1} demonstrates a qualitative description of this region. The interpretation is as follows: The corner point $\p{I\p{X_1;Y_1\vert X_2},I\p{X_2;Y_1}}$ is achieved by first decoding the interference (the second user), canceling it, and then decoding the first user. The sum-rate constraint can be achieved by joint decoding the two users (similarly to MAC), and thus, obviously, also by our optimal decoder. Finally, the region $R_1<I\p{X_1;Y_1}$ and $R_2\geq I\p{X_2;Y_1\vert X_1}$ means that we decode the first user while treating the interference as noise. Evidently, from the perspective of the first decoder, which is interested only in the message transmitted by the first sender, the second sender can use any rate, and thus there is no bound on $R_2$ whenever $R_1<I\p{X_1;Y_1}$. Now, it was shown in \cite{ChangEtkin} that the error exponent achievable for the first user under the ordinary random coding regime is zero outside the closure of $\calR_{\text{ordinary},1}$. Whence, this fact and the above conclusion, characterize the rate region where the attainable exponent with ordinary random coding is positive. Notice that $\calR_{\text{ordinary},1}$ is well-known to be contained in the HK region \cite{ChangEtkin,Bandemer}. 
\begin{figure}[!t]
\begin{minipage}[b]{1.0\linewidth}
  \centering
	\centerline{\includegraphics[width=8cm,height = 5cm]{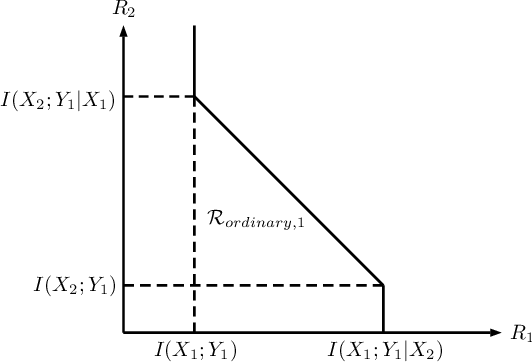}}
	\end{minipage}
\caption{Rate region $\calR_{\text{ordinary},1}$ for which $\tilde E_1(R_1,R_2)>0$.}
\label{fig:1}
\end{figure}
\item \emph{Existence of a single code:} our result holds true on the average, where the averaging is done over the random choice of codebooks. It can be shown (see, for example, \cite[p. 2924]{WengError}) that there exists deterministic sequence of fixed composition codebooks of increasing block length $n$ for which the same asymptotic error performance can be achieved for \emph{both} users simultaneously. 
\item \emph{About the proof:} it is instructive to discuss (in some more detail than earlier) one of the main difficulties in proving Theorem \ref{th:1}, which is customary to multiuser systems, such as the IFC. Without loss of generality, we assume throughout, that the transmitted codewords are $\bx_{1,0}$ and $\bx_{2,0}$. Accordingly, the average probability of error associated with the decoder \eqref{optDec} is given by \eqref{CompUnion}, shown at the top of the next page, 
\begin{figure*}[!t]
\normalsize
\setcounter{MYtempeqncnt}{\value{equation}}
\setcounter{equation}{10}
\begin{align}
\bar{P}^{(n)}_{e,1} &= \Pr\pp{\bigcup_{i=1}^{M_1-1}\ppp{\sum_{j=0}^{M_2-1}P\p{\bY_1\vert \bX_{1,i},\bX_{2,j}}\geq\sum_{j=0}^{M_2-1}P\p{\bY_1\vert \bX_{1,0},\bX_{2,j}}}}\nonumber\\
& = \bE\ppp{\left.\Pr\pp{\bigcup_{i=1}^{M_1-1}\ppp{\sum_{j=0}^{M_2-1}P\p{\bY_1\vert \bX_{1,i},\bX_{2,j}}\geq\sum_{j=0}^{M_2-1}P\p{\bY_1\vert \bX_{1,0},\bX_{2,j}}}\right\vert\calF_0}}\label{CompUnion}
\end{align}
\hrulefill
\vspace*{4pt}
\end{figure*}
where $\calF_0 \triangleq \p{\bX_{1,0},\bX_{2,0},\bY_1}$. In contrast to previous works, applying the type class enumerator\footnote{For a given $y^n\in\calY^n$, and a given joint probability distribution $Q_{XY}$ on $\calX\times\calY$, the \emph{type class enumerator}, $N(Q_{XY})$, is the number of codewords $\ppp{x^n_i}$ in $\calC_n$ whose conditional empirical joint distribution with $y^n$ is $Q_{XY}$, namely, $N(Q_{XY}) = \abs{x^n\in\calC_n:\;\hat{Q}_{x^ny^n} = Q_{XY}}$, where $\hat{Q}_{x^ny^n}$ is the empirical joint distribution of $x^n$ and $y^n$, and $\abs{\calA}$ designates the cardinality of a finite set $\calA$. Type class enumeration method refers to the process of converting a sum of exponentially many terms (usually likelihood functions) into polynomial number of type class enumerators, which are easier to analyze.} method \cite{MerhavDistance}, is not a simple task. Since we are interested in the optimal decoder, each event of the union in \eqref{CompUnion}, depends on the whole codebook of the second user. One may speculate that this problem can be tackled by conditioning on the codebook of the second user, and then \eqref{UnionShul}. However, the cost of this conditioning is a very complicated (if not intractable) large deviations analysis of some quantities. The consequence of this situation is that in order to analyze the probability of error, it is required to analyze the joint distribution of type class enumerators, and not just rely on their marginal distributions, as is usually done, e.g., \cite{MerhavEr1,MerhavEr2,MerhavEr3,MerhavEr4}.

\hspace{0.4cm} Another difficulty is handling the union in \eqref{CompUnion}. By the union bound and Shulman's inequality \cite[Lemma A.2]{ShluRate}, we know that for a sequence of pairwise independent events, $\ppp{\calA_i}_{i=1}^N$, the following holds
\begin{align}
\frac{1}{2}\min\ppp{1,\sum_{i=1}^N\Pr\ppp{\calA_i}}&\leq\Pr\ppp{\bigcup_{i=1}^N\calA_i}\nonumber\\
&\leq\min\ppp{1,\sum_{i=1}^N\Pr\ppp{\calA_i}},\label{UnionShul}
\end{align}
which is a useful result when assessing the exponential behavior of such probabilities. Equation \eqref{UnionShul} is one of the building blocks of tight exponential analysis of previously considered point-to-point systems (see, e.g., \cite{MerhavEr1,MerhavEr2,MerhavEr3,MerhavEr4}, and many references therein). However, in our case the various events are not pairwise independent, and therefore this result cannot be applied. To alleviate this problem, following the techniques of \cite{scarletNew}, we derive new upper bounds on the probability of a union of events, which takes into account such dependencies among the events. 

\item As was mentioned in the Introduction, in \cite{ScarlettCog}, lower bounds on the error exponents of both standard and cognitive multiple-access channels (MACs) were suggested. Although the motivation in \cite{ScarlettCog} is different, their results apply also for the IFC. Now, while in \cite{ScarlettCog} the standard truncated union bound was used, here our new upper bound on the probability of a union of events, provides some potential gain over \cite{ScarlettCog}. Specifically, the lower bound in \cite{ScarlettCog} is the same as \eqref{Estar12} but without the $\hat{E}_2(Q,R_2)$ term, i.e., it is given by $\min_Q\big\{D(Q||W)+\big[\hat{E}_1(Q,R_2)-R_1\big]_+\big\}$, and thus, in general, our result may be tighter. It should be stressed, however, that we have not identified specific examples where the new term, namely, $\hat{E}_2(Q,R_2)$, dominated the maximum in \eqref{EQdef}.

\item The lower bound in \cite{Etkin} is extremely complicated, and it is very difficult to compare it analytically to the lower bound in Theorem \ref{th:1}. Nonetheless, we can still claim (in general) that our lower bound is at least as good as the lower bound in \cite{Etkin}. Indeed, the first step in the analysis of the error exponent in both our paper and in \cite{Etkin} is applying the union bound (actually, here, we employ a tighter union bound). However, it will be seen that every other passage in our analysis is exponentially \emph{exact}, while in \cite{Etkin}, some steps are associated with inequalities that may cause gaps in the exponential scale, and thus in general, $\tilde{E}(R_1,R_2)\geq E_{\text{[9]}}(R_1,R_2)$, for any $(R_1,R_2)\in\mathbb{R}^2_+$, where $E_{\text{[9]}}(R_1,R_2)$ is the lower bound in \cite{Etkin}.

\item \emph{Comparison with \cite{Etkin}}: Similarly to \cite{Etkin}, we present results for the following channel: $Y_1 = X_1\cdot X_2\oplus Z$ and $Y_2 = X_2$, where $X_1,X_2,Y_1,Y_2\in\ppp{0,1}$, $Z\sim\text{Bern}(p)$, ``$\cdot$" is multiplication, and ``$\oplus$" is modulo-2 addition. In the numerical calculations, we fix $p = 0.01$. Fig. \ref{fig:OurvsEtkin} presents the lower bound on the error exponent under optimal decoding, derived in this paper, compared to the lower bound $E_{\text{[9]}}(R_1,R_2)$ of \cite{Etkin}, as a function of $R_1$, for different values of $P_{X_1}$, $P_{X_2}$, and $R_2$. It can be seen that our exponents are strictly better than those of \cite{Etkin}. 
\begin{figure}[!t]
\begin{minipage}[b]{1.0\linewidth}
  \centering
	\centerline{\includegraphics[width=10cm,height = 7.5cm]{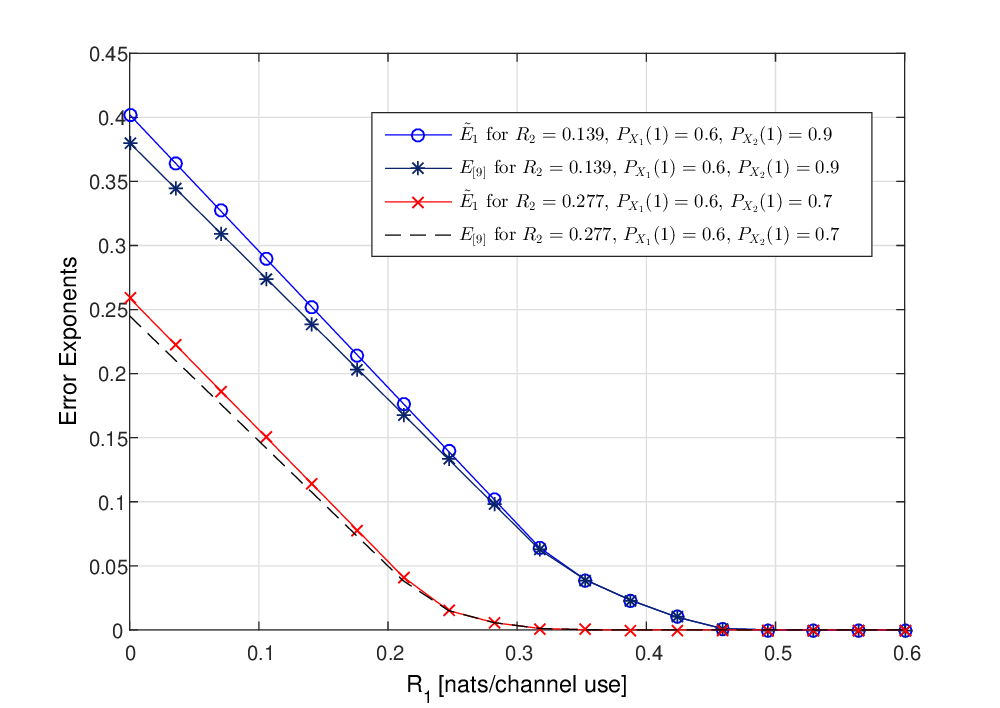}}
	\end{minipage}
\caption{Comparison between $\tilde E_1(R_1,R_2)$ and $E_{\text{[9]}}(R_1,R_2)$ of \cite{Etkin}, as a function of $R_1$ for two different values of $R_2$ and fixed choices of $P_{X_1}$ and $P_{X_2}$.}
\label{fig:OurvsEtkin}
\end{figure}
\end{itemize}

\subsection{The Han-Kobayashi Coding Scheme}\label{sub:HK}

Consider the channel model of Subsection \ref{sub:simp}. The best known inner bound on the capacity region is achieved by the HK coding scheme \cite{HanKobayashi}. The idea of this scheme is to split the message $m_1$ into ``private" and ``common" messages, $m_{11}$ and $m_{12}$ at rates $R_{11}$ and $R_{12}$, respectively, such that $R_1 = R_{11}+R_{12}$. Similarly, $m_2$ is split into $m_{21}$ and $m_{22}$ at rates $R_{21}$ and $R_{22}$, with $R_2 = R_{21}+R_{22}$. The intuition behind this splitting is based on the receiver behavior at low and high signal-to-noise-ratio (SNR). Specifically, it is well-known \cite{GamalKim} that: (1) when the SNR is low, treating the interference as noise is an optimal strategy, and (2) when the SNR is high, decoding and then canceling the interference is the optimal strategy. Accordingly, the above splitting captures the general intermediate situation where the first decoder, for example, is interested only in partial information from  the second user, in addition to its own intended message. 

Next, we describe explicitly the coding strategy of \cite{HanKobayashi}. Fix a distribution $P_{Z_{11}}P_{Z_{12}}P_{Z_{21}}P_{Z_{22}}P_{X_1|Z_{11}Z_{12}}P_{X_2|Z_{21}Z_{22}}$, where the latter two conditional distributions represent deterministic mappings. For each $k,k'\in\ppp{1,2}$, randomly and conditionally independently generate a sequence $\bz_{k,k'}(m_{k,k'})$ under the uniform distribution across the type class $T(P_{Z_{kk'}})$ for a given $P_{Z_{k,k'}}$. To communicate a message pair $(m_{11},m_{12})$, sender 1 transmits $\bx_1(\bz_{11},\bz_{12})$, and analogously for sender 2. All our results can be extended to the setting in which the codewords are generated conditionally on a time-sharing sequence $\bq$. However, this leads to more complicated notation. Thus, we focus primarily on the case without time-sharing.

Let us now describe the operation of each receiver. Receiver $k = 1,2$, recovers its intended message $m_k$ and the common message from the other sender (although it is not required to). This scheme is illustrated in Fig. \ref{fig:HK}. Note that this decoding operation is the one that was used in \cite{HanKobayashi}, but there, the sub-optimal non-unique simultaneous joint typical decoder \cite[Ch. II.7]{GamalKim} was used. Here, by contrast, we use sub-optimal ML decoding (the sub-optimality is due to the fact that our decoder recovers also the common message from the other sender). As will be explained in the sequel, analyzing the optimal ML decoder is a challenging task, and therefore we will focus on sub-optimal ML decoding.

We wish to find a lower bound on the error exponent, achieved by the HK encoding functions, in conjunction with the above described decoding functions. To this end, note that by combining the channel and the deterministic mappings as indicated by the dashed box in Fig. \ref{fig:HK}, the channel $(Z_{11},Z_{12},Z_{21},Z_{22})\mapsto(Y_1,Y_2)$ is just a four-sender, two-receiver, DMC interference channel, with virtual inputs. Note that this formulation induces the Markovian structure $(Z_{11},Z_{12},Z_{21},Z_{22})\markov (X_1,X_2)\markov (Y_1,Y_2)$, where the (virtual) input distributions, i.e., $P_{Z_{k,k'}}$ for $k,k'\in\ppp{1,2}$, can be optimized.
\begin{figure}[!t]
\begin{minipage}[b]{1.0\linewidth}
  \centering
	\centerline{\includegraphics[width=9cm,height = 3.6cm]{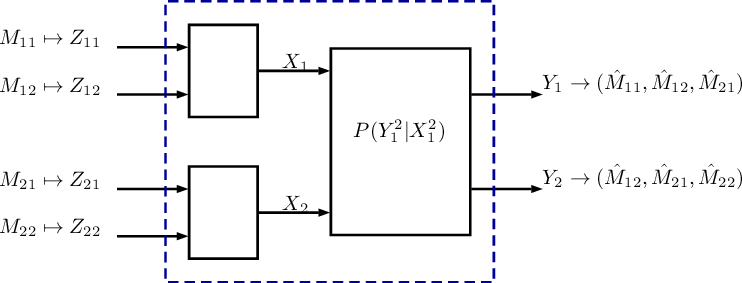}}
	\end{minipage}
\caption{Han-Kobayashi coding scheme.}
\label{fig:HK}
\end{figure}
We assume that the message quadruple $\p{m_{11},m_{12},m_{21},m_{22}}$ is uniformly distributed over $\calM_{11}\times\calM_{12}\times\calM_{21}\times\calM_{22}$. Following the above descriptions, our decoder for this problem is given by
\begin{align}
&(\hat{m}_{11},\hat{m}_{12},\hat{m}_{21})= \nonumber\\
&\arg\max_{(i,j,k)\in\calM_{11}\times\calM_{12}\times\calM_{21}}P\p{\by_1\vert \bz_{11,i},\bz_{12,j},\bz_{21,k}}\\
&= \arg\max_{(i,j,k)\in\calM_{11}\times\calM_{12}\times\calM_{21}}\nonumber\\
&\hspace{1.5cm}\frac{1}{M_{22}}\sum_{l=0}^{M_{22}-1}P\p{\by_1\vert \bz_{11,i},\bz_{12,j},\bz_{21,k},\bz_{22,l}}.\label{optDecHK}
\end{align}
Accordingly, the probability of error for the code $\calC_n$ and for the first user, is defined as 
\begin{align}
P_{e,1}\p{\calC_n}\triangleq\Pr\ppp{(\hat{m}_{11},\hat{m}_{12})\neq (m_{11},m_{12})},
\end{align}
and similarly for the second user. Our goal is to assess the exponential rate of $\bar{P}_{e,1}^{(n)} \triangleq \bE\ppp{P_{e,1}\p{\calC_n}}$, where the average is over the code ensemble, namely,
\begin{align}
E_{\text{HK}}^*(R_1,R_2) \triangleq \liminf_{n\to\infty}-\frac{1}{n}\log \bar{P}^{(n)}_{e,1},\label{ErrorHK}
\end{align}
and similarly for the second user. In order to facilitate the presentation of the following result, we move the technical definitions to Appendix~\ref{app:0}. Our second main result is the following. 
\begin{theorem}\label{th:2}
Let $E_{\text{HK}}^*(R_1,R_2)$ be defined as in \eqref{ErrorHK}. Consider the HK encoding scheme described above. For a discrete memoryless two-user IFC, we have: 
\begin{align}
E_{\text{HK}}^*(R_1,R_2) \geq  \max_{\substack{(R_{11},R_{12},R_{21},R_{22}):\;\\R_{11}+R_{12}=R_1\\ R_{21}+R_{22}=R_2}}\tilde E_{\text{HK}}(R_{11},R_{12},R_{21},R_{22}),
\end{align}
for any $R_1,R_2\geq0$, where $\tilde E_{\text{HK}}(R_{11},R_{12},R_{21},R_{22})$ is given in \eqref{appendixA}.
\end{theorem}

Several remarks on Theorem \ref{th:2} are in order.
\begin{itemize}
\item As before, an immediate byproduct of Theorem \ref{th:2} is finding the set of rates $(R_{11},R_{12},R_{21},R_{22})$ for which $\tilde E_{\text{HK}}(R_{11},R_{12},R_{21},R_{22})>0$, namely, for which the probability of error vanishes exponentially as $n\to\infty$. It can be shown that this set is given by the HK region, that is,
\begin{subequations}
\label{Region2}
\begin{IEEEeqnarray}{rCl}
R_{11}&\leq& I(Z_{1};Y_1\vert Z_2,Z_3),\\
R_{12}&\leq& I(Z_{2};Y_1\vert Z_1,Z_3),\\
R_{21}&\leq& I(Z_{3};Y_1\vert Z_1,Z_2),\\
R_{11}+R_{12}&\leq& I(Z_{1},Z_2;Y_1\vert Z_3),\\
R_{11}+R_{21}&\leq& I(Z_{1},Z_3;Y_1\vert Z_2),\\
R_{12}+R_{21}&\leq& I(Z_{2},Z_3;Y_1\vert Z_1),\\
R_{11}+R_{12}+R_{21}&\leq& I(Z_{1},Z_2,Z_3;Y_1),
\end{IEEEeqnarray}
\end{subequations}
evaluated with 
$$P_{Z_1^4Y_1} = P_{Z_1} P_{Z_2} P_{Z_3} P_{Z_4} P_{X_1|Z_1Z_2}P_{X_2|Z_3Z_4}W_{Y_1|X_1X_2}$$
and similarly for the second user, where $P_{X_1|Z_1Z_2}$ and $P_{X_2|Z_3Z_4}$ represent deterministic mappings. As was mentioned earlier, it is possible to introduce a time-sharing sequence $\bq$, and accordingly, \eqref{Region2} remains almost the same, but with a time-sharing RV $Q$ (with alphabet size bounded by eight \cite{HanKobayashi}), appearing at the conditioning of each the above mutual information terms. 
Finally, we mention that in \cite{Bandemer} it was shown that by using the optimal ML decoder (given in \eqref{optdecHK1}) instead of the non-unique simultaneous joint typical decoder \cite{HanKobayashi}, we cannot improve the achievable region. This observation do not for the error exponent. 
\item It can be shown\footnote{By definition, the ordinary ensemble is a simple instance of the HK ensemble, and thus the latter is indeed better upon optimization of the auxiliary RVs $\ppp{Z_{ij}}$. To see that the ordinary ensemble is a special case of the HK ensemble, we take $Z_{11}=X_1$, $Z_{12}=Z_{21}=\emptyset$, and $Z_{22}=X_2$.} that the error exponent in Theorem~\ref{th:2} is no worse than the error exponent in Theorem~\ref{th:1}, namely, $\tilde{E}_{\text{HK}}(R_{11},R_{12},R_{21},R_{22})\geq\tilde E_1(R_1,R_2)$ for any $(R_1,R_2)$ such that $R_1 = R_{11}+R_{12}$ and $R_2 = R_{21}+R_{22}$. Moreover, it is well-known that upon optimizing the auxiliary RVs, $\ppp{Z_{ij}}$, the HK region in \eqref{Region2} is \emph{strictly} better than $\calR_{\text{ordinary},1}$. Therefore, this necessarily implies that for a certain region of high rates, the HK error exponent in Theorem~\ref{th:2} will be positive while the standard random coding error exponent in Theorem~\ref{th:1} will be zero. On the other extreme, it is easy to show that for $(R_1,R_2)=(0,0)$ the error exponent in Theorem~\ref{th:2} equals to the error exponent in Theorem~\ref{th:1}, so for small rates there is no improvement in the error exponents.
\item Contrary to the ordinary ensemble, described in Subsection \ref{sub:simp}, the HK ensemble depends on some auxiliary RVs which should be optimized. For a give pair of rates $(R_1,R_2)$, our error exponent formula provides a criterion for the choice of the optimal auxiliary RVs: maximize the lower bound on the error exponent in Theorem~\ref{th:2}, w.r.t. the auxiliaries, subject to some relevant constraints. As a matter of fact, for a \emph{given} pair of rates $(R_1,R_2)$, it is very likely that the optimal choice of these auxiliaries will be different from the optimal choice for the same pair in the achievable region. Indeed, even in the single-user case, the capacity achieving distribution is usually different from the optimal distribution in the error exponent sense.
\item Using the same techniques and tools derived in this paper, we can consider other random coding ensembles. For example, we can analyze the error exponents resulting from the \emph{hierarchical code ensemble}. Specifically, in this ensemble, the message $m_1$ is split into common and private messages $m_{11}$, $m_{12}$ at rates $R_{11}$ and $R_{12}$, respectively, such that $R_1 = R_{11}+R_{12}$. Similarly $m_2$ is split into $m_{21}$, $m_{22}$ at rates $R_{21}$ and $R_{22}$, respectively, such that $R_2 = R_{21}+R_{22}$. Then, we first randomly draw a rate $R_{11}$ codebook of block length $n$ according to a given distribution. Then, for each such codeword, we randomly and conditionally independently generate a rate $R_{12}$ codebook of block length $n$. In other words, the code has a tree structure with two levels, where the first serves for ``cloud centers", and the second for the ``satellites". We do the same for the second user. Under this ensemble, we can analyze the optimal decoder. Note, however, that this ensemble is different from the product ensemble considered in Theorem \ref{th:2}. Indeed, while for the former for each first stage codeword (cloud center) we independently draw a new codebook (satellites), for the latter, for each cloud center we have the same satellite codebook. Loosely speaking, this means that the product ensemble is ``less random". From the point of view of achievable region, however, the hierarchical ensemble is equivalent to the product ensemble used in HK scheme \cite[Ch. II.7]{GamalKim}. Nonetheless, the error exponents associated with these ensembles could be different.
\item In Theorem \ref{th:2} we assumed the sub-optimal decoder given in \eqref{optDecHK}. Indeed, the optimal decoder for our problem is given by:
\begin{align}
&(\hat{m}_{11},\hat{m}_{12}) = \arg\max_{i,j}P\p{\by_1\vert \bz_{11,i},\bz_{12,j}}\\
&\hspace{1.64cm}= \arg\max_{i,j}\nonumber\\
&\frac{1}{M_{21}M_{22}}\sum_{k=0}^{M_{21}-1}\sum_{l=0}^{M_{22}-1}P\p{\by_1\vert \bz_{11,i},\bz_{12,j},\bz_{21,k},\bz_{22,l}}.\label{optdecHK1}
\end{align}
Unfortunately, it turns out that analyzing the HK scheme (in conjunction with \eqref{optdecHK1}) is much more difficult, and requires some more delicate tools from large deviations theory. Specifically, the main difficulty in the derivations, is to analyze the large deviations behavior of a two-dimensional sum (due to the double summation in \eqref{optdecHK1}) involving binomial RVs which are strongly dependent (contrary to the standard one-dimensional version, see, e.g., \cite[p. 6027-6028]{MerhavEr1}). Nonetheless, we note that for the hierarchical code ensemble described above, the optimal decoder can be analyzed. Indeed, for this ensemble, it is clear that the optimal decoder is given by
\begin{align}
&(\hat{m}_{11},\hat{m}_{12}) = \arg\max_{i,j}P\p{\by_1\vert \bx_1(i,j)}\\
&= \arg\max_{i,j}\frac{1}{M_{21}M_{22}}\sum_{k=0}^{M_{21}-1}\sum_{l=0}^{M_{22}-1}P\p{\by_1\vert \bx_1(i,j),\bx_2(k,l)}
\end{align}
where $\bx_1(i,j) \triangleq f_1(\bx_1'(i),\bx_1''(i,j))$ and $\bx_2(i,j) \triangleq f_2(\bx_2'(i),\bx_2''(i,j))$ due to the hierarchical structure. Now, while here too, we will deal with two-dimensional summation, the summands will be independent, given the cloud centers codebook, and the proof can be carried out smoothly.
\end{itemize}


\section{Proofs of Main Results}\label{sec:proof1}
\subsection{Proof of Theorem \ref{th:1}}\label{subsec:proof1}
Without loss of generality, we assume throughout, that the transmitted codewords are $\bx_{1,0}$ and $\bx_{2,0}$, and due to the fact that we analyze the first decoder, for convenience, we use $\by$ instead of $\by_1$. Accordingly, the average probability of error associated with the optimal decoder \eqref{optDec}, is given by \eqref{innerPro}, shown at the top of the next page, 
\begin{figure*}[!t]
\normalsize
\setcounter{MYtempeqncnt}{\value{equation}}
\setcounter{equation}{22}
\begin{align}
\bar P^{(n)}_{e,1} &= \Pr\pp{\bigcup_{i=1}^{M_1-1}\ppp{\sum_{j=0}^{M_2-1}P\p{\bY\vert \bX_{1,i},\bX_{2,j}}\geq\sum_{j=0}^{M_2-1}P\p{\bY\vert \bX_{1,0},\bX_{2,j}}}}\label{OrignalProb}\\
& = \bE\ppp{\left.\Pr\pp{\bigcup_{i=1}^{M_1-1}\ppp{\sum_{j=0}^{M_2-1}P\p{\bY\vert \bX_{1,i},\bX_{2,j}}\geq\sum_{j=0}^{M_2-1}P\p{\bY\vert \bX_{1,0},\bX_{2,j}}}\right\vert\calF_0}}\label{innerPro}
\end{align}
\hrulefill
\vspace*{4pt}
\end{figure*}
where $\calF_0 \triangleq \p{\bX_{1,0},\bX_{2,0},\bY}$. 
In the following, we propose new upper bound on the probability of a union of events, which are suitable for some structured dependency between the events, as above. 

In order to give some motivation for this new bound, we first rewrite \eqref{OrignalProb} in another (equivalent) form. Specifically, we express \eqref{innerPro} in terms of the joint types of $\p{\bX_{1,0},\bX_{2,0},\bY}$ and $\ppp{\p{\bY,\bX_{1,i},\bX_{2,j}}}_{i,j}$. First, for a given joint distribution $Q_{X_1X_2Y}$ of $(\bx_1,\bx_2,\by)$, let
\begin{align}
f\p{Q_{X_1X_2Y}} &\triangleq \frac{1}{n}\log P\p{\by\vert\bx_1,\bx_2}\\
& = \bE_Q \pp{\log W_{Y\vert X_1X_2}(Y\vert X_1X_2)}.
\end{align}
For a given joint type $Q_{X_{1,0}X_{2,0}Y}$ of the random vectors $\p{\bX_{1,0},\bX_{2,0},\bY}$, define the set $T_I\p{Q_{X_{1,0}X_{2,0}Y}}$, given in \eqref{IFCTypical}, shown at the top of the next page.
\begin{figure*}[!t]
\normalsize
\setcounter{MYtempeqncnt}{\value{equation}}
\setcounter{equation}{26}
\begin{align}
& T_I\p{Q_{X_{1,0}X_{2,0}Y}}\defi\left\{\tilde{Q}^0_{X_1X_{2,0}Y}\in\calS_0,\p{\ppp{\tilde{Q}^k_{X_1X_2Y}}_{k=1}^{M_2-1},\ppp{\hat{Q}^k_{X_{1,0}X_2Y}}_{k=1}^{M_2-1}}\in\calS_1:\;\right.\nonumber\\
&\left.\ \ \ \ \ \ \ \ \ \ \ \ \ \ \ \ \ \ \ \ \ \ \ e^{nf(\tilde{Q}_{X_1X_{2,0}Y}^0)}+\sum_{k=1}^{M_2-1}\pp{e^{nf(\tilde{Q}_{X_1X_2Y}^k)}-e^{nf(\hat{Q}_{X_1X_2Y}^k)}}\geq e^{nf(Q_{X_{1,0}X_{2,0}Y})}\right\},\label{IFCTypical}\\
&\calS_0(Q_{X_{1,0}X_{2,0}Y}) \triangleq \ppp{\tilde{Q}^0_{X_1X_{2,0}Y}:\;\tilde{Q}^0_{X_1} = P_{X_1},\tilde{Q}^0_{X_2} = P_{X_2},\tilde{Q}^0_{X_{2,0}Y} = Q_{X_{2,0}Y}},\label{margcons1}\\
\calS_1(Q_{X_{1,0}X_{2,0}Y})&\triangleq \left\{\ppp{\tilde{Q}^k_{X_1X_2Y}}_{k=1}^{M_2-1},\ppp{\hat{Q}^k_{X_{1,0}X_2Y}}_{k=1}^{M_2-1}:\;\tilde{Q}^k_{X_1}= P_{X_1},\tilde{Q}^k_{X_2}= P_{X_2},\tilde{Q}^k_{Y}= Q_{Y},\right.\nonumber\\
&\ \ \ \ \ \ \ \ \ \ \ \hat{Q}^k_{X_{1,0}}= P_{X_1},\hat{Q}^k_{X_2}= P_{X_2},\hat{Q}^k_{X_{1,0}Y}= Q_{X_{1,0}Y},\;\forall 1\leq k\leq M_2-1\nonumber\\
&\left.\ \ \ \ \ \ \ \ \ \ \ \ \ \ \ \ \ \ \ \ \ \ \ \ \ \ \ \ \ \ \ \ \ \ \ \ \ \ \ \ \ \ \ \ \ \ \vphantom{\ppp{\tilde{Q}^k_{X_1X_2Y}}_{k=1}^{M_2-1},\ppp{\hat{Q}^k_{X_1X_2Y}}_{k=1}^{M_2-1}:}\tilde{Q}_{X_2Y}^k=\hat{Q}_{X_2Y}^k,\tilde{Q}_{X_1Y}^k=\tilde{Q}_{X_1Y}^m, \forall k,m\right\}.\label{margcons2}
\end{align}
\hrulefill
\vspace*{4pt}
\end{figure*}
The set $ T_I(Q_{X_{1,0}X_{2,0}Y})$ is the set of all possible types of $\p{\bX_{1,i},\calC_2}$, where $\calC_2$ denotes the codebook of the second user, which lead to a decoding error when $\p{\bX_{1,0},\bX_{2,0},\bY}\in T(Q_{X_{1,0}X_{2,0}Y})$ is transmitted. The various marginal constraints in \eqref{margcons1} and \eqref{margcons2} arise from the fact that we are assuming constant-composition random coding and, of course, fixed marginals due to the given fixed joint distribution $Q_{X_{1,0}X_{2,0}Y}$. Finally, the constraint
\begin{align}
&e^{nf(\tilde{Q}_{X_1X_{2,0}Y}^0)}+\sum_{k=1}^{M_2-1}\pp{e^{nf(\tilde{Q}_{X_1X_2Y}^k)}-e^{nf(\hat{Q}_{X_1X_2Y}^k)}}\nonumber\\
&\hspace{2cm}\geq e^{nf(Q_{X_{1,0}X_{2,0}Y})}\label{ifand}
\end{align}
in \eqref{IFCTypical}, represents a decoding error event, that is, it holds if and only if 
\begin{align}
\sum_{j=0}^{M_2-1}P\p{\by\vert \bx_{1,i},\bx_{2,j}}\geq\sum_{j=0}^{M_2-1}P\p{\by\vert \bx_{1,0},\bx_{2,j}},\label{onlyif}
\end{align}
or, equivalently,
\begin{align}
&P\p{\by\vert \bx_{1,i},\bx_{2,0}}+\sum_{j=1}^{M_2-1}\pp{P\p{\by\vert \bx_{1,i},\bx_{2,j}}-P\p{\by\vert \bx_{1,0},\bx_{2,j}}}\nonumber\\
&\hspace{2cm}\geq P\p{\by\vert \bx_{1,0},\bx_{2,0}},
\end{align}
for $\p{\bx_{1,0},\bx_{2,0},\by}\in T(Q_{X_{1,0}X_{2,0}Y})$, $\p{\bx_{1,i},\bx_{2,0},\by}\in T(\tilde{Q}^0_{X_1X_{2,0}Y})$, $\p{\bx_{1,i},\bx_{2,j},\by}\in T(\tilde{Q}^j_{X_1X_2Y})$, and $\p{\bx_{1,0},\bx_{2,j},\by}\in T(\hat{Q}^j_{X_{1,0}X_2Y})$, for $j=1,2,\ldots,M_2-1$. Now, with these definitions, fixing $Q_{X_{1,0}X_{2,0}Y}$, and letting $\p{\bx_{1,0},\bx_{2,0},\by}$ be an arbitrary triplet of sequences such that $\p{\bx_{1,0},\bx_{2,0},\by}\in T(Q_{X_{1,0}X_{2,0}Y})$, it follows, by definition, that the error event
\begin{align}
\bigcup_{i=1}^{M_1-1}\ppp{\sum_{j=0}^{M_2-1}P\p{\bY\vert \bX_{1,i},\bX_{2,j}}\geq\sum_{j=0}^{M_2-1}P\p{\bY\vert \bX_{1,0},\bX_{2,j}}}\label{EquivErrorEvent00}
\end{align}
can be rewritten, in terms of types, as follows
\begin{align}
&\bigcup_{i=1}^{M_1-1}\bigcup_{\ppp{\tilde{Q}_{X_1X_2Y}^j,\hat{Q}_{X_1X_2Y}^j}_j\in T_I(Q_{X_{1,0}X_{2,0}Y})}\nonumber\\
&\left\{ \begin{array}{ccc}
\p{\bX_{1,i},\bx_{2,0},\by}\in T(\tilde{Q}^0_{X_1X_{2,0}Y}), \\
\ppp{\p{\bX_{1,i},\bX_{2,j},\by}\in T(\tilde{Q}^j_{X_1X_2Y})}_{j=1}^{M_2-1}, \\
\ppp{\p{\bx_{1,0},\bX_{2,j},\by}\in T(\hat{Q}^j_{X_{1,0}X_2Y})}_{j=1}^{M_2-1} \end{array} \right\}.\label{EquivErrorEvent}
\end{align}
We wish to analyze the probability of the event in \eqref{EquivErrorEvent}, conditioned on $\calF_0$. Note that the inner union in \eqref{EquivErrorEvent} is over vectors of types (an exponential number of them). Finally, for the sake of convenience, we simplify the notations of \eqref{EquivErrorEvent}, and write it equivalently as
\begin{align}
\bigcup_{i=1}^{M_1-1}\bigcup_{\blt}\left\{ \begin{array}{ccc}
\bX_{1,i}\in \calA_{\blt,0}, \\
\p{\bX_{1,i},\bX_{2,j}} \in \calA_{\blt,j},\ \text{for }j=1,\ldots,M_2-1, \\
\bX_{2,j}\in \tilde{\calA}_{\blt,j},\ \ \ \text{for }j=1,\ldots,M_2-1 \end{array} \right\}\label{EquivErrorEvent1}
\end{align}
where, again, the index ``$\blb$" in the inner union runs over the combinations of types (namely, $\blb = \{\tilde{Q}_{X_1X_2Y}^j,\hat{Q}_{X_1X_2Y}^j\}_j$) that belong to $T_I(Q_{X_{1,0}X_{2,0}Y})$, and the various sets $\{\calA_{\blt,j},\tilde{\calA}_{\blt,j}\}_{\blt,j}$ correspond to the typical sets in \eqref{EquivErrorEvent} (recall that $(\bx_{1,0},\bx_{2,0},\by)$ are given at this stage). Next, following the ideas of \cite{scarletNew}, we provide a new upper bound on a generic probability which has the form of \eqref{EquivErrorEvent1}. The proof of this lemma is relegated to Appendix \ref{app:1}. 

\begin{lemma}\label{lem:Union3}
Let $\ppp{V_1\p{i}}_{i=1}^{L_1},V_2,V_3,\ldots,V_K$ be independent sequences of independently and identically distributed (i.i.d.) RVs on the alphabets $\calV_1\times\calV_2\times\ldots\times\calV_K$, respectively, with $V_1\p{i}\sim P_{V_1},V_2\sim P_{V_2},\ldots,V_K\sim P_{V_K}$. Fix a sequence of sets $\ppp{\calA_{i,1}}_{i=1}^N,\ppp{\calA_{i,2}}_{i=1}^N,\ldots,\ppp{\calA_{i,K-1}}_{i=1}^N$, where $\calA_{i,j}\subseteq\calV_1\times\calV_{j+1}$, for $1\leq j\leq K-1$ and for all $1\leq i\leq N$. Also, fix a set $\ppp{\calA_{i,0}}_{i=1}^N$ where $\calA_{i,0}\subseteq\calV_1$ for all $1\leq i\leq N$, and another sequence of sets $\ppp{\calG_{i,2}}_{i=1}^N,\ppp{\calG_{i,3}}_{i=1}^N,\ldots,\ppp{\calG_{i,K}}_{i=1}^N$, where $\calG_{i,j}\subseteq\calV_{j}$, for $2\leq j\leq K$ and for all $1\leq i\leq N$. Define
\begin{align}
\calB_{m,1}\defi&\left\{v_1:\;v_1\in\calA_{m,0},\;\bigcap_{j=1}^{K-1}\p{v_1,v_{j+1}}\in\calA_{m,j},\right.\nonumber\\
&\left.\ \ \ \ \bigcap_{j=2}^{K}v_j\in\calG_{m,j}\ \ \text{for some }\ppp{v_j}_{j=2}^K\right\},\label{calB1df0}
\end{align}
and
\begin{align}
\calB_{m,2}&\defi\left\{\ppp{v_j}_{j=2}^K:\;v_1\in\calA_{m,0},\;\bigcap_{j=1}^{K-1}\p{v_1,v_{j+1}}\in\calA_{m,j},\right.\nonumber\\
&\left.\ \ \ \ \bigcap_{j=2}^{K}v_j\in\calG_{m,j}\ \ \text{for some }v_1\right\},\label{calB2df0}
\end{align}
for $m=1,2,\ldots,N$. Then, a general upper bound is given in \eqref{EquivErrorEvent01}, shown at the top of the next page,
\begin{figure*}[!t]
\normalsize
\setcounter{MYtempeqncnt}{\value{equation}}
\setcounter{equation}{37}
\begin{align}
&\Pr\ppp{\bigcup_{i}\ppp{\bigcup_{m=1}^N\ppp{V_1(i)\in\calA_{m,0},\;\bigcap_{k=1}^{K-1}\p{V_1(i),V_{k+1}}\in\calA_{m,k},\;\bigcap_{k=2}^{K}V_k\in\calG_{m,k}}}}\nonumber\\
&\ \ \ \ \ \ \ \ \ \ \ \ \ \ \ \ \ \leq\min\left\{1,L_1\Pr\ppp{\bigcup_{m=1}^N\ppp{V_1\in\calB_{m,1}}},\Pr\ppp{\bigcup_{m=1}^N\ppp{\ppp{V_j}_{k=2}^K\in\calB_{m,2}}},\right.\nonumber\\
&\left. \ \ \ \ \ \ \ \ \ \ \ \ \ \ \ \ \ \ \ \ L_1\Pr\ppp{\bigcup_{m=1}^N\ppp{V_1\in\calA_{m,0},\;\bigcap_{k=1}^{K-1}\p{V_1,V_{k+1}}\in\calA_{m,k},\;\bigcap_{k=2}^{K}V_k\in\calG_{m,k}}}\right\}\label{EquivErrorEvent01}
\end{align}
\hrulefill
\vspace*{4pt}
\end{figure*}
with $\p{V_1,\ldots,V_K}\sim P_{V_1}\cdots\times P_{V_K}$.
\end{lemma}

Next, we apply Lemma \ref{lem:Union3} to the problem at hand. To this end, we choose the following parameters in accordance to the notations used in Lemma \ref{lem:Union3}. Recall that we deal with
\begin{align}
\bigcup_{i=1}^{M_1-1}\bigcup_{\blt}\left\{ \begin{array}{ccc}
\bX_{1,i}\in \calA_{\blt,0}, \\
\p{\bX_{1,i},\bX_{2,j}} \in \calA_{\blt,j},\ \text{for }j=1,\ldots,M_2-1, \\
\bX_{2,j}\in \tilde{\calA}_{\blt,j},\ \ \ \text{for }j=1,\ldots,M_2-1 \end{array} \right\}\label{EquivErrorEvent2}
\end{align}
and in Lemma \ref{lem:Union3} we have considered:
\begin{align}
\bigcup_{i=1}^{L_1}\bigcup_{m=1}^N\left\{ \begin{array}{ccc}
V_1(i)\in\calA_{m,0},\\
\p{V_1(i),V_{j+1}}\in\calA_{m,j},\ \text{for }j=1,\ldots,K-1 \\
V_2\in\calG_{m,2},\ldots,V_K\in\calG_{m,K} \end{array} \right\}.\label{EquivErrorEvent2comp}
\end{align}
Thus, comparing \eqref{EquivErrorEvent2} and \eqref{EquivErrorEvent2comp}, we readily notice the following parallels: 
\begin{itemize} 
\item The numbers of events in the unions over $i$ is $L_1 = M_1-1$. Also, we have $K = M_2$ independent random vectors $V_1\p{i} = \bX_{1,i}$ and $V_{l+1} = \bX_{2,l}$, for $1\leq i\leq M_1-1$ and $1\leq l\leq M_2-1$.
\item The union over $m$ corresponds to a union over $\blb$, which as was mentioned before, is actually a union over a vector of types. Accordingly, we have:
\begin{enumerate}
\item $\calA_{m,i} = \calA_{\blt,i},\ \text{for }0\leq i\leq M_2-1$,
\item $\calG_{m,i} = \tilde{\calA}_{\blt,i-1},\ \text{for }2\leq i\leq M_2$.
\end{enumerate}
These sets correspond to each of the typical sets $ T(\tilde{Q}^0_{X_1X_{2,0}Y})$, $\{ T(\tilde{Q}^k_{X_1X_2Y})\}_{k=1}^{M_2-1}$, and $\{T(\hat{Q}^k_{X_{1,0}X_2Y})\}_{k=1}^{M_2-1}$.
\item According to \eqref{calB1df0} and \eqref{calB2df0} we need to define $\calB_{m,1} = \calB_1(\tilde{Q}^0_{X_1X_{2,0}Y},\{\tilde{Q}^j_{X_1X_2Y},\hat{Q}^j_{X_1X_2Y}\}_j)$ and $\calB_{m,2} = \calB_2(\tilde{Q}^0_{X_1X_{2,0}Y},\{\tilde{Q}^j_{X_1X_2Y},\hat{Q}^j_{X_1X_2Y}\}_j)$.
Using \eqref{calB1df0} and \eqref{calB2df0}, we get \eqref{B1setLem} and \eqref{bl2set}, given at the top of the next page.
\begin{figure*}[!t]
\normalsize
\setcounter{MYtempeqncnt}{\value{equation}}
\setcounter{equation}{40}
\begin{align}
\calB_{m,1}=
\left\{ \begin{array}{ccc}
&\p{\bx_1,\bx_{2,0},\by}\in T(\tilde{Q}^0_{X_1X_{2,0}Y}), \\
\bx_1:&\ppp{\p{\bx_1,\bx_{2,j},\by}\in T(\tilde{Q}^j_{X_1X_2Y})}_{j=1}^{M_2-1}, \\
&\ppp{\p{\bx_{1,0},\bx_{2,j},\by}\in T(\hat{Q}^j_{X_{1,0}X_2Y})}_{j=1}^{M_2-1}\ \text{for some }\ppp{\bx_{2,j}}_j \end{array} \right\}\label{B1setLem}\\
\calB_{m,2}=
\left\{ \begin{array}{ccc}
&\p{\bx_1,\bx_{2,0},\by}\in T(\tilde{Q}^0_{X_1X_{2,0}Y}), \\
\ppp{\bx_{2,j}}_{j\geq1}:&\ppp{\p{\bx_1,\bx_{2,j},\by}\in T(\tilde{Q}^j_{X_1X_2Y})}_{j=1}^{M_2-1}, \\
&\ppp{\p{\bx_{1,0},\bx_{2,j},\by}\in T(\hat{Q}^j_{X_{1,0}X_2Y})}_{j=1}^{M_2-1}\ \text{for some }\bx_1 \end{array} \right\}\label{bl2set}
\end{align}
\hrulefill
\vspace*{4pt}
\end{figure*}
\end{itemize} 

Thus, invoking Lemma \ref{lem:Union3}, we have \eqref{Last3Prob}, 
\begin{figure*}[!t]
\normalsize
\setcounter{MYtempeqncnt}{\value{equation}}
\setcounter{equation}{42}
\begin{align}
\tilde{P}_{e,1}^{(n)} &\triangleq \left.\Pr\pp{\bigcup_{i=1}^{M_1-1}\ppp{\sum_{j=0}^{M_2-1}P\p{\bY\vert \bX_{1,i},\bX_{2,j}}\geq\sum_{j=0}^{M_2-1}P\p{\bY\vert \bX_{1,0},\bX_{2,j}}}\right\vert\calF_0}\\
&\leq \min\left\{1, M_1\cdot\Pr\pp{\bigcup_{\ppp{\tilde{Q}_{X_1X_2Y}^j,\hat{Q}_{X_1X_2Y}^j}_j\in T_I(Q_{X_{1,0}X_{2,0}Y})}\bX_{1,1}\in\calB_1\p{\tilde{Q}^0_{X_1X_{2,0}Y},(\tilde{Q}^j_{X_1X_2Y},\hat{Q}^j_{X_1X_2Y})_j)}},\right. \nonumber\\
&\left.\ \ \ \ \ \Pr\pp{\bigcup_{\ppp{\tilde{Q}_{X_1X_2Y}^j,\hat{Q}_{X_1X_2Y}^j}_j\in T_I(Q_{X_{1,0}X_{2,0}Y})}\ppp{\bX_{2,j}}_{j\geq1}\in\calB_2\p{\tilde{Q}^0_{X_1X_{2,0}Y},(\tilde{Q}^j_{X_1X_2Y},\hat{Q}^j_{X_1X_2Y})_j)}},\nonumber\right.\\
&\left.\hspace{1cm}M_1\cdot\Pr\pp{\bigcup_{\ppp{\tilde{Q}_{X_1X_2Y}^j,\hat{Q}_{X_1X_2Y}^j}_j\in T_I(Q_{X_{1,0}X_{2,0}Y})}\left\{ \begin{array}{ccc}
\p{\bX_{1,1},\bx_{2,0},\by}\in T(\tilde{Q}^0_{X_1X_{2,0}Y}), \\
\ppp{\p{\bX_{1,1},\bX_{2,j},\by}\in T(\tilde{Q}^j_{X_1X_2Y})}_{j=1}^{M_2-1}, \\
\ppp{\p{\bx_{1,0},\bX_{2,j},\by}\in T(\hat{Q}^j_{X_{1,0}X_2Y})}_{j=1}^{M_2-1} \end{array} \right\}}\right\}\label{Last3Prob}
\end{align}
\hrulefill
\vspace*{4pt}
\end{figure*}
where each of the probabilities at the r.h.s. of \eqref{Last3Prob} are conditioned on $\calF_0$. Therefore, we were able to simplify the problematic union over the codebook of the first user. Note, however, that we cannot (directly) apply here the method of types due to the fact that the union is over an exponential number of types, and thus a more refined analysis is needed. We start by analyzing the last term at the r.h.s. of \eqref{Last3Prob}. To this end, we will invoke the type enumeration method, but first, the main observation here is that similarly to the passage from \eqref{EquivErrorEvent00} to \eqref{EquivErrorEvent}, the last term at the r.h.s. of \eqref{Last3Prob} can be rewritten as \eqref{lastProRight}, shown at the next page. 
\begin{figure*}[!t]
\normalsize
\setcounter{MYtempeqncnt}{\value{equation}}
\setcounter{equation}{44}
\begin{align}
&\Pr\pp{\bigcup_{\ppp{\tilde{Q}_{X_1X_2Y}^j,\hat{Q}_{X_1X_2Y}^j}_j\in T_I(Q_{X_{1,0}X_{2,0}Y})}\left\{ \begin{array}{ccc}
\p{\bX_{1,1},\bx_{2,0},\by}\in T(\tilde{Q}^0_{X_1X_{2,0}Y}), \\
\ppp{\p{\bX_{1,1},\bX_{2,j},\by}\in T(\tilde{Q}^j_{X_1X_2Y})}_{j=1}^{M_2-1}, \\
\ppp{\p{\bx_{1,0},\bX_{2,j},\by}\in T(\hat{Q}^j_{X_{1,0}X_2Y})}_{j=1}^{M_2-1} \end{array} \right\}} \nonumber\\
&\ \ \ \ \ \ \ \ \ \ \ \ =\Pr\pp{\left.\ppp{\sum_{j=0}^{M_2-1}P\p{\bY\vert \bX_{1,1},\bX_{2,j}}\geq\sum_{j=0}^{M_2-1}P\p{\bY\vert \bX_{1,0},\bX_{2,j}}}\right\vert\calF_0}\\
&\ \ \ \ \ \ \ \ \ \ \ \ =\bE\ppp{\left.\Pr\pp{\left.\ppp{\sum_{j=0}^{M_2-1}P\p{\bY\vert \bX_{1,1},\bX_{2,j}}\geq\sum_{j=0}^{M_2-1}P\p{\bY\vert \bX_{1,0},\bX_{2,j}}}\right\vert\calF_0,\bX_{1,1}}\right\vert\calF_0}.\label{lastProRight}
\end{align}
\hrulefill
\vspace*{4pt}
\end{figure*}
That is, we returned back to the structure of the original probability in \eqref{innerPro}, but now, without the union over the codebook of the first user. Note that the conditioning on the random vector $\bX_{1,1}$ in \eqref{lastProRight}, is due to the fact that $\bX_{1,1}$ is common to all the summands in the inner summation over the codebook of the second user. We next evaluate the exponential behavior of the probability in \eqref{lastProRight}. For a given realization of $\bY = \by$, $\bX_{1,0}= \bx_{1,0}$, $\bX_{1,1}= \bx_{1,1}$, and $\bX_{2,0}= \bx_{2,0}$, let us define 
\begin{align}
s\triangleq \frac{1}{n}\log P\p{\by\vert \bx_{1,0},\bx_{2,0}},\label{sdefinition}
\end{align}
and
\begin{align}
r\triangleq \frac{1}{n}\log P\p{\by\vert \bx_{1,1},\bx_{2,0}}.\label{rdefinition}
\end{align}
For a given $(\by,\bx_{1,0},\bx_{1,1},\bx_{2,0})$, and a given joint probability distribution $Q_{X_1X_2Y}$ on $\calX_1\times\calX_2\times\calY$, let $N_1\p{Q_{X_1X_2Y}}$ designate the number of codewords $\ppp{\bX_{2,j}}_j$ (excluding $\bx_{2,0}$) whose conditional empirical distribution with $\by$ and $\bx_{1,1}$ is $Q_{X_1X_2Y}$, that is,
\begin{align}
N_1\p{Q_{X_1X_2Y}}\triangleq\sum_{j=1}^{M_2-1}\calI\ppp{\p{\bx_{1,1},\bX_{2,j},\by}\in T\p{Q_{X_1X_2Y}}},\label{N1QDef}
\end{align}
and let $N_2\p{Q_{X_1X_2Y}}$ designate the number of codewords $\ppp{\bX_{2,j}}_j$ (excluding $\bx_{2,0}$) whose conditional empirical distribution with $\by$ and $\bx_{1,0}$ is $Q_{X_1X_2Y}$, that is
\begin{align}
N_2\p{Q_{X_1X_2Y}}\triangleq\sum_{j=1}^{M_2-1}\calI\ppp{\p{\bx_{1,0},\bX_{2,j},\by}\in T\p{Q_{X_1X_2Y}}}.\label{N2QDef}
\end{align}
Also, recall that
\begin{align}
f\p{Q_{X_1X_2Y}} &= \frac{1}{n}\log P\p{\by\vert\bx_1,\bx_2}\\
& = \bE_Q\pp{\log W_{Y\vert X_1X_2}\p{Y\vert X_1,X_2}}\label{fQdef}
\end{align}
where $Q_{X_1X_2Y}$ is understood to be the joint empirical distribution of $\p{\bx_1,\bx_2,\by}\in\calX_1^n\times\calX_2^n\times\calY^n$. Thus, in terms of the above notations, we may write:
\begin{align}
&\sum_{j=0}^{M_2-1}P\p{\by\vert \bx_{1,1},\bX_{2,j}} = e^{nr}\nonumber\\
&\ \ \ \ +\sum_{Q_{X_2\vert X_1Y}\in\calS(Q_{X_1Y})}N_1\p{Q_{X_1X_2Y}}e^{nf(Q_{X_1X_2Y})}\\
&\hspace{3.3cm}\triangleq e^{nr}+\calN_1(Q_{X_1Y}).\label{calN1QDef}
\end{align}
where for a given $Q_{X_1Y}$, $\calS(Q_{X_1Y})$ is defined as the set of all distributions $\ppp{Q_{X_2\vert X_1Y}}$, such that $\sum_{\p{x_1,y}\in\calX_1\times\calY}Q_{X_1Y}\p{x_1,y}Q_{X_2\vert X_1Y}\p{x_2\vert x_1,y} = P_{X_2}\p{x_2}$ for all $x_2\in\calX_2$, namely,
\begin{align}
\calS(Q_{X_1Y}) = \ppp{Q_{X_1X_2Y}':\;Q_{X_1Y}' = Q_{X_1Y},\; Q_{X_2}' = P_{X_2}}.\label{Sdef}
\end{align}
Similarly,
\begin{align}
&\sum_{j=0}^{M_2-1}P\p{\by\vert \bx_{1,0},\bX_{2,j}} = e^{ns}\nonumber\\
&+\sum_{Q_{X_2\vert X_{1,0}Y}\in\calS(Q_{X_{1,0}Y})}N_2\p{Q_{X_{1,0}X_2Y}}e^{nf(Q_{X_{1,0}X_2Y})}\\
&\hspace{3.3cm}\triangleq e^{ns}+\calN_2(Q_{X_{1,0}Y}).\label{calN2QDef}
\end{align}
where for a given $Q_{X_{1,0}Y}$, $\calS(Q_{X_{1,0}Y})$ is defined as the set of all distributions $\ppp{Q_{X_2\vert X_{1,0}Y}}$, such that $\sum_{\p{x_1,y}\in\calX_1\times\calY}Q_{X_{1,0}Y}\p{x_1,y}Q_{X_2\vert X_{1,0}Y}\p{x_2\vert x_1,y} = P_{X_2}\p{x_2}$ for all $x_2\in\calX_2$ (similarly as in \eqref{Sdef}). For simplicity of notation, in the following, we use $Q$ and $\tilde{Q}$ to denote $Q_{X_1X_2Y}$ and $Q_{X_{1,0}X_2Y}$, respectively. Therefore, with these definitions in mind, we wish to calculate (given $\p{\calF_0,\bX_{1,1}}$)
\begin{align}
&\Pr\pp{\sum_{j=0}^{M_2-1}P\p{\bY\vert \bx_{1,1},\bX_{2,j}}\geq\sum_{j=0}^{M_2-1}P\p{\bY\vert \bx_{1,0},\bX_{2,j}}}\nonumber\\
&=\Pr\left[\calN_1(Q_{X_1Y})-\calN_2(Q_{X_{1,0}Y})\geq e^{ns}-e^{nr} \right]\label{anaBef}
\end{align}
where $s$, $r$, $\calN_1(Q)$ and $\calN_2(Q)$ are given in \eqref{sdefinition}, \eqref{rdefinition}, \eqref{calN1QDef}, and \eqref{calN2QDef}, respectively. Let $\varepsilon>0$ be arbitrarily small, and define $i_1\triangleq\left\lfloor \frac{1}{n\epsilon}\log P(\by\vert \bx_{1,0},\bx_{2,0})\right\rfloor$. Then,
\begin{align}
&\Pr\left[\calN_1(Q_{X_1Y})-\calN_2(Q_{X_{1,0}Y})\geq e^{ns}-e^{nr} \right] \nonumber\\
& = \sum_{i=i_1}^{\left\lceil R_2/\varepsilon\right\rceil} \Pr\left\{e^{ni\varepsilon}\leq \calN_2(Q_{X_{1,0}Y})\leq e^{n(i+1)\varepsilon},\right.\nonumber\\
&\left.\hspace{2cm}\vphantom{e^{ni\varepsilon}\leq \calN_2(Q_{X_{1,0}Y})\leq e^{n(i+1)\varepsilon}}\calN_1(Q_{X_1Y})-\calN_2(Q_{X_{1,0}Y})\geq e^{ns}-e^{nr}\right\}\nonumber\\
& \leq \sum_{i=i_1}^{\left\lceil R_2/\varepsilon\right\rceil} \Pr\left\{e^{ni\varepsilon}\leq \calN_2(Q_{X_{1,0}Y})\leq e^{n\p{i+1}\varepsilon},\right.\nonumber\\
&\left.\hspace{2cm}\vphantom{e^{ni\varepsilon}\leq \calN_2(Q_{X_{1,0}Y})\leq e^{n(i+1)\varepsilon}}\calN_1(Q_{X_1Y})\geq e^{ni\varepsilon}+e^{ns}-e^{nr}\right\}\\
& = \sum_{i=i_1}^{\left\lceil R_2/\varepsilon\right\rceil} \Pr\left\{e^{ni\varepsilon}\leq \calN_2(Q_{X_{1,0}Y})\leq e^{n(i+1)\varepsilon}\right\}\nonumber\\
&\ \ \ \ \ \ \ \ \ \times\Pr\left\{\calN_1(Q_{X_1Y})\geq e^{ni\varepsilon}+e^{ns}-e^{nr}\right.\nonumber\\
&\ \ \ \ \ \ \ \ \ \ \ \ \ \ \ \ \ \ \left.\left.\right\vert e^{ni\varepsilon}\leq \calN_2(Q_{X_{1,0}Y})\leq e^{n(i+1)\varepsilon}\right\}.
\label{lastExposs}
\end{align}
It is not difficult to show that (see, e.g., \cite[p. 6028]{MerhavEr1})
\begin{align}
&\Pr\left\{e^{nt}\leq \calN_2(Q_{X_{1,0}Y})\leq e^{n(t+\varepsilon)}\right\}\nonumber\\
&\exe \begin{cases}
0\ \ \ \ \ \ \ \ \ \ \ \ \ \ \ \ \ \ \ \ \ \ \ \ \ \ \ \ \ \ \ \ \ \ \ \ t<t_0(Q_{X_{1,0}Y})-\varepsilon\\
1\ \ \ \ \ \ \ \ \ \ \ \ \ \ \ \ \ \ \ \ \ \ \ \ t_0(Q_{X_{1,0}Y})-\varepsilon\leq t\leq t_0(Q_{X_{1,0}Y})\\
\exp\pp{-nE(t,Q_{X_{1,0}Y})}\ \ \ \ \ \ \ \ \ \ t> t_0(Q_{X_{1,0}Y})
\end{cases}\label{larggedeivupperlower}
\end{align}
where
\begin{align}
&t_0(Q_{X_{1,0}Y})\triangleq R_2+\nonumber\\
&\max_{\tilde{Q}\in\calS(Q_{X_{1,0}Y}):\;I_{\tilde{Q}}(X_2;X_{1,0},Y)\leq R_2}\pp{f(\tilde{Q})-I_{\tilde{Q}}(X_2;X_{1,0},Y)},
\end{align}
in which $\calS(Q)$ is defined in \eqref{Sdef}, $f(Q)$ is given in \eqref{fQdef}, and
\begin{align}
&E(t,Q_{X_{1,0}Y})\triangleq \min\left\{\pp{I_{\tilde{Q}}(X_2;X_{1,0},Y)-R_2}_+:\right.\nonumber\\
&\left.\ \ \ \ \ \ \ \ \ \ \ \ \ \ f(\tilde{Q})+\pp{R_2-I_{\tilde{Q}}(X_2;X_{1,0},Y)}_+\geq t\right\}.
\end{align}
Now, in the exponential scale, the term at the r.h.s. of \eqref{lastExposs} is dominated by one of the summands, and we claim that the dominant contribution to the sum over $i$ is due to the first term\footnote{Note that according to \eqref{larggedeivupperlower}, $\Pr\{e^{ni\varepsilon}\leq \calN_2(Q_{X_{1,0}Y})\leq e^{n(i+1)\varepsilon}\}$ vanishes (in the exponential scale) for $i<t_0(Q_{X_{1,0}Y})/\varepsilon$. Thus, to asses the exponential scale of \eqref{lastExposs} we consider only the indices correspond to $i\geq t_0(Q_{X_{1,0}Y})/\varepsilon$.}, $i = t_0(Q_{X_{1,0}Y})/\varepsilon$. Indeed, let $\mathscr{A}_k \triangleq \ppp{e^{nk\varepsilon}\leq \calN_2(Q_{X_{1,0}Y})\leq e^{n\p{k+1}\varepsilon}}$ and $\mathscr{B}_k \triangleq \ppp{\calN_1(Q_{X_1Y})\geq e^{nk\varepsilon}+e^{ns}-e^{nr}}$, and notice that the summands in \eqref{lastExposs} correspond to $\Pr\ppp{\mathscr{A}_k\cap\mathscr{B}_k}$. According to \eqref{larggedeivupperlower}, $\Pr\ppp{\mathscr{A}_{t_0}}\to1$ (the exponent $E(k\varepsilon,Q_{X_{1,0}Y})$ vanishes), and note that $\Pr\ppp{\mathscr{B}_k}$ is monotonically decreasing with $k$. Therefore,
\begin{align}
\Pr\ppp{\mathscr{A}_{t_0}\cap\mathscr{B}_{t_0}}&\leq\max_{k\geq t_0}\Pr\ppp{\mathscr{A}_k\cap\mathscr{B}_k}\nonumber\\
&\leq \max_{k\geq t_0}\Pr\ppp{\mathscr{B}_k} = \Pr\ppp{\mathscr{B}_{t_0}}.\label{ineqineq0}
\end{align}
On the other hand,
\begin{align}
\Pr\ppp{\mathscr{A}_{t_0}\cap\mathscr{B}_{t_0}}&= \Pr\ppp{\mathscr{B}_{t_0}}-\Pr\ppp{\mathscr{A}_{t_0}^c\cap\mathscr{B}_{t_0}}\nonumber\\
&\geq \Pr\ppp{\mathscr{B}_{t_0}}-\Pr\ppp{\mathscr{A}_{t_0}^c}.
\end{align}
Thus, due to the fact that $\Pr\ppp{\mathscr{A}_{t_0}}\to1$ super-exponentially fast \cite[p. 6028]{MerhavEr1}, we may conclude that 
\begin{align}
\Pr\ppp{\mathscr{A}_{t_0}\cap\mathscr{B}_{t_0}}\exe \Pr\ppp{\mathscr{B}_{t_0}}.\label{ineqineq1}
\end{align}
Combining \eqref{lastExposs}, \eqref{ineqineq0} and \eqref{ineqineq1}, and the fact that $\varepsilon$ is arbitrarily small, we get \eqref{lastExposs3} (shown at the top of the next page) by using standard large deviations techniques (see, e.g., \cite[p. 6027]{MerhavEr1}), 
\begin{figure*}[!t]
\normalsize
\setcounter{MYtempeqncnt}{\value{equation}}
\setcounter{equation}{66}
\begin{align}
\Pr\left[\calN_1(Q_{X_1Y})-\calN_2(Q_{X_{1,0}Y})\geq e^{ns}-e^{nr} \right] &\exe \Pr\ppp{\mathscr{B}_{t_0}}\nonumber\\
&=\Pr\left\{\calN_1(Q_{X_1Y})\geq e^{nt_0(Q_{X_{1,0}Y})}+e^{ns}-e^{nr}\right\}\nonumber\\
&\exe\max_{Q\in\calS(Q_{X_1Y})}\Pr\ppp{N_1(Q)\geq e^{n\pp{t_0(Q_{X_{1,0}Y})-f(Q)}}+e^{n\pp{s-f(Q)}}-e^{n\pp{r-f(Q)}}}\label{rrrss}\\
&\exe\max_{Q\in\calS(Q_{X_1Y})}\begin{cases}
1\ &r>\max\pp{t_0,s}\\
e^{-n\pp{I_Q(X_2;X_1,Y)-R_2}_+}\ &r\leq \max\pp{t_0,s},\;Q\in\tilde\calL\\
0 \ &r\leq \max\pp{t_0,s},\;Q\in\tilde\calL^c
\end{cases}\label{lastExposs3}
\end{align}
\hrulefill
\vspace*{4pt}
\end{figure*}
where $N_1(Q)$ and $\calS(Q)$ are defined in \eqref{N1QDef} and \eqref{Sdef}, respectively, and
\begin{align}
\tilde\calL \triangleq \ppp{Q:\;\max\pp{t_0,s}-f(Q)\leq \pp{R_2-I_Q(X_2;X_1,Y)}_+}.\label{tildeLdef}
\end{align}
Thus,
\begin{align}
&\Pr\left[\calN_1(Q_{X_1Y})-\calN_2(Q_{X_{1,0}Y})\geq e^{ns}-e^{nr} \right]\nonumber\\
&\ \ \ \ \ \ \ \ \exe \exp\ppp{-nE_1(Q_{X_1X_{2,0}Y},Q_{X_{1,0}X_{2,0}Y})}\label{E1befHat}
\end{align}
where $E_1(\cdot,\cdot)$ is defined in \eqref{E1QtileQdef}. Note that when $r>\max\pp{t_0,s}$, the r.h.s. term of the inequality in the probability in \eqref{rrrss} is negative, and due to the fact that the enumerator is nonnegative, the overall probability is unity. Finally, we average over $\bX_{1,1}$ given $\calF_0$. Using the method of types, we readily obtain \eqref{toppageEq}, given at the top of the next page, 
\begin{figure*}[!t]
\normalsize
\setcounter{MYtempeqncnt}{\value{equation}}
\setcounter{equation}{70}
\begin{align}
&\bE\ppp{\left.\Pr\pp{\left.\ppp{\sum_{j=0}^{M_2-1}P\p{\bY\vert \bX_{1,1},\bX_{2,j}}\geq\sum_{j=0}^{M_2-1}P\p{\bY\vert \bX_{1,0},\bX_{2,j}}}\right\vert\calF_0,\bX_{1,1}}\right\vert\calF_0}\label{sameFormRCE}\\
&\exe \exp\ppp{-n\min_{Q_{X_1\vert X_{2,0}Y}\in\hat{\calS}(Q_{X_{2,0}Y})}\pp{I_Q(X_1;X_{2,0},Y)+E_1(Q_{X_1X_{2,0}Y},Q_{X_{1,0}X_{2,0}Y})}}\label{methodoftypes}\\
&\triangleq \exp\ppp{-n\hat{E}_1(Q_{X_{1,0}X_{2,0}Y},R_2)}\label{toppageEq}
\end{align}
\hrulefill
\vspace*{4pt}
\end{figure*}
where 
\begin{align}
\hat{\calS}(Q_{X_2Y}) \triangleq \ppp{Q_{X_1X_2Y}':\;Q_{X_2Y}' = Q_{X_2Y},\; Q_{X_1}' = P_{X_1}},\label{hatSdef}
\end{align}
and $\hat{E}_1(\cdot,\cdot)$ is defined in \eqref{E1hatQdef}. This completes the analysis of the last term at the r.h.s. of \eqref{Last3Prob}. 

Next, we analyze the second and third terms at the r.h.s. of \eqref{Last3Prob}. Recall that the latter is given by \eqref{weirdposs}.
\begin{figure*}[!t]
\normalsize
\setcounter{MYtempeqncnt}{\value{equation}}
\setcounter{equation}{74}
\begin{align}
P_{e,3}\triangleq\Pr\pp{\bigcup_{\ppp{\tilde{Q}_{X_1X_2Y}^j,\hat{Q}_{X_1X_2Y}^j}_j\in T_I(Q_{X_{1,0}X_{2,0}Y})}\ppp{\bX_{2,j}}_{j\geq1}\in\calB_2\p{\tilde{Q}^0_{X_1X_{2,0}Y},(\tilde{Q}^j_{X_1X_2Y},\hat{Q}^j_{X_1X_2Y})_j)}}.\label{weirdposs}
\end{align}
\hrulefill
\vspace*{4pt}
\end{figure*}
Accordingly, in the spirit of \eqref{lastProRight}, we note that $P_{e,3}$ can be equivalently rewritten as \eqref{lastProRight2}, shown at the top of the next page, 
\begin{figure*}[!t]
\normalsize
\setcounter{MYtempeqncnt}{\value{equation}}
\setcounter{equation}{75}
\begin{align}
P_{e,3} &= \Pr\left[\bigcup_{Q_{X_1\vert X_{2,0}Y}}P\p{\by\vert \bx_{1,1},\bx_{2,0}}+\sum_{j=1}^{M_2-1}P\p{\by\vert \bx_{1,1},\bX_{2,j}}\geq P\p{\by\vert \bx_{1,0},\bx_{2,0}}\right.\nonumber\\
&\left.\left.\ \ \ \ \ \ \ \ \ \ \ \ \ \ \ \ \ +\sum_{j=1}^{M_2-1}P\p{\by\vert \bx_{1,0},\bX_{2,j}},\;\text{for some }\bx_{1,1}\in T(Q_{X_1X_{2,0}Y})\right\vert\calF_0\right],\\
&\exe \max_{Q_{X_1\vert X_{2,0}Y}\in\hat{\calS}(Q_{X_{2,0}Y})} \Pr\left[P\p{\by\vert \bx_{1,1},\bx_{2,0}}+\sum_{j=1}^{M_2-1}P\p{\by\vert \bx_{1,1},\bX_{2,j}}\geq P\p{\by\vert \bx_{1,0},\bx_{2,0}}\right.\nonumber\\
&\left.\left.\ \ \ \ \ \ \ \ \ \ \ \ \ \ \ \ \ \ \ \ \ \ \ \ \ \ \ \ \ \ \ +\sum_{j=1}^{M_2-1}P\p{\by\vert \bx_{1,0},\bX_{2,j}},\;\text{for some }\bx_{1,1}\in T(Q_{X_1X_{2,0}Y})\right\vert\calF_0\right]\label{lastProRight2mid}\\
& = \max_{Q_{X_1\vert X_{2,0}Y}\in\hat{\calS}(Q_{X_{2,0}Y})} \Pr\left[\calN_1(Q_{X_1Y})-\calN_2(Q_{X_{1,0}Y})\geq e^{ns}-e^{nr},\;\text{for some }\bx_{1,1}\in T(Q_{X_1X_{2,0}Y})\vert\calF_0\right]\label{lastProRight2}
\end{align}
\hrulefill
\vspace*{4pt}
\end{figure*}
where $s$, $r$, $\calN_1(Q)$, $\calN_2(Q)$, and $\hat{\calS}(Q)$, are given in \eqref{sdefinition}, \eqref{rdefinition}, \eqref{calN1QDef}, \eqref{calN2QDef}, and \eqref{hatSdef}, respectively, and the second passage follows by using the method of types. 
Now, due to the fact that only $\calN_1$ (and not $\calN_2$) in \eqref{lastProRight2} depends on $\bx_{1,1}$, and since the analysis in \eqref{anaBef}-\eqref{rrrss} is independent of $\bx_{1,1}$, it can be repeated here, and we obtain \eqref{lastProRight2after}, shown at the top of page~\pageref{lastProRight2after},
\begin{figure*}[!t]
\normalsize
\setcounter{MYtempeqncnt}{\value{equation}}
\setcounter{equation}{78}
\begin{align}
P_{e,3}&\exe \max_{Q_{X_1\vert X_{2,0}Y}\in\hat{\calS}(Q_{X_{2,0}Y})}\sum_{i=i_1}^{\left\lceil R_2/\varepsilon\right\rceil} \Pr\left\{e^{ni\varepsilon}\leq \calN_2(Q_{X_{1,0}Y})\leq e^{n(i+1)\varepsilon}\right\}\nonumber\\
&\ \ \ \ \ \ \ \ \ \ \ \ \ \ \ \ \ \ \ \ \ \ \ \ \ \ \ \times\Pr\left\{\left.\bigcup_{\bxt_{1,1}}\calN_1(Q_{X_1Y})\geq e^{ni\varepsilon}+e^{ns}-e^{nr}\right\vert e^{ni\varepsilon}\leq \calN_2(Q_{X_{1,0}Y})\leq e^{n(i+1)\varepsilon}\right\}\nonumber\\
&\exe\max_{Q_{X_1\vert X_{2,0}Y}\in\hat{\calS}(Q_{X_{2,0}Y})} \Pr\left\{\calN_1(Q_{X_1Y})\geq e^{nt_0(Q_{X_{1,0}Y})}+e^{ns}-e^{nr},\;\text{for some }\bx_{1,1}\right\}\nonumber\\
&\exe \max_{Q_{X_1\vert X_{2,0}Y}\in\hat{\calS}(Q_{X_{2,0}Y})}\max_{Q\in\calS(Q_{X_1Y})}\Pr\left\{N_1(Q)\geq e^{n\gamma},\;\text{for some }\bx_{1,1}\right\},\label{lastProRight2after}
\end{align}
\hrulefill
\vspace*{4pt}
\end{figure*}
where $N_1(Q)$ is given in \eqref{N1QDef}, and we have defined $e^{n\gamma}\triangleq e^{n\pp{t_0(Q_{X_{1,0}Y})-f(Q)}}+e^{n\pp{s-f(Q)}}-e^{n\pp{r-f(Q)}}$. Recall \eqref{N1QDef}, and let $\tilde{N}_1(Q) \triangleq \sum_{j=1}^{M_2-1}\calI\ppp{(\bX_{2,j},\by)\in T(Q_{X_2Y})}$. We claim that \eqref{lastProRight2after} can be rewritten as\footnote{It is easy to see that \eqref{444} is an upper bound on \eqref{lastProRight2after}. The other direction follows from:
\begin{align}
&\max_{Q,\hat{Q}\in\calS(Q)}\Pr\ppp{\tilde{N}_1(\hat{Q})\geq e^{n\gamma(Q)}}= \max_{\hat{Q}\in\calS(Q^*)}\Pr\ppp{\tilde{N}_1(\hat{Q})\geq e^{n\gamma(Q^*)}}\nonumber\\
&\ \ \ \ \ =\max_{\hat{Q}\in\calS(Q^*)}\Pr\ppp{{N}_1(\hat{Q})\geq e^{n\gamma(Q^*)},\;\text{for some }\bx_{1,1}\in T(Q^*)}\nonumber\\
&\ \ \ \ \ \leq\max_Q\max_{\hat{Q}\in\calS(Q)}\Pr\ppp{{N}_1(\hat{Q})\geq e^{n\gamma(Q)},\;\text{for some }\bx_{1,1}\in T(Q)}\nonumber,
\end{align}
where in the first equality we designate $Q^*$ as the maximizer, and the second equality follows from the fact that $T(Q_{X_2Y})=\ppp{\bx_2:\;(\bx_{1,1},\bx_2,\by)\in T(Q_{X_1X_2Y}),\;\text{for some }\bx_{1,1}}$.
}
\begin{align}
P_{e,3}\exe \max_{Q_{X_1\vert X_{2,0}Y}\in\hat{\calS}(Q_{X_{2,0}Y})}\max_{Q\in\calS(Q_{X_1Y})}\Pr\left\{\tilde{N}_1(Q)\geq e^{n\gamma}\right\},\label{444}
\end{align}
which follows from the fact that the set $\ppp{\bx_2:\;(\bx_{1,1},\bx_2,\by)\in T(Q_{X_1X_2Y}),\;\text{for some }\bx_{1,1}}$ equals $T(Q_{X_2Y})$ (see, e.g., \cite[eqs. (24)-(25)]{scarletNew}). Thus, by using standard large deviations techniques (see, e.g., \cite[p. 6027]{MerhavEr1})
\begin{align}
P_{e,3}&\exe \max_{Q_{X_1\vert X_{2,0}Y}\in\hat{\calS}(Q_{X_{2,0}Y})}\max_{Q\in\calS(Q_{X_1Y})}\\
&\begin{cases}
1\ &r>\max\pp{t_0,s}\\
e^{-n\pp{I_Q(X_2;Y)-R_2}_+}\ &r\leq \max\pp{t_0,s},\;Q\in\hat\calL\\
0 \ &r\leq \max\pp{t_0,s},\;Q\in\hat\calL^c
\end{cases}
\end{align}
where
\begin{align}
\hat\calL \triangleq \ppp{Q:\;\max\pp{t_0,s}-f(Q)\leq \pp{R_2-I_Q(X_2;Y)}_+}.
\end{align}
Therefore,
\begin{align}
P_{e,3} &\exe \exp\ppp{-n\hat{E}_2(Q_{X_{1,0}X_{2,0}Y},R_2)}\label{lastProRight3ere}
\end{align}
where $\hat{E}_2(Q_{X_{1,0}X_{2,0}Y},R_2)$ is defined in \eqref{E2hatQdef}. This completes the analysis of the third term at the r.h.s. of \eqref{Last3Prob}. Finally, recall that the second term at the r.h.s. of \eqref{Last3Prob} is given by
\begin{align}
&A \triangleq M_1\cdot\Pr\left[\bigcup_{ T_I(Q_{X_{1,0}X_{2,0}Y})}\right.\nonumber\\
&\left.\vphantom{\bigcup_{ T_I(Q_{X_{1,0}X_{2,0}Y})}}\ \ \ \ \bX_{1,1}\in\calB_1\p{\tilde{Q}^0_{X_1X_{2,0}Y},(\tilde{Q}^j_{X_1X_2Y},\hat{Q}^j_{X_1X_2Y})_j)}\right]
\end{align}
and is equivalent to \eqref{lastProRightfin0}, given at the top of page~\pageref{lastProRightfin0}.
\begin{figure*}[!t]
\normalsize
\setcounter{MYtempeqncnt}{\value{equation}}
\setcounter{equation}{85}
\begin{align}
&A=M_1\cdot\Pr\pp{\bigcup_{ T_I(Q_{X_{1,0}X_{2,0}Y})}\left\{ \begin{array}{ccc}
\p{\bX_{1,1},\bx_{2,0},\by}\in T(\tilde{Q}^0_{X_1X_{2,0}Y}), \\
\ppp{\p{\bX_{1,1},\bx_{2,j},\by}\in T(\tilde{Q}^j_{X_1X_2Y})}_{j=1}^{M_2-1},\text{ for some }\ppp{\bx_{2,j}} \\
\ppp{\p{\bx_{1,0},\bx_{2,j},\by}\in T(\hat{Q}^j_{X_{1,0}X_2Y})}_{j=1}^{M_2-1} \end{array} \right\}}.
\label{lastProRightfin0}
\end{align}
\hrulefill
\vspace*{4pt}
\end{figure*}
This term can be analyzed as before, but, we claim that it is actually larger than the fourth term at the r.h.s. of \eqref{Last3Prob}, and thus, essentially, does not affect the minimum in \eqref{Last3Prob}. Indeed, recall that the fourth term is given by \eqref{lastProRightfin}, shown at the top of page~\pageref{lastProRightfin},
\begin{figure*}[!t]
\normalsize
\setcounter{MYtempeqncnt}{\value{equation}}
\setcounter{equation}{86}
\begin{align}
B \triangleq M_1\cdot\Pr\pp{\bigcup_{ T_I(Q_{X_{1,0}X_{2,0}Y})}\left\{ \begin{array}{ccc}
\p{\bX_{1,1},\bx_{2,0},\by}\in T(\tilde{Q}^0_{X_1X_{2,0}Y}), \\
\ppp{\p{\bX_{1,1},\bX_{2,j},\by}\in T(\tilde{Q}^j_{X_1X_2Y})}_{j=1}^{M_2-1}, \\
\ppp{\p{\bx_{1,0},\bX_{2,j},\by}\in T(\hat{Q}^j_{X_{1,0}X_2Y})}_{j=1}^{M_2-1} \end{array} \right\}},
\label{lastProRightfin}
\end{align}
\hrulefill
\vspace*{4pt}
\end{figure*}
and since the factor $M_1$ is common to both $A$ and $B$, we just need to compare the probabilities in these terms. However, it is obvious that the probability term in $B$ is smaller than the probability in $A$, due to the fact that events in the former are contained in the events in the latter. Indeed, this is equivalent to comparing $\Pr\ppp{(Z_1,Z_2)\in\calZ}$ and $\Pr\ppp{(Z_1,z_2)\in\calZ,\text{ for some }z_2\in\calZ_2}$, where $Z_1$ and $Z_2$ are RVs that are defined over the alphabets $\calZ_1$ and $\calZ_2$, respectively, and $\calZ\subseteq\calZ_1\times\calZ_2$. Let $\calV\triangleq\tilde{\calV}\times\calZ_2$, in which
\begin{align}
\tilde{\calV}\triangleq\ppp{z_1\in\calZ_1:\;(z_1,z_2)\in\calZ,\text{ for some }z_2\in\calZ_2}.
\end{align}
Then, it is obvious that $\calZ\subseteq\calV$, and thus
\begin{align}
&\Pr\ppp{(Z_1,Z_2)\in\calZ} = \sum_{(z_1,z_2)\in\calZ}P(z_1,z_2)\\
&\leq \sum_{(z_1,z_2)\in\calV}P(z_1,z_2)\\
&= \sum_{z_1\in\tilde{\calV}}P(z_1)= \Pr\ppp{(Z_1,z_2)\in\calZ,\text{ for some }z_2}.\label{biggerThan}
\end{align}   
Wrapping up, using \eqref{innerPro}, \eqref{Last3Prob}, and the last results, after averaging w.r.t. $\calF_0$, we get \eqref{proboferror}, shown at the top of the next page, as required.
\begin{figure*}[!t]
\normalsize
\setcounter{MYtempeqncnt}{\value{equation}}
\setcounter{equation}{92}
\begin{align}
\bar P^{(n)}_{e,1} &\exeq \bE\ppp{\min\ppp{1,e^{-n(\hat{E}_1(Q_{X_{1,0}X_{2,0}Y},R_2)-R_1)},e^{-n\hat{E}_2(Q_{X_{1,0}X_{2,0}Y},R_2)}}}\\
& = \bE\ppp{\min\ppp{e^{-n\pp{\hat{E}_1(Q_{X_{1,0}X_{2,0}Y},R_2)-R_1}_+},e^{-n\hat{E}_2(Q_{X_{1,0}X_{2,0}Y},R_2)}}}\\
& = \bE\ppp{\exp\pp{-n\max\ppp{\pp{\hat{E}_1(Q_{X_{1,0}X_{2,0}Y},R_2)-R_1}_+,\hat{E}_2(Q_{X_{1,0}X_{2,0}Y},R_2)}}}\\
&\exe \exp\ppp{-n\pp{\min_{Q_{Y|X_{1,0}X_{2,0}}}\pp{D(Q_{Y|X_{1,0}X_{2,0}}||W_{Y|X_{1,0}X_{2,0}}\vert P_{X_{1,0}}\times P_{X_{2,0}})+E(Q,R_1,R_2)}}}\label{proboferror}\\
E(Q,R_1,R_2) &\triangleq\max\left\{\pp{\hat{E}_1(Q_{X_{1,0}X_{2,0}Y},R_2)-R_1}_+,\hat{E}_2(Q_{X_{1,0}X_{2,0}Y},R_2)\right\}.
\end{align}
\hrulefill
\vspace*{4pt}
\end{figure*}

\subsection{Proof of Theorem \ref{th:2}}\label{subsec:proof2}
 
Without loss of generality, we assume throughout, that the transmitted codewords are $\bx_{1,0}$ and $\bx_{2,0}$ which correspond to $\bz_{11,0},\bz_{12,0},\bz_{21,0}$ and $\bz_{22,0}$. Here, we distinguish between several types of errors. Recall that the overall error probability is given by
\begin{align}
\bar P^{(n)}_{e,1} = \Pr\ppp{(\hat{m}_{11},\hat{m}_{12})\neq (0,0)},\label{overalProb}
\end{align}
which can be divided into six possible types of errors: $(\hat{m}_{11}\neq0,\hat{m}_{12}=0,\hat{m}_{21}=0)$, $(\hat{m}_{11}=0,\hat{m}_{12}\neq0,\hat{m}_{21}=0)$, $(\hat{m}_{11}\neq0,\hat{m}_{12}\neq0,\hat{m}_{21}=0)$, $(\hat{m}_{11}\neq0,\hat{m}_{12}=0,\hat{m}_{21}\neq0)$, $(\hat{m}_{11}=0,\hat{m}_{12}\neq0,\hat{m}_{21}\neq0)$, and  $(\hat{m}_{11}\neq0,\hat{m}_{12}\neq0,\hat{m}_{21}\neq0)$. Note that the event $(\hat{m}_{11}=0,\hat{m}_{12}=0,\hat{m}_{21}\neq0)$ will not result in an error, and thus ignored. Obviously, the exponent of the overall error probability in \eqref{overalProb} is given by the minimum between the error exponents corresponding to each type of error individually. We start with analyzing the last error event, which is also the most involved one. For this event, the average probability of error, associated with the decoder in \eqref{optDecHK}, is given by \eqref{innerProHK}, given at the top of the next page, 
\begin{figure*}[!t]
\normalsize
\setcounter{MYtempeqncnt}{\value{equation}}
\setcounter{equation}{98}
\begin{align}
P_e^{(7)} &\triangleq \Pr\pp{\bigcup_{i=1}^{M_{11}-1}\bigcup_{j=1}^{M_{12}-1}\bigcup_{k=1}^{M_{21}-1}\ppp{\sum_{l=0}^{M_{22}-1}P(\bY\vert \tilde{\bZ}_{ijk},\bZ_{22,l})\geq\sum_{l=0}^{M_{22}-1}P(\bY\vert \tilde{\bZ}_{0},\bZ_{22,l})}}\label{OrignalProbHK}\\
& = \bE\ppp{\left.\Pr\pp{\bigcup_{i=1}^{M_{11}-1}\bigcup_{j=1}^{M_{12}-1}\bigcup_{k=1}^{M_{21}-1}\ppp{\sum_{l=0}^{M_{22}-1}P(\bY\vert \tilde{\bZ}_{ijk},\bZ_{22,l})\geq\sum_{l=0}^{M_{22}-1}P(\bY\vert \tilde{\bZ}_{0},\bZ_{22,l})}\right\vert\calF_0}}\label{innerProHK}
\end{align}
\hrulefill
\vspace*{4pt}
\end{figure*}
where $\tilde{\bZ}_{ijk}\triangleq (\bZ_{11,i},\bZ_{12,j},\bZ_{21,k})$, $\tilde{\bZ}_0\triangleq(\bZ_{11,0},\bZ_{12,0},\bZ_{21,0})$, and $\calF_0 \triangleq(\tilde{\bZ}_0,\bZ_{22,0},\bY)$. We will assess the exponential behavior of \eqref{innerProHK} in the same manner as we did for \eqref{innerPro}. Specifically, we start with expressing \eqref{innerProHK} in terms of types. First, for a given joint distribution $Q_{Z_1^4Y}$, we let
\begin{align}
f(Q_{Z_1^4Y}) &\triangleq \frac{1}{n}\log P\p{\by\vert\bx_1(\bz_1,\bz_2),\bx_2(\bz_3,\bz_4)}.
\end{align}
Now, for a given joint type $Q_{Z_{1,0}^{4}Y}$ of the random vectors $\p{\bZ_{1,0},\bZ_{2,0},\bZ_{3,0},\bZ_{4,0},\bY}$, we define the set $T_I(Q_{Z_{1,0}^{4}Y})$ in \eqref{IFCTypicalHK} given at the top of the next page.
\begin{figure*}[!t]
\normalsize
\setcounter{MYtempeqncnt}{\value{equation}}
\setcounter{equation}{101}
\begin{align}
 T_I(Q_{Z_{1,0}^{4}Y})&\defi\left\{\tilde{Q}^0_{Z_1^3Z_{4,0}Y}\in\calS_0,\p{\ppp{\tilde{Q}^l_{Z_1^4Y}}_{l=1}^{M_{22}-1},\ppp{\hat{Q}^l_{Z_{1,0}^3Z_4Y}}_{l=1}^{M_{22}-1}}\in\calS_1:\;\right.\nonumber\\
&\left.\ \ \ \ \ \ \ \ \ \ \ \ \ \ \ \ \ \ \ \ \ \ \ e^{nf(\tilde{Q}_{Z_1^3Z_{4,0}Y}^0)}+\sum_{l=1}^{M_{22}-1}\pp{e^{nf(\tilde{Q}^l_{Z_1^4Y})}-e^{nf(\hat{Q}^l_{Z_{1,0}^3Z_4Y})}}\geq e^{nf(Q_{Z_{1,0}^{4}Y})}\right\}\label{IFCTypicalHK}\\
\calS_0(Q_{Z_{1,0}^{4}Y}) &\triangleq \ppp{\tilde{Q}_{Z_1^3Z_{4,0}Y}^0:\;\tilde{Q}_{Z_i}^0 = P_{Z_i},\tilde{Q}^0_{Z_{4,0}Y} = Q_{Z_{4,0}Y},\;\forall 1\leq i\leq 4},\label{margcons1HK}\\
\calS_1(Q_{Z_{1,0}^{4}Y})&\triangleq \left\{\ppp{\tilde{Q}^l_{Z_1^4Y}}_{l=1}^{M_{22}-1},\ppp{\hat{Q}^l_{Z_{1,0}^3Z_4Y}}_{l=1}^{M_{22}-1}:\;\tilde{Q}^l_{Z_i}= P_{Z_i},\tilde{Q}^l_{Y}= Q_{Y},\right.\nonumber\\
&\ \ \ \ \ \ \ \ \ \ \ \hat{Q}^l_{Z_i} = P_{Z_i},\hat{Q}^l_{Z_{1,0}^3Y}= Q_{Z_{1,0}^3Y},\;\forall 1\leq i\leq 4,\;\forall 1\leq l\leq M_{22}-1\nonumber\\
&\left.\ \ \ \ \ \ \ \ \ \ \ \ \ \ \ \ \ \ \ \ \ \ \ \ \ \ \ \ \ \ \ \ \ \ \ \ \ \ \ \ \ \ \ \ \ \ \tilde{Q}_{Z_4Y}^l=\hat{Q}_{Z_4Y}^l,\tilde{Q}^l_{Z_1^3Y}=\tilde{Q}^m_{Z_1^3Y}, \forall l,m\right\}.\label{margcons2HK}
\end{align}
\hrulefill
\vspace*{4pt}
\end{figure*}
Now, with these definitions, fixing $Q_{Z_{1,0}^{4}Y}$, it follows, by definition, that the error event
\begin{align}
&\bigcup_{i=1}^{M_{11}-1}\bigcup_{j=1}^{M_{12}-1}\bigcup_{k=1}^{M_{21}-1}\nonumber\\
&\ppp{\sum_{l=0}^{M_{22}-1}P(\bY\vert \tilde{\bZ}_{ijk},\bZ_{4,l})\geq\sum_{l=0}^{M_{22}-1}P(\bY\vert \tilde{\bZ}_{0},\bZ_{4,l})}\label{EquivErrorEvent00HK}
\end{align}
can be rewritten, in terms of types, as follows
\begin{align}
&\bigcup_{i=1}^{M_{11}-1}\bigcup_{j=1}^{M_{12}-1}\bigcup_{k=1}^{M_{21}-1}\bigcup_{T_I(Q_{Z_{1,0}^{4}Y})}\\
&\left\{ \begin{array}{ccc}
(\tilde{\bZ}_{ijk},\bz_{4,0},\by)\in T(\tilde{Q}^0_{Z_1^3Z_{4,0}Y}), \\
\ppp{(\tilde{\bZ}_{ijk},\bZ_{4,l},\by)\in T(\tilde{Q}^l_{Z_1^4Y})}_{l=1}^{M_{22}-1}, \\
\ppp{\p{\tilde{\bz}_{0},\bZ_{4,l},\by}\in T(\hat{Q}^l_{Z^3_{1,0}Z_4Y})}_{l=1}^{M_{22}-1} \end{array} \right\}.\label{EquivErrorEventHK}
\end{align}
We next analyze the probability of \eqref{EquivErrorEventHK}, conditioned on $\calF_0$. Note that the inner union in \eqref{EquivErrorEventHK} is over vectors of types (an exponential number of them). Finally, as before, we simplify the notations of \eqref{EquivErrorEventHK}, and write it equivalently as
\begin{align}
&\bigcup_{i=1}^{M_{11}-1}\bigcup_{j=1}^{M_{12}-1}\bigcup_{k=1}^{M_{21}-1}\bigcup_{\blt}\\
&\left\{ \begin{array}{ccc}
\tilde{\bZ}_{ijk}\in \calA_{\blt,0}, \\
(\tilde{\bZ}_{ijk},\bZ_{4,m})\in \calA_{\blt,m},\ \ \ \text{for }m=1,\ldots,M_{22}-1 \\
\bZ_{4,m}\in \tilde{\calA}_{\blt,m},\ \ \ \text{for }m=1,\ldots,M_{22}-1 \end{array} \right\}\label{EquivErrorEvent1HK}
\end{align}
where, again, the index ``$\blb$" in the inner union runs over the combinations of types (namely, $\blb = \{\tilde{Q}^l_{Z_1^4Y},\hat{Q}^l_{Z_{1,0}^3Z_4Y}\}_l$) that belong to $T_I(Q_{Z_{1,0}^{4}Y})$, and the various sets $\{\calA_{\blt,j},\tilde{\calA}_{\blt,j}\}_{\blt,j}$ correspond to the typical sets in \eqref{EquivErrorEventHK} (recall that $(\bz_{1,0},\bz_{2,0},\bz_{3,0},\bz_{4,0},\by)$ are given in this stage). Similarly as in the proof of Theorem \ref{th:1}, we derive upper bound on a generic probability which have the form of \eqref{EquivErrorEvent1HK}. In the following, we give a generalization of Lemma \ref{lem:Union3} to the probability of a union indexed by $K$ values. The proof is very similar to the proof of Lemma \ref{lem:Union3}, and thus omitted for brevity. For a given subset $\calJ = \ppp{j_1,\ldots,k_{\abs{\calJ}}}$ of $\ppp{1,\ldots,J}$ we write $\bZ_{\calJ}$ as a shorthand for $(Z_{j_1},\ldots,Z_{j_{\abs{\calJ}}})$.
\begin{lemma}\label{lem:Union4}
Let $\ppp{Z_1\p{i}}_{i=1}^{N_1},\ldots,\ppp{Z_J\p{i}}_{i=1}^{N_J}$ and $\ppp{V_1\p{i}}_{i=1}^{N_{J+1}},\ppp{V_2\p{i}}_{i=1}^{N_{J+1}},\ldots,\ppp{V_K\p{i}}_{i=1}^{N_{J+1}}$ be independent sequences of independently and identically distributed (i.i.d.) RVs on the alphabets $\calZ_1\times\ldots\times\calZ_J\times\calV_1\times\ldots\times\calV_K$, respectively, with $Z_1\p{i}\sim P_{Z_1},\ldots,Z_J\p{i}\sim P_{Z_J},V_1\p{i}\sim P_{V_1},\ldots,V_K\p{i}\sim P_{V_K}$. Fix a sequence of sets $\ppp{\calA_{i,1}}_{i=1}^N,\ppp{\calA_{i,2}}_{i=1}^N,\ldots,\ppp{\calA_{i,K}}_{i=1}^N$, where $\calA_{i,j}\subseteq\calZ_1\times\ldots\times\calZ_J\times\calV_{j}$, for $1\leq j\leq K$ and for all $1\leq i\leq N$. Also, fix a set $\ppp{\calA_{i,0}}_{i=1}^N$ where $\calA_{i,0}\subseteq\calZ_1\times\ldots\times\calZ_J$ for all $1\leq i\leq N$, and another sequence of sets $\ppp{\calG_{i,1}}_{i=1}^N,\ppp{\calG_{i,2}}_{i=1}^N,\ldots,\ppp{\calG_{i,K}}_{i=1}^N$, where $\calG_{i,j}\subseteq\calV_{j}$, for $1\leq j\leq K$ and for all $1\leq i\leq N$. Let 
$\bU = (Z_1,Z_2,\ldots,Z_J,U_{J+1})$ with $U_{J+1} \triangleq (V_1,\ldots,V_K)$. Finally, define $\calB_{l,\calJ}$ given in \eqref{calB1df0HK}, 
\begin{figure*}[!t]
\normalsize
\setcounter{MYtempeqncnt}{\value{equation}}
\setcounter{equation}{109}
\begin{align}
\calB_{l,\calJ}\defi\ppp{\bu_{\calJ}:\;z_1^J\in\calA_{l,0},\;\bigcap_{j=1}^{K}\p{z_1^J,v_{j}}\in\calA_{l,j},\;\bigcap_{j=1}^{K}v_j\in\calG_{l,j}\ \ \text{for some }\bu_{\calJ^c}},\label{calB1df0HK}
\end{align}
\hrulefill
\vspace*{4pt}
\end{figure*}
for $l=1,2,\ldots,N$, and $\calZ(i_1^J) = (Z_1(i_1),\ldots,Z_J(i_J))$. Then, a general upper bound is given by \eqref{EquivErrorEvent01HK}, shown at the top of the next page.
\begin{figure*}[!t]
\normalsize
\setcounter{MYtempeqncnt}{\value{equation}}
\setcounter{equation}{110}
\begin{align}
&\Pr\ppp{\bigcup_{i_1^J,j}\ppp{\bigcup_{l=1}^N\ppp{\calZ(i_1^J)\in\calA_{l,0},\;\bigcap_{k=1}^{K}\p{\calZ(i_1^J),V_{k}(j)}\in\calA_{l,k},\;\bigcap_{k=1}^{K}V_k(j)\in\calG_{l,k}}}}\nonumber\\
&\ \ \ \ \ \ \ \ \ \ \ \ \ \ \ \leq\min\left\{1,\min_{\calJ\subseteq\ppp{1,\ldots,J+1}\calJ\neq\emptyset}\p{\prod_{j\in\calJ}N_j}\Pr\ppp{\bigcup_{l=1}^N\bU_{\calJ}\in\calB_{l,\calJ}}\right\}.\label{EquivErrorEvent01HK}
\end{align}
\hrulefill
\vspace*{4pt}
\end{figure*}
\end{lemma}

Applying Lemma \ref{lem:Union4} on \eqref{EquivErrorEventHK} (or, \eqref{EquivErrorEvent1HK}) we obtain \eqref{EquivErrorEvent2HK}, shown at the top of the next page,  
\begin{figure*}[!t]
\normalsize
\setcounter{MYtempeqncnt}{\value{equation}}
\setcounter{equation}{111}
\begin{align}
&\Pr\ppp{\left.\bigcup_{i=1}^{M_{11}-1}\bigcup_{j=1}^{M_{12}-1}\bigcup_{k=1}^{M_{21}-1}\ppp{\sum_{l=0}^{M_{22}-1}P(\bY\vert \tilde{\bZ}_{ijk},\bZ_{4,l})\geq\sum_{l=0}^{M_{22}-1}P(\bY\vert \tilde{\bZ}_{0},\bZ_{4,l})}\right\vert\calF_0}\nonumber\\
&\ \ \ \ \ \ \ \ \ \ \ \ \ \ \exeq \min\left\{1,\min_{\calJ\subseteq\ppp{1,\ldots,4}\calJ\neq\emptyset}\p{\prod_{j\in\calJ}N_j}\Pr\ppp{\bigcup_{\blt}\bU_{\calJ}\in\calB_{\blt,\calJ}}\right\}\label{EquivErrorEvent2HK}
\end{align}
\hrulefill
\vspace*{4pt}
\end{figure*}
where $N_1 = M_{11}, N_2 = M_{12}, N_3 = M_{21}, N_4 = 1$, and 
\begin{align}
\bU = (\bZ_{11},\bZ_{12},\bZ_{21},\bU_4)
\end{align}
in which $\bU_4 = (\bZ_{4,1},\ldots,\bZ_{4,M_{22}-1})$, and $\calB_{\blt,\calJ}$ is given in \eqref{BjSetdef}, also shown at the top of the next page. 
\begin{figure*}[!t]
\normalsize
\setcounter{MYtempeqncnt}{\value{equation}}
\setcounter{equation}{113}
\begin{align}
\calB_{\blt,\calJ} = \left\{ \begin{array}{ccc}
(\tilde{\bz}_{111},\bz_{4,0},\by)\in T(\tilde{Q}^0_{X_1X_{2,0}Y}), \\
\bu_{\calJ}:\ \ppp{(\tilde{\bz}_{111},\bz_{4,l},\by)\in T(\tilde{Q}^l_{Z_1^4Y})}_{l=1}^{M_{22}-1}, \\
\ppp{\p{\tilde{\bz}_{0},\bz_{4,l},\by}\in T(\hat{Q}^j_{X_{1,0}X_2Y})}_{l=1}^{M_{22}-1},\ \text{for some }\bu_{\calJ^c}\end{array} \right\}\label{BjSetdef}
\end{align}
\hrulefill
\vspace*{4pt}
\end{figure*}
The various possibilities for the set $\calJ$ are,
\begin{align}
\left\{ \begin{array}{ccc}
1;2;3;4;\\
12;13;14;23;24;34;\\
123;124;134;234;\\
1234
\end{array} \right\},
\end{align}
namely, we have 15 possibilities. We claim that possibilities $\ppp{1,2,3,12,13,23,123}$ do not affect the outer minimum in \eqref{EquivErrorEvent2HK}, and so we left with possibilities $\ppp{4,14,24,34,124,134,234,1234}$. This observation follows from the same arguments used in \eqref{lastProRightfin0}-\eqref{biggerThan} for the second term at the r.h.s. of \eqref{Last3Prob}. For example, possibilities $\ppp{1,2,3}$ do not affect the outer minimum due to the fact that the probabilities that correspond to possibilities $\ppp{14,24,34}$, respectively, are smaller. Indeed, the multiplicative factors in \eqref{EquivErrorEvent2HK} for each of the pairs $(1,14)$, $(2,24)$, and $(3,34)$, are the same, but the respective probabilities in \eqref{EquivErrorEvent2HK} are smaller for $\ppp{14,24,34}$ (due to the same reason used in \eqref{biggerThan}). Similarly, possibilities $\ppp{12,13,23,123}$ do not affect the outer minimum due to possibilities $\ppp{124,134,234,1234}$, respectively.

In the following, we analyze the remaining terms. For example, the term that corresponds to possibility ``$1234$", is given by
\begin{align}
P_{e,1234}\triangleq M_{11}M_{12}M_{21}\Pr\ppp{\bigcup_{\blt}\bU\in\calB_{\blt,1234}},
\end{align}
which similarly to the passage from \eqref{EquivErrorEvent00HK} to \eqref{EquivErrorEventHK}, can be rewritten as in \eqref{EquivErrorEvent03HK}, shown at the top of the next page.
\begin{figure*}[!t]
\normalsize
\setcounter{MYtempeqncnt}{\value{equation}}
\setcounter{equation}{116}
\begin{align}
P_{e,1234}&=M_{11}M_{12}M_{21}\Pr\ppp{\left.\sum_{l=0}^{M_{22}-1}P(\bY\vert \tilde{\bZ}_{111},\bZ_{4,l})\geq\sum_{l=0}^{M_{22}-1}P(\bY\vert \tilde{\bZ}_{0},\bZ_{4,l})\right\vert\calF_0}\nonumber\\
&=M_{11}M_{12}M_{21}\bE\ppp{\left.\left.\Pr\ppp{\sum_{l=0}^{M_{22}-1}P(\bY\vert \tilde{\bZ}_{111},\bZ_{4,l})\geq\sum_{l=0}^{M_{22}-1}P(\bY\vert \tilde{\bZ}_{0},\bZ_{4,l})\right\vert\calF_0,\tilde{\bZ}_{111}}\right\vert\calF_0}
\label{EquivErrorEvent03HK}
\end{align}
\hrulefill
\vspace*{4pt}
\end{figure*}
Equation \eqref{EquivErrorEvent03HK} has the same form of the probability in \eqref{sameFormRCE}, which we already analyzed. Accordingly, we similarly obtain \eqref{showattop}, 
\begin{figure*}[!t]
\normalsize
\setcounter{MYtempeqncnt}{\value{equation}}
\setcounter{equation}{117}
\begin{align}
&\bE\ppp{\left.\left.\Pr\ppp{\sum_{l=0}^{M_{22}-1}P(\bY\vert \tilde{\bZ}_{111},\bZ_{4,l})\geq\sum_{l=0}^{M_{22}-1}P(\bY\vert \tilde{\bZ}_{0},\bZ_{4,l})\right\vert\calF_0,\tilde{\bZ}_{111}}\right\vert\calF_0}\\
&\exe  \exp\ppp{-n\min_{Q_{Z_1^3\vert Z_{4,0}Y}\in\calS_{\ppp{4}}(Q_{Z_{1,0}^4Y})}\pp{I_Q(Z_1^3;Z_{4,0},Y)+E_7(Q_{Z_1^3Z_{4,0}Y},Q_{Z_{1,0}^4Y})}}\\
&\triangleq \exp\ppp{-n\hat{E}_7^{(6)}(Q_{Z_{1,0}^4Y},R_{22})}\label{showattop}
\end{align}
\hrulefill
\vspace*{4pt}
\end{figure*}
where $E_7(\cdot,\cdot)$ is defined in \eqref{HKEef}, and $\calS_{\ppp{4}}(Q)$ is given in \eqref{HKSef}.

The other terms are handled in a similar fashion. Specifically, let $\hat{\bZ} \triangleq \ppp{Z_1,Z_2,Z_3}$, and define the sets $\calU = \ppp{1,2,3,12,13,23,123}$, and $\tilde{\calU} = \ppp{14,24,34,124,134,234,1234}$. Then, define for any\footnote{Note that $P_{e,7}^{(6)}$ correspond to $P_{e,1234}$ in \eqref{EquivErrorEvent03HK}.} $u\in\ppp{1,2,\ldots,7}$:
\begin{align}
P_{e,u}^{(6)}\triangleq M_{\calU(u)}\cdot\Pr\ppp{\bigcup_{\blt}\bU_{\tilde{\calU}(u)}\in\calB_{\blt,\tilde{\calU}(u)}},
\end{align}
where
\begin{align}
&M_{\calU(1)} \triangleq M_{11};\;M_{\calU(2)} = M_{12};\;M_{\calU(3)} = M_{21};\nonumber\\
&M_{\calU(4)} = M_{11}M_{12};\;M_{\calU(5)} \triangleq M_{11}M_{21};\nonumber\\
&M_{\calU(6)} = M_{12}M_{21}\;M_{\calU(7)} = M_{11}M_{12}M_{21}.
\end{align}
Accordingly, following \eqref{N1QDef}-\eqref{lastProRight3ere}, we get \eqref{sshowtop2}, shown at the top of the next page, 
\begin{figure*}[!t]
\normalsize
\setcounter{MYtempeqncnt}{\value{equation}}
\setcounter{equation}{122}
\begin{align}
P_{e,u}^{(6)} &\exe \exp\ppp{-n\min_{Q_{Z_1^3\vert Z_{4,0}Y}\in\calS_{\ppp{4}}(Q_{Z_{1,0}^4Y})}\pp{I_Q(\hat{\bZ}_{{\calU}(u)};Z_{4,0},Y\vert \hat{\bZ}_{123\setminus{\calU}(u)})+E_u(Q_{Z_1^3Z_{4,0}Y},Q_{Z_{1,0}^4Y})}}\nonumber\\
&\triangleq \exp\ppp{-n\hat{E}_u^{(6)}(Q_{Z_{1,0}^4Y},R_{22})}\label{sshowtop2}
\end{align}
\hrulefill
\vspace*{4pt}
\end{figure*}
where $E_u(\cdot,\cdot)$ is defined in \eqref{HKEef}. Note that the mutual information term in the above exponent is due to the averaging over $\hat{\bZ}_{{\calU}(u)}$, and it is resulted by using the method of types as in \eqref{methodoftypes}. This concludes the analysis for possibilities $\ppp{14,24,34,124,134,234,1234}$, and we left with possibility $\ppp{4}$, which is very similar to \eqref{weirdposs}. Accordingly, using the same arguments in \eqref{lastProRight2}-\eqref{lastProRight3ere}, we obtain
\begin{align}
&P_{e,8}^{(6)}\triangleq \Pr\ppp{\bigcup_{\blt}\bU_4\in\calB_{\blt,4}}\nonumber\\
&\exe \exp\ppp{-n\min_{Q_{Z_1^3\vert Z_{4,0}Y}\in\calS_{\ppp{4}}(Q_{Z_{1,0}^4Y})}E_0(Q_{Z_1^3Z_{4,0}Y},Q_{Z_{1,0}^4Y})}\nonumber\\
&\triangleq \exp\ppp{-n\hat{E}^{(6)}_8(Q_{Z_{1,0}^4Y},R_{22})}
\end{align}
where $E_0(\cdot,\cdot)$ is, again, defined in \eqref{HKEef}. Wrapping up, using \eqref{innerProHK}, \eqref{EquivErrorEvent2HK}, and the last results, after averaging w.r.t. $\calF_0$, we get \eqref{dfdfds}, shown at the top of the next page, 
\begin{figure*}[!t]
\normalsize
\setcounter{MYtempeqncnt}{\value{equation}}
\setcounter{equation}{124}
\begin{align}
P_e^{(6)}&\exeq \bE\ppp{\min\ppp{1,\min_{u\in\ppp{1:7}} e^{-n\pp{\hat{E}_{u}^{(6)}(Q_{Z_{1,0}^4Y},R_{22})-n^{-1}\log M_{\tilde{\calU}(u)}}},e^{-n\hat{E}_8^{(6)}(Q_{Z_{1,0}^4Y},R_{22})}}}\\
& = \bE\ppp{\min\ppp{\min_{u\in\ppp{1:7}}e^{-n\pp{\hat{E}_{u}^{(6)}(Q_{Z_{1,0}^4Y},R_{22})-n^{-1}\log M_{\tilde{\calU}(u)}}_+},e^{-n\hat{E}_8^{(6)}(Q_{Z_{1,0}^4Y},R_{22})}}}\\
& = \bE\ppp{\exp\pp{-n\max\ppp{\max_u\pp{\hat{E}_{u}^{(6)}(Q_{Z_{1,0}^4Y},R_{22})-\frac{1}{n}\log M_{\tilde{\calU}(u)}}_+,\hat{E}_8^{(6)}(Q_{Z_{1,0}^4Y},R_{22})}}}\\
&\exe \exp\ppp{-n\pp{\min_{Q_{Y|Z_{1,0}^4}}\pp{D(Q_{Y|Z_{1,0}^4}||W_{Y|Z_{1,0}^4}\vert P_{Z_{1,0}^4})+E_{\text{HK}}^{(6)}(Q_{Z_{1,0}^4Y})}}}\\
E_{\text{HK}}^{(6)}(Q_{Z_{1,0}^4Y}) &\triangleq \max\ppp{\max_{u\in\ppp{1:7}}\pp{\hat{E}_{u}^{(6)}(Q_{Z_{1,0}^4Y},R_{22})-\calR_{u}}_+,\hat{E}_8^{(6)}(Q_{Z_{1,0}^4Y},R_{22})}\label{dfdfds}
\end{align}
\hrulefill
\vspace*{4pt}
\end{figure*}
where $\calR_u$ for $u=1,2,\ldots,7$ is defined in \eqref{ratesSums}. 

This concludes the analysis of the error event $(\hat{m}_{11}\neq0,\hat{m}_{12}\neq0,\hat{m}_{21}\neq0)$ in \eqref{overalProb}. The other types of errors are analyzed in a similar manner. For $(\hat{m}_{11}\neq0,\hat{m}_{12}=0,\hat{m}_{21}=0)$, the average probability of error, associated with the decoder in \eqref{optDecHK} is given in \eqref{shown3}, given at the top of the next page, 
\begin{figure*}[!t]
\normalsize
\setcounter{MYtempeqncnt}{\value{equation}}
\setcounter{equation}{129}
\begin{align}
P_e^{(1)} &= \Pr\pp{\bigcup_{i=1}^{M_{11}-1}\ppp{\sum_{l=0}^{M_{22}-1}P(\bY\vert \tilde{\bZ}_{i00},\bZ_{22,l})\geq\sum_{l=0}^{M_{22}-1}P(\bY\vert \tilde{\bZ}_{0},\bZ_{22,l})}}\\
& = \bE\ppp{\left.\Pr\pp{\bigcup_{i=1}^{M_{11}-1}\ppp{\sum_{l=0}^{M_{22}-1}P(\bY\vert \tilde{\bZ}_{i00},\bZ_{22,l})\geq\sum_{l=0}^{M_{22}-1}P(\bY\vert \tilde{\bZ}_{0},\bZ_{22,l})}\right\vert\calF_0}}\label{shown3}
\end{align}
\hrulefill
\vspace*{4pt}
\end{figure*}
where $\calF_0 \triangleq(\tilde{\bZ}_0,\bZ_{22,0},\bY)$. Thus, due to the fact that $(\bZ_{12,0},\bZ_{21,0})$ are now fixed, they play a same role as $\bY$ and $\bZ_{22,0}$. Accordingly, following the same steps as in \eqref{lastProRight2}-\eqref{lastProRight3ere}, we get \eqref{pe1}, presented at the top of the next page, 
\begin{figure*}[!t]
\normalsize
\setcounter{MYtempeqncnt}{\value{equation}}
\setcounter{equation}{131}
\begin{align}
P_e^{(1)}&\exeq \exp\ppp{-n\pp{\min_{Q_{Y|Z_{1,0}^4}}\pp{D(Q_{Y|Z_{1,0}^4}||W_{Y|Z_{1,0}^4}\vert P_{Z_{1,0}^4})+E_{\text{HK}}^{(1)}(Q_{Z_{1,0}^4Y})}}}\label{pe1}\\
E_{\text{HK}}^{(1)}(Q_{Z_{1,0}^4Y}) &\triangleq \max\ppp{\pp{\hat{E}^{(1)}(Q_{Z_{1,0}^4Y},R_{22})-\calR_{1}}_+,\hat{E}^{(1)}_8(Q_{Z_{1,0}^4Y},R_{22})}\label{pe12}\\
\hat{E}^{(1)}(Q_{Z_{1,0}^4Y},R_{22}) &= \min_{Q_{Z_1\vert Z_{2,0}^4Y}\in\calS_{\ppp{2,3,4}}(Q_{Z_{1,0}^4Y})}\pp{I_Q(Z_1;Z_{2,0}^4,Y)+E_7(Q_{Z_1Z_{2,0}^4Y},Q_{Z_{1,0}^4Y})}\label{pe13}\\
\hat{E}_8^{(1)}(Q_{Z_{1,0}^4Y},R_{22}) &= \min_{Q_{Z_1\vert Z_{2,0}^4Y}\in\calS_{\ppp{2,3,4}}(Q_{Z_{1,0}^4Y})}E_6(Q_{Z_1Z_{2,0}^4Y},Q_{Z_{1,0}^4Y})\label{pe138}
\end{align}
\hrulefill
\vspace*{4pt}
\end{figure*}
where $E_6(\cdot,\cdot)$ and $E_7(\cdot,\cdot)$ are defined in \eqref{HKEef}, and $\calS_{\ppp{2,3,4}}(Q)$ is given in \eqref{HKSef}. Again, since $(\bZ_{12,0},\bZ_{21,0})$, which correspond to $(Z_{2,0},Z_{3,0})$), are fixed, they are conjugated to $(Z_{4,0},Y)$. The error exponent of $P_e^{(2)}$ which corresponds to $(\hat{m}_{11}=0,\hat{m}_{12}\neq0,\hat{m}_{21}=0)$ can be derived in the same way. We get that the exponent of $P_e^{(2)}$ is obtained by replacing the role of $Z_1$ with $Z_2$ and $\calR_1$ with $\calR_2$, in \eqref{pe1}-\eqref{pe13}. Similarly, $P_e^{(3)}$, corresponding to $(\hat{m}_{11}\neq0,\hat{m}_{12}\neq0,\hat{m}_{21}=0)$, is upper bounded by \eqref{shown4}, presented at the top of the next page,
\begin{figure*}[!t]
\normalsize
\setcounter{MYtempeqncnt}{\value{equation}}
\setcounter{equation}{135}
\begin{align}
P_e^{(3)}&\exeq \exp\ppp{-n\pp{\min_{Q_{Y|Z_{1,0}^4}}\pp{D(Q_{Y|Z_{1,0}^4}||W_{Y|Z_{1,0}^4}\vert P_{Z_{1,0}^4})+E_{\text{HK}}^{(3)}(Q_{Z_{1,0}^4Y})}}}\label{shown4}\\
E_{\text{HK}}^{(3)}(Q_{Z_{1,0}^4Y}) &\triangleq \max\ppp{\max_{u \in\ppp{1,2,4} }\pp{\hat{E}_u^{(3)}(Q_{Z_{1,0}^4Y},R_{22})-\calR_{u}}_+,\hat{E}_8^{(3)}(Q_{Z_{1,0}^4Y},R_{22})}\label{minindex}\\
\hat{E}_1^{(3)}(Q_{Z_{1,0}^4Y},R_{22}) &= \min_{Q_{Z_1^2\vert Z_{3,0}^4Y}\in\calS_{\ppp{3,4}}(Q_{Z_{1,0}^4Y})}\pp{I_Q(Z_1;Z_{3,0}^4,Y\vert Z_2)+E_5(Q_{Z_1^2Z_{3,0}^4Y},Q_{Z_{1,0}^4Y})}\\
\hat{E}_2^{(3)}(Q_{Z_{1,0}^4Y},R_{22}) &= \min_{Q_{Z_1^2\vert Z_{3,0}^4Y}\in\calS_{\ppp{3,4}}(Q_{Z_{1,0}^4Y})}\pp{I_Q(Z_2;Z_{3,0}^4,Y\vert Z_1)+E_6(Q_{Z_1^2Z_{3,0}^4Y},Q_{Z_{1,0}^4Y})}\\
\hat{E}_4^{(3)}(Q_{Z_{1,0}^4Y},R_{22}) &= \min_{Q_{Z_1^2\vert Z_{3,0}^4Y}\in\calS_{\ppp{3,4}}(Q_{Z_{1,0}^4Y})}\pp{I_Q(Z_1,Z_2;Z_{3,0}^4,Y)+E_7(Q_{Z_1^2Z_{3,0}^4Y},Q_{Z_{1,0}^4Y})}\\
\hat{E}_8^{(3)}(Q_{Z_{1,0}^4Y},R_{22}) &= \min_{Q_{Z_1^2\vert Z_{3,0}^4Y}\in\calS_{\ppp{3,4}}(Q_{Z_{1,0}^4Y})}E_3(Q_{Z_1^2Z_{3,0}^4Y},Q_{Z_{1,0}^4Y})
\end{align}
\hrulefill
\vspace*{4pt}
\end{figure*}
where $\calS_{\ppp{3,4}}(Q)$ is defined in \eqref{HKSef}.

Finally, the error exponents of $P_e^{(4)}$ and $P_e^{(5)}$, corresponding to $(\hat{m}_{11}\neq0,\hat{m}_{12}=0,\hat{m}_{21}\neq0)$ and $(\hat{m}_{11}=0,\hat{m}_{12}\neq0,\hat{m}_{21}\neq0)$, respectively, are obtained in the same way. The exponent of $P_e^{(4)}$ is obtained by replacing the role of $Z_2$ with $Z_3$, and changing the minimization in \eqref{minindex} to over the indexes $\ppp{1,3,5}$, and the exponent of $P_e^{(5)}$ is obtained by replacing the role of $Z_1$ with $Z_3$, and changing the minimization in \eqref{minindex} to over the indexes $\ppp{2,3,6}$.

\appendices
\numberwithin{equation}{section}

\section{Definitions for Theorem~\ref{th:2}}
\label{app:0}
In this appendix, we give the definitions of the various parameters appearing in Theorem~\ref{th:2}. For simplicity of notation, in the following, we use the indexes $\ppp{1,2,3,4}$ instead of $\ppp{11,12,21,22}$, respectively. Let $\bZ \triangleq (Z_1,Z_2,Z_3)$, and $\calU \triangleq \ppp{1,2,3,12,13,23,123}$. For $u\in\ppp{0,1,2,\ldots,7}$, $\bZ_{\calU(u)}$ is a random vector consisting of the RVs corresponding to the indexes in $\calU(u)$, for example, $\bZ_1 \triangleq \bZ_{\calU(1)} = Z_1$, $\bZ_{12} \triangleq\bZ_{\calU(4)} = (Z_1,Z_2)$, $\bZ_{123} \triangleq\bZ_{\calU(7)} = (Z_1,Z_2,Z_3)$, and so on, where we define $\bZ_{\calU(0)} = \emptyset$. Let also $\tilde{\bZ}\triangleq \ppp{Z_1,Z_2,Z_3,Z_4}$, $\calI\subseteq\ppp{1,2,3,4}$, and $\tilde{\bZ}_\calI$ be the restriction of the entries of $\tilde{\bZ}$ on the set $\calI$. Then, let
\begin{align}
\calS_{\calI}(Q)\triangleq\ppp{\tilde Q:\;\tilde Q_{\tilde{\bzt}_\calI Y_1}=Q_{\tilde{\bzt}_\calI Y_1},\;\tilde Q_{Z_i}=P_{Z_i},\;\text{for }i\in\calI^c}.\label{HKSef}
\end{align}
Define
\begin{align}
f(Q_{Z_1^4Y_1}) &\triangleq \bE_Q\pp{\log W_{Y_1\vert X_1X_2}(Y_1\vert X_1(Z_{1},Z_{2}),X_2(Z_{3},Z_{4}))}
\end{align}
and let
\begin{align}
r_0(Q_{Z_1^3Y_1}) \triangleq R_{22}+\max_{\substack{\hat{Q}:\;\hat{Q}\in\calS_{\ppp{1,2,3}}(Q)\\ I_{\hat{Q}}(Z_4;Z_1^3,Y_1)\leq R_{22}}}f(\hat{Q})-I_{\hat{Q}}(Z_4;Z_1^3,Y_1).
\end{align}
For $u\in\ppp{0,1,2,\ldots,7}$, define
\begin{align}
&E_{u}(\tilde{Q}_{Z_1^4Y_1},Q_{Z_1^4Y_1}) \triangleq\nonumber\\
&\ \ \ \ \ \ \ \ \ \ \ \min_{\substack{\hat{Q}:\;\hat{Q}\in\calS_{\ppp{1,2,3}}(\tilde{Q})\\ \hat{Q}\in\calD_u(\tilde{Q}_{Z_1^4Y_1},Q_{Z_1^4Y_1})}}\pp{I_{\hat{Q}}(Z_4;\bZ_{\calU(u)},Y_1)-R_{22}}_+,\label{HKEef}
\end{align}
where
\begin{align}
&\calD_u(\tilde{Q}_{Z_1^4Y_1},Q_{Z_1^4Y_1}) \triangleq \left\{\vphantom{\max\ppp{f(\tilde{Q}_{Z_1^4Y_1}),f(\hat{Q})+\pp{R_{22}-I_{\hat{Q}}(Z_4;Z_1^3,Y_1)}_+}}\hat{Q}:\;\max\pp{r_0(Q_{Z_1^4Y_1}),f(Q_{Z_1^4Y_1})}\right.\nonumber\\
&\left.\leq\max\ppp{f(\tilde{Q}_{Z_1^4Y_1}),f(\hat{Q})+\pp{R_{22}-I_{\hat{Q}}(Z_4;\bZ_{\calU(u)},Y_1)}_+}\right\}
\end{align}
Finally, we let
\begin{align}
&\calR_{1} \triangleq R_{11};\;\calR_{2} \triangleq R_{12};\;\calR_{3} \triangleq R_{21};\;\calR_{4} \triangleq R_{11}+R_{12};\nonumber\\
&\calR_{5} \triangleq R_{11}+R_{21};\;\calR_{6} \triangleq R_{12}+R_{21};\nonumber\\
&\calR_{7} \triangleq R_{11}+R_{12}+R_{21}.\label{ratesSums}
\end{align}
Using all the above definition, we define \eqref{appendixA0}-\eqref{appendixA}, shown at the top of page~\pageref{appendixA}.
\begin{figure*}[!t]
\normalsize
\setcounter{MYtempeqncnt}{\value{equation}}
\setcounter{equation}{6}
\begin{align}
&\hat{E}^{(1)}(Q_{Z_{1}^4Y_1},R_{22}) \triangleq \min_{\tilde{Q}:\;\tilde Q\in\calS_{\ppp{2,3,4}}(Q)}\pp{I_{\tilde{Q}}(Z_1;Z_2^{4},Y_1)+E_7(\tilde{Q}_{Z_1^4Y_1},Q_{Z_{1}^4Y_1})},\label{appendixA0}\\
&\hat{E}^{(2)}(Q_{Z_{1}^4Y_1},R_{22}) \triangleq \min_{\tilde{Q}:\;\tilde Q\in\calS_{\ppp{1,3,4}}(Q)}\pp{I_{\tilde{Q}}(Z_2;Z_1,Z_3^{4},Y_1)+E_7(\tilde{Q}_{Z_1^4Y_1},Q_{Z_{1}^4Y_1})},\\
&\hat{E}_8^{(1)}(Q_{Z_{1}^4Y_1},R_{22}) \triangleq \min_{\tilde{Q}:\;\tilde Q\in\calS_{\ppp{2,3,4}}(Q)}E_6(\tilde{Q}_{Z_1^4Y},Q_{Z_{1}^4Y_1}),\\
&\hat{E}_8^{(2)}(Q_{Z_{1}^4Y_1},R_{22}) \triangleq \min_{\tilde{Q}:\;\tilde Q\in\calS_{\ppp{1,3,4}}(Q)}E_5(\tilde{Q}_{Z_1^4Y},Q_{Z_{1}^4Y_1}),\\
&\hat{E}_{1}^{(3)}(Q_{Z_{1}^4Y_1},R_{22}) \triangleq \min_{\tilde{Q}:\;\tilde Q\in\calS_{\ppp{3,4}}(Q)}\pp{I_{\tilde{Q}}(\bZ_{{\calU}(1)};Z_{3}^4,Y_1\vert \bZ_{12\setminus{\calU}(1)})+E_5(\tilde{Q}_{Z_1^4Y_1},Q_{Z_{1}^4Y})},\\
&\hat{E}_{2}^{(3)}(Q_{Z_{1}^4Y_1},R_{22}) \triangleq \min_{\tilde{Q}:\;\tilde Q\in\calS_{\ppp{3,4}}(Q)}\pp{I_{\tilde{Q}}(\bZ_{{\calU}(2)};Z_{3}^4,Y_1\vert \bZ_{12\setminus{\calU}(2)})+E_6(\tilde{Q}_{Z_1^4Y_1},Q_{Z_{1}^4Y})},\\
&\hat{E}_{4}^{(3)}(Q_{Z_{1}^4Y_1},R_{22}) \triangleq \min_{\tilde{Q}:\;\tilde Q\in\calS_{\ppp{3,4}}(Q)}\pp{I_{\tilde{Q}}(\bZ_{{\calU}(4)};Z_{3}^4,Y_1\vert \bZ_{12\setminus{\calU}(4)})+E_7(\tilde{Q}_{Z_1^4Y_1},Q_{Z_{1}^4Y})},\\
&\hat{E}_8^{(3)}(Q_{Z_{1}^4Y_1},R_{22}) \triangleq \min_{\tilde{Q}:\;\tilde Q\in\calS_{\ppp{3,4}}(Q)}E_3(\tilde{Q}_{Z_1^4Y},Q_{Z_{1}^4Y_1}),\\
&\hat{E}_{1}^{(4)}(Q_{Z_{1}^4Y_1},R_{22}) \triangleq \min_{\tilde{Q}:\;\tilde Q\in\calS_{\ppp{2,4}}(Q)}\pp{I_{\tilde{Q}}(\bZ_{{\calU}(1)};Z_2,Z_4,Y_1\vert \bZ_{13\setminus{\calU}(1)})+E_4(\tilde{Q}_{Z_1^4Y_1},Q_{Z_{1}^4Y_1})},\\
&\hat{E}_{3}^{(4)}(Q_{Z_{1}^4Y_1},R_{22}) \triangleq \min_{\tilde{Q}:\;\tilde Q\in\calS_{\ppp{2,4}}(Q)}\pp{I_{\tilde{Q}}(\bZ_{{\calU}(3)};Z_2,Z_4,Y_1\vert \bZ_{13\setminus{\calU}(3)})+E_6(\tilde{Q}_{Z_1^4Y_1},Q_{Z_{1}^4Y_1})},\\
&\hat{E}_{5}^{(4)}(Q_{Z_{1}^4Y_1},R_{22}) \triangleq \min_{\tilde{Q}:\;\tilde Q\in\calS_{\ppp{2,4}}(Q)}\pp{I_{\tilde{Q}}(\bZ_{{\calU}(5)};Z_2,Z_4,Y_1\vert \bZ_{13\setminus{\calU}(5)})+E_7(\tilde{Q}_{Z_1^4Y_1},Q_{Z_{1}^4Y_1})},\\
&\hat{E}_8^{(4)}(Q_{Z_{1}^4Y_1},R_{22}) \triangleq \min_{\tilde{Q}:\;\tilde Q\in\calS_{\ppp{2,4}}(Q)}E_2(\tilde{Q}_{Z_1^4Y_1},Q_{Z_{1}^4Y_1}),\\
&\hat{E}_{2}^{(5)}(Q_{Z_{1}^4Y_1},R_{22}) \triangleq \min_{\tilde{Q}:\;\tilde Q\in\calS_{\ppp{1,4}}(Q)}\pp{I_{\tilde{Q}}(\bZ_{{\calU}(2)};Z_1,Z_4,Y_1\vert \bZ_{23\setminus{\calU}(2)})+E_4(\tilde{Q}_{Z_1^4Y_1},Q_{Z_{1}^4Y_1})},\\
&\hat{E}_{3}^{(5)}(Q_{Z_{1}^4Y_1},R_{22}) \triangleq \min_{\tilde{Q}:\;\tilde Q\in\calS_{\ppp{1,4}}(Q)}\pp{I_{\tilde{Q}}(\bZ_{{\calU}(3)};Z_1,Z_4,Y_1\vert \bZ_{23\setminus{\calU}(3)})+E_5(\tilde{Q}_{Z_1^4Y_1},Q_{Z_{1}^4Y_1})},\\
&\hat{E}_{6}^{(5)}(Q_{Z_{1}^4Y_1},R_{22}) \triangleq \min_{\tilde{Q}:\;\tilde Q\in\calS_{\ppp{1,4}}(Q)}\pp{I_{\tilde{Q}}(\bZ_{{\calU}(6)};Z_1,Z_4,Y_1\vert \bZ_{23\setminus{\calU}(6)})+E_7(\tilde{Q}_{Z_1^4Y_1},Q_{Z_{1}^4Y_1})},\\
&\hat{E}_8^{(5)}(Q_{Z_{1}^4Y_1},R_{22}) \triangleq \min_{\tilde{Q}:\;\tilde Q\in\calS_{\ppp{1,4}}(Q)}E_1(\tilde{Q}_{Z_1^4Y_1},Q_{Z_{1}^4Y_1}),\\
&\hat{E}_{u}^{(6)}(Q_{Z_{1}^4Y_1},R_{22}) \triangleq \min_{\tilde{Q}:\;\tilde Q\in\calS_{\ppp{4}}(Q)}\pp{I_{\tilde{Q}}(\bZ_{{\calU}(u)};Z_{4},Y_1\vert \bZ_{123\setminus{\calU}(u)})+E_u(\tilde{Q}_{Z_1^4Y_1},Q_{Z_{1}^4Y_1})},\;u\in\ppp{1,\ldots,7},\label{Eu6}\\
&\hat{E}_8^{(6)}(Q_{Z_1^4Y_1},R_{22}) \triangleq \min_{\tilde{Q}:\;\tilde Q\in\calS_{\ppp{4}}(Q)}E_0(\tilde{Q}_{Z_1^4Y_1},Q_{Z_{1}^4Y_1}),\\
&E_{\text{HK}}^{(u)}(Q_{Z_1^4Y_1}) \triangleq \max\ppp{\pp{\hat{E}^{(u)}(Q_{Z_{1}^4Y_1},R_{22})-\calR_{u}}_+,\hat{E}_8^{(u)}(Q_{Z_{1}^4Y_1},R_{22})},\;u\in\ppp{1,2},\\
&E_{\text{HK}}^{(3)}(Q_{Z_1^4Y_1}) \triangleq \max\ppp{\max_{u\in\ppp{1,2,4}}\pp{\hat{E}_{u}^{(3)}(Q_{Z_{1}^4Y_1},R_{22})-\calR_{u}}_+,\hat{E}_8^{(3)}(Q_{Z_{1}^4Y_1},R_{22})},\\
&E_{\text{HK}}^{(4)}(Q_{Z_1^4Y_1}) \triangleq \max\ppp{\max_{u\in\ppp{1,3,5}}\pp{\hat{E}_{u}^{(4)}(Q_{Z_{1}^4Y_1},R_{22})-\calR_{u}}_,\hat{E}_8^{(4)}(Q_{Z_{1}^4Y_1},R_{22})},\\
&E_{\text{HK}}^{(5)}(Q_{Z_1^4Y_1}) \triangleq \max\ppp{\max_{u\in\ppp{2,3,6}}\pp{\hat{E}_{u}^{(5)}(Q_{Z_{1}^4Y_1},R_{22})-\calR_{u}}_+,\hat{E}_8^{(5)}(Q_{Z_{1}^4Y_1},R_{22})},\\
&E_{\text{HK}}^{(6)}(Q_{Z_1^4Y_1}) \triangleq \max\ppp{\max_{u\in\ppp{1:7}}\pp{\hat{E}_{u}^{(6)}(Q_{Z_{1}^4Y_1},R_{22})-\calR_{u}}_+,\hat{E}_8^{(6)}(Q_{Z_1^4Y_1},R_{22})},\label{ErrorUserHK}\\
&\tilde E_{\text{HK}}(R_{11},R_{12},R_{21},R_{22}) \triangleq\min_{\substack{Q_{Y_1|Z_{1}^4}:\\ Q_{Z_i} = P_{Z_i},\;1\leq i\leq 4}}\pp{D(Q_{Y_1|Z_{1}^4}||W_{Y_1|Z_{1}^4}\vert P_{Z_{1}^4})+\min_{u\in\ppp{1:6}}E_{\text{HK}}^{(u)}(Q_{Z_1^4Y_1})}.\label{appendixA}
\end{align}
\hrulefill
\vspace*{4pt}
\end{figure*}

\section{Proof of Lemma \ref{lem:Union3}}
\label{app:1}

We prove a generalized version of Lemma \ref{lem:Union3}, where we consider random sequences, $\ppp{V_2\p{i}}_{i=1}^{L_2},\ldots,\ppp{V_K\p{i}}_{i=1}^{L_2}$, rather than single RVs $V_2,\ldots,V_K$. Lemma \ref{lem:Union3} is then obtained on substituting $L_2=1$. We start with the following result which can be thought of as an extension of \cite[Lemma 2]{scarletNew}.

\begin{lemma}\label{lem:scar3}
Let $\ppp{V_1\p{i}}_{i=1}^{L_1},\ppp{V_2\p{i}}_{i=1}^{L_2},\ldots,\ppp{V_K\p{i}}_{i=1}^{L_2}$ be independent sequences of independently and identically distributed (i.i.d.) RVs on the alphabets $\calV_1\times\calV_2\times\ldots\times\calV_K$, respectively, with $V_1\p{i}\sim P_{V_1},V_2\p{i}\sim P_{V_2},\ldots,V_K\p{i}\sim P_{V_K}$. Fix a sequence of sets $\ppp{\calA_{i,1}}_{i=1}^N,\ppp{\calA_{i,2}}_{i=1}^N,\ldots,\ppp{\calA_{i,K-1}}_{i=1}^N$, where $\calA_{i,j}\subseteq\calV_1\times\calV_{j+1}$, for $1\leq j\leq K-1$ and for all $1\leq i\leq N$. Also, fix a set $\ppp{\calA_{i,0}}_{i=1}^N$ where $\calA_{i,0}\subseteq\calV_1$ for all $1\leq i\leq N$, and another sequence of sets $\ppp{\calG_{i,2}}_{i=1}^N,\ppp{\calG_{i,3}}_{i=1}^N,\ldots,\ppp{\calG_{i,K}}_{i=1}^N$, where $\calG_{i,j}\subseteq\calV_{j}$, for $2\leq j\leq K$ and for all $1\leq i\leq N$. We have \eqref{res3s0}, shown at the top of page~\pageref{res3s0}, 
\begin{figure*}[!t]
\normalsize
\setcounter{MYtempeqncnt}{\value{equation}}
\setcounter{equation}{0}
\begin{align}
&\Pr\ppp{\bigcup_{i,j}\ppp{\bigcup_{l=1}^N\ppp{V_1(i)\in\calA_{l,0},\;\bigcap_{k=1}^{K-1}\p{V_1(i),V_{k+1}(j)}\in\calA_{l,k},\;\bigcap_{k=2}^{K}V_k(j)\in\calG_{l,k}}}}\nonumber\\
&\leq\min\left\{1,L_1\bE\pp{\min\ppp{1,L_2\Pr\ppp{\left.\bigcup_{l=1}^N\ppp{V_1\in\calA_{l,0},\;\bigcap_{k=1}^{K-1}\p{V_1,V_{k+1}}\in\calA_{l,k},\;\bigcap_{k=2}^{K}V_k\in\calG_{l,k}}\right\vert V_1}}},\right.\nonumber\\
&\left. L_2\bE\pp{\min\ppp{1,L_1\Pr\ppp{\left.\bigcup_{l=1}^N\ppp{V_1\in\calA_{l,0},\;\bigcap_{k=1}^{K-1}\p{V_1,V_{k+1}}\in\calA_{l,k},\;\bigcap_{k=2}^{K}V_k\in\calG_{l,k}}\right\vert \ppp{V_k}_{k=2}^K}}}\right\}\label{res3s0}
\end{align}
\hrulefill
\vspace*{4pt}
\end{figure*}
with $\p{V_1,\ldots,V_K}\sim P_{V_1}\cdots\times P_{V_K}$.
\end{lemma}

\begin{proof}[Proof of Lemma \ref{lem:scar3}]
The second term in \eqref{res3s0} follows by first applying the union bound over $i$ as in \eqref{uB1}, shown at the top of page~\pageref{uB1},
\begin{figure*}[!t]
\normalsize
\setcounter{MYtempeqncnt}{\value{equation}}
\setcounter{equation}{1}
\begin{align}
&\Pr\ppp{\bigcup_{i,j}\ppp{\bigcup_{l=1}^N\ppp{V_1(i)\in\calA_{l,0},\;\bigcap_{k=1}^{K-1}\p{V_1(i),V_{k+1}(j)}\in\calA_{l,k},\;\bigcap_{k=2}^{K}V_k(j)\in\calG_{l,k}}}}\nonumber\\
&\leq L_1\Pr\ppp{\bigcup_{j}\ppp{\bigcup_{l=1}^N\ppp{V_1\in\calA_{l,0},\;\bigcap_{k=1}^{K-1}\p{V_1,V_{k+1}(j)}\in\calA_{l,k},\;\bigcap_{k=2}^{K}V_k(j)\in\calG_{l,k}}}}\nonumber\\
&\leq L_1\bE\ppp{\Pr\ppp{\left.\bigcup_{j}\ppp{\bigcup_{l=1}^N\ppp{V_1\in\calA_{l,0},\;\bigcap_{k=1}^{K-1}\p{V_1,V_{k+1}(j)}\in\calA_{l,k},\;\bigcap_{k=2}^{K}V_k(j)\in\calG_{l,k}}}\right\vert V_1}}\label{uB1}
\end{align}
\hrulefill
\vspace*{4pt}
\end{figure*}
and then we apply the truncated union bound to the union over $j$, and obtain \eqref{uB2}. 
\begin{figure*}[!t]
\normalsize
\setcounter{MYtempeqncnt}{\value{equation}}
\setcounter{equation}{2}
\begin{align}
&\Pr\ppp{\bigcup_{i,j}\ppp{\bigcup_{l=1}^N\ppp{V_1(i)\in\calA_{l,0},\;\bigcap_{k=1}^{K-1}\p{V_1(i),V_{k+1}(j)}\in\calA_{l,k},\;\bigcap_{k=2}^{K}V_k(j)\in\calG_{l,k}}}}\nonumber\\
&\leq L_1\bE\pp{\min\ppp{1,L_2\Pr\ppp{\left.\bigcup_{l=1}^N\ppp{V_1\in\calA_{l,0},\;\bigcap_{k=1}^{K-1}\p{V_1,V_{k+1}}\in\calA_{l,k},\;\bigcap_{k=2}^{K}V_k\in\calG_{l,k}}}\right\vert V_1}}\label{uB2}
\end{align}
\hrulefill
\vspace*{4pt}
\end{figure*}
The third term is obtained similarly by applying the union bounds in the opposite order, and the upper bound of 1 is trivial. 
\end{proof}

We are now in a position to prove Lemma \ref{lem:Union3}.
\begin{proof}[Proof of Lemma \ref{lem:Union3}]
To obtain \eqref{EquivErrorEvent01} we weaken \eqref{res3s0} as follows. Let $\calF\triangleq\bigcup_{l=1}^N\ppp{V_1\in\calB_{l,1}}$. The second term in \eqref{EquivErrorEvent01} follows from \eqref{Ub3}, shown at the top of page \pageref{Ub3}, 
\begin{figure*}[!t]
\normalsize
\setcounter{MYtempeqncnt}{\value{equation}}
\setcounter{equation}{3}
\begin{align}
&\min\ppp{1,L_2\Pr\ppp{\left.\bigcup_{l=1}^N\ppp{V_1\in\calA_{l,0},\;\bigcap_{k=1}^{K-1}\p{V_1,V_{k+1}}\in\calA_{l,k},\;\bigcap_{k=2}^{K}V_k\in\calG_{l,k}}\right\vert V_1}}\nonumber\\
&= \calI\ppp{\calF}\min\ppp{1,L_2\Pr\ppp{\left.\bigcup_{l=1}^N\ppp{V_1\in\calA_{l,0},\;\bigcap_{k=1}^{K-1}\p{V_1,V_{k+1}}\in\calA_{l,k},\;\bigcap_{k=2}^{K}V_k\in\calG_{l,k}}\right\vert V_1}}\nonumber\\
&\ +\calI\ppp{\calF^c}\min\ppp{1,L_2\Pr\ppp{\left.\bigcup_{l=1}^N\ppp{V_1\in\calA_{l,0},\;\bigcap_{k=1}^{K-1}\p{V_1,V_{k+1}}\in\calA_{l,k},\;\bigcap_{k=2}^{K}V_k\in\calG_{l,k}}\right\vert V_1}}\nonumber\\
& = \calI\ppp{\calF}\min\ppp{1,L_2\Pr\ppp{\left.\bigcup_{l=1}^N\ppp{V_1\in\calA_{l,0},\;\bigcap_{k=1}^{K-1}\p{V_1,V_{k+1}}\in\calA_{l,k},\;\bigcap_{k=2}^{K}V_k\in\calG_{l,k}}\right\vert V_1}}\nonumber\\
& \leq \calI\ppp{\calF}\label{Ub3}
\end{align}
\hrulefill
\vspace*{4pt}
\end{figure*}
where the second equality follows from the fact that the inner term in the expectation vanishes over $\bigcap_{l=1}^N\ppp{V_1\notin\calB_{l,1}}$, and the third inequality follows from the fact that $\min\ppp{1,x}\leq 1$. The third term in \eqref{EquivErrorEvent01} follows in a similar fashion, and the forth term follows from the fact that $\min\ppp{1,x}\leq x$, and thus we get \eqref{uB4}, 
\begin{figure*}[!t]
\normalsize
\setcounter{MYtempeqncnt}{\value{equation}}
\setcounter{equation}{4}
\begin{align}
&L_1\bE\pp{\min\ppp{1,L_2\Pr\ppp{\left.\bigcup_{l=1}^N\ppp{V_1\in\calA_{l,0},\;\bigcap_{k=1}^{K-1}\p{V_1,V_{k+1}}\in\calA_{l,k},\;\bigcap_{k=2}^{K}V_k\in\calG_{l,k}}\right\vert V_1}}}\nonumber\\
&\leq L_1L_2\Pr\ppp{\bigcup_{l=1}^N\ppp{V_1\in\calA_{l,0},\;\bigcap_{k=1}^{K-1}\p{V_1,V_{k+1}}\in\calA_{l,k},\;\bigcap_{k=2}^{K}V_k\in\calG_{l,k}}}\label{uB4}
\end{align}
\hrulefill
\vspace*{4pt}
\end{figure*}
which concludes the proof.
\end{proof}

\section{Computational Aspects of the Exponents}
\label{app:2}
In this appendix, we discuss the computation of \eqref{Estar12}, similarly as in \cite{ScarlettCog}. We start with an alternative formulation of \eqref{Estar12}. Recall that
\begin{align}
\tilde E_1(R_1,R_2) = \min_{Q}\ppp{D(Q||W)+E(Q,R_1,R_2)},\label{EAeff}
\end{align}
where
\begin{align}
E(Q,R_1,R_2) = \max\ppp{\pp{\hat{E}_1(Q,R_2)-R_1}_+,\hat{E}_2(Q,R_2)}.
\end{align}
In the following, for a given $Q_{Y_1|X_{1}X_{2}}$, we show that $\hat{E}_1(Q,R_2)$ and $\hat{E}_2(Q,R_2)$ can be calculated efficiently. For brevity, we let $\tilde{I}(\tilde Q)\equiv I_{\tilde Q}(X_1;X_2,Y_1)$ and $\hat{I}(\hat Q)\equiv I_{\hat Q}(X_2;X_1,Y_1)$. Recall that 
\begin{align}
\hat{E}_1(Q,R_2)&= \min_{\substack{\tilde{Q}\in\calS(Q),\hat{Q}\in\hat{S}(\tilde Q),\\ \hat{Q}\in\calL(\tilde Q,Q)}}\ppp{\tilde{I}(\tilde Q)+\pp{\hat{I}(\hat{Q})-R_2}_+},\label{0eq}\\
\hat{E}_2(Q,R_2)&= \min_{\substack{\tilde{Q}\in\calS(Q),\hat{Q}\in\hat{S}(\tilde Q),\\ \hat{Q}\in\hat{\calL}(\tilde Q,Q)}}\pp{\hat{I}_{\hat{Q}}(X_2;Y_1)-R_2}_+,\label{1eq}
\end{align}
where $\calL$ and $\hat{\calL}$ are defined in \eqref{LsetCases0} and \eqref{LsetCases}, respectively, $\calS(Q) = \{\tilde Q:\;\tilde{Q}_{X_1}=P_{X_1},\;\tilde{Q}_{X_2Y_1} = Q_{X_2Y_1}\}$, and $\hat{S}(\tilde Q) = \{\hat Q:\;\hat{Q}_{X_2}=P_{X_2},\;\hat{Q}_{X_1Y_1} = \tilde Q_{X_1Y_1}\}$. In \cite{ScarlettCog}, it was shown that $\hat{E}_1(Q,R_2)$ can be equivalently expressed in terms of the minimum between the following terms:
\begin{align}
\hat{E}'_{1}(Q,R_2)&\triangleq\min_{\substack{\tilde{Q}\in\calS(Q),\max\pp{t_0(Q),f(Q)}\leq f(\tilde Q)}}\tilde{I}(\tilde Q),\label{71eq3}\\
\hat{E}''_{1}(Q,R_2)&\triangleq\min_{\substack{\tilde{Q}\in\calS(Q),\hat{Q}\in\hat{S}(\tilde Q),\hat{I}(\hat{Q})\leq R_2\\ \max\pp{t_0(Q),f(Q)}\leq f(\hat Q)+R_2-\hat{I}(\hat Q)}}\tilde{I}(\tilde Q),\label{71eq2}\\
\hat{E}_{1}'''(Q,R_2)&\triangleq\min_{\substack{\tilde{Q}\in\calS(Q),\hat{Q}\in\hat{S}(\tilde Q)\\ t_0(Q),f(Q)\leq f(\hat Q)}}\ppp{\tilde{I}(\tilde Q)+\pp{\hat{I}(\hat Q)-R_2}_+}.\label{123eq2}
\end{align}
Using the same arguments as in \cite{ScarlettCog}, it can be shown that $\hat{E}_2(Q,R_2)$ can be equivalently be expressed as
\begin{align}
\hat{E}_{2}''(Q,R_2)&\triangleq\min_{\substack{\tilde{Q}\in\calS(Q),\hat{Q}\in\hat{S}(\tilde Q)\\ \max\pp{t_0(Q),f(Q)}\leq f(\hat Q)}}\pp{I_{\hat Q}(X_2;Y_1)-R_2}_+.\label{123eq2B}
\end{align}
Accordingly, from \eqref{71eq3}-\eqref{123eq2} and \eqref{123eq2B}, we see that \eqref{0eq} and \eqref{1eq} can be expressed in terms of convex optimization problems, namely, for a given $Q_{Y_1|X_{1}X_{2}}$, the terms $\hat{E}_1(Q,R_2)$ and $\hat{E}_2(Q,R_2)$ (i.e., the inner terms of the minimization problem in \eqref{EAeff}) can be calculated efficiently, as desired.

Finally, we discuss the computation of \eqref{EAeff}. Generally speaking, the minimization over $Q_{Y_1|X_1X_2}$ might not be a convex problem, and thus one should resort to global optimization methods (e.g., a simple algorithm is an exhaustive search over a fine grid of probability simplex). Nonetheless, in the following we somewhat simplify these optimizations. We first see that \eqref{EAeff} can be rewritten as
\begin{align}
\tilde E_1(R_1,R_2)& \triangleq\min\ppp{\tilde E_1'(R_1,R_2),\tilde E_1''(R_1,R_2)}
\end{align}
where
\begin{align}
\tilde E_1'(R_1,R_2)&=\inf_{Q:\;R_1<\hat{E}_1(Q,R_2)-\hat{E}_2(Q,R_2)}\nonumber\\
&\left\{D(Q||W)+\hat{E}_1(Q,R_2)-R_1\right\},
\end{align}
and
\begin{align}
\tilde E_1''(R_1,R_2)&=\inf_{Q:\;R_1\geq \hat{E}_1(Q,R_2)-\hat{E}_2(Q,R_2)}\nonumber\\
&\ \ \ \ \ \ppp{D(Q||W)+\hat{E}_2(Q,R_2)}.
\end{align}
Let us analyze $\tilde E_1'(R_1,R_2)$ as a function of $R_1$. For $R_1=0$, we have
\begin{align}
\tilde E_1'(0,R_2)=\inf_{Q}\ppp{D(Q||W)+\hat{E}_1(Q,R_2)}.
\end{align}
Now, letting the minimizer be $\left.Q^*_{Y_1|X_1X_2}\right|_{R_1=0}$, and defining the critical rate $R_{1,\text{crit}} = \hat{E}_1(\left.Q^*_{Y_1|X_1X_2}\right|_{R_1=0},R_2)-\hat{E}_2(\left.Q^*_{Y_1|X_1X_2}\right|_{R_1=0},R_2)$, it is easily noticed that for $R_1\leq R_{1,\text{crit}}$, the exponent is an affine function
\begin{align}
\tilde E_1'(R_1,R_2) = \tilde E_1'(0,R_2)-R_1.
\end{align}
Furthermore, for $R_1\leq R_{1,\text{crit}}$, it is readily seen that
\begin{align}
\tilde E_1(R_1,R_2) = \tilde E_1'(R_1,R_2).
\end{align}
For $R_1> R_{1,\text{crit}}$, however, since the optimization of $Q$ is not convex, the term cannot be simplified anymore. 

\section{Proof of \eqref{ordinregion}}
\label{app:3}
First, note that the following region:
\begin{align}
&R_1<I_W(X_1;Y_1|X_2)\label{10},\\
&R_1<I_W(X_1;Y_1)+\pp{I_W(X_2;Y_1|X_1)-R_2}_+\label{101},
\end{align}
evaluated with $P_{X_1X_2Y_1} = P_{X_1}\times P_{X_2}\times W_{Y_1|X_1X_2}$, is equivalent to \eqref{ordinregion}. Thus, to show that \eqref{ordinregion} is achievable, it suffices to show that the above region is achievable. Now, recall that the ordinary random coding exponent, in the single-user setting, is given by
\begin{align}
E_r(R) = \min_Q\ppp{D(Q_{Y|X}||W|P_X)+\pp{I_Q(X;Y)-R}_+}.\nonumber
\end{align}
From \cite[Lemma 9]{Galg} it can be shown that $E_r(R)$ can be rewritten as
\begin{align}
E_r(R)  &=\min_{(Q,\tilde Q)\in \calD: f(\tilde{Q})\leq f(Q)}\left\{D(\tilde{Q}_{Y|X}||W|P_X)\right.\nonumber\\
&\ \ \ \ \ \ \ \ \ \ \ \ \ \ \ \ \ \ \ \ \ \ \left. \ \ \ \ \ +\pp{I_Q(X;Y)-R}_+\right\},
\end{align}
where $f(Q) = \bE_Q\ppp{\log W(Y|X)}$, and $\calD$ is the set of $(Q,\tilde Q)$ distributions such that $Q_Y = \tilde{Q}_Y$, and $Q_X = \tilde{Q}_X = P_X$. The last representation is very similar, in some sense, to the error exponent formula in Theorem \ref{th:1}. It can be seen that $E_r(R)$ is positive as long as
\begin{align}
R<\min_{Q: Q_Y = W_Y, Q_X = P_X, f(W)\leq f(Q)}I_Q(X;Y).\label{11}
\end{align}
Obviously, we should get that $R<I_W(X;Y)$, namely, the minimum in \eqref{11} should be equal to $I_W(X;Y)$. To see that this is indeed the case, note that since the above optimization problem is convex, the linear constraint is met with equality, and we note that $f(W) = -H_W(Y|X)$, $f(Q) = -D(Q||W|P_X)-H_Q(Y|X)$, and $I_Q(X;Y) = H_W(Y)-H_Q(Y|X)$. Using the last facts, we get
\begin{align}
R&<\min_{Q: Q_Y = W_Y, Q_X = P_X, f(W)= f(Q)}I_Q(X;Y)\nonumber\\
& = \min_{\substack{Q: Q_Y = W_Y, Q_X = P_X,\\ H_W(Y|X)= D(Q||W|P_X)+H_Q(Y|X)}}H_W(Y)-H_Q(Y|X)\nonumber\\
& = I_W(X;Y) \nonumber\\
&\ \ \ \ \ \ +\min_{\substack{Q: Q_Y = W_Y, Q_X = P_X,\\ H_W(Y|X)= D(Q||W|P_X)+H_Q(Y|X)}}D(Q||W|P_X)\nonumber\\
& = I_W(X;Y),\nonumber
\end{align}
as required. 

In our case, using the equivalent representation of our error exponent in \eqref{71eq3}-\eqref{123eq2}, we readily get that the error exponent in Theorem \ref{th:1} is positive if\footnote{To show that \eqref{ordinregion} is achievable, we consider a lower bound on $\tilde{E}(R_1,R_2)$, which ignores the contribution of $\hat{E}_2(Q,R_2)$, namely, $\min_Q\ppp{D(Q||W|P_X)+\pp{\hat{E}_1(Q,R_2)-R_1}_+}$.}:
\begin{align}
R_1&<\min_{\substack{\tilde{Q}\in\calS(W),\max\pp{t_0(W),f(W)}\leq f(\tilde Q)}}\tilde{I}(\tilde Q),\label{con1}\\
R_1&<\min_{\substack{\tilde{Q}\in\calS(W),\hat{Q}\in\hat{S}(\tilde Q),\hat{I}(\hat{Q})\leq R_2\\ \max\pp{t_0(W),f(W)}\leq f(\hat Q)+R_2-\hat{I}(\hat Q)}}\tilde{I}(\tilde Q),\label{con2}\\
R_1&<\min_{\substack{\tilde{Q}\in\calS(W),\hat{Q}\in\hat{S}(\tilde Q)\\ \max\pp{t_0(W),f(W)}\leq f(\hat Q)}}\ppp{\tilde{I}(\tilde Q)+\pp{\hat{I}(\hat Q)-R_2}_+}\label{con3}
\end{align}
where we recall that $\tilde{I}(\tilde Q)\equiv I_{\tilde Q}(X_1;X_2,Y_1)$ and $\hat{I}(\hat Q)\equiv I_{\hat Q}(X_2;X_1,Y_1)$. In the following, we show that \eqref{con1} and \eqref{con3} correspond  to \eqref{10} and \eqref{101}, respectively. Finally, we show that \eqref{con1} is dominated by \eqref{con1} and \eqref{con3}, and thus superfluous. Indeed, for \eqref{con1}, we have
\begin{align}
&\min_{\substack{\tilde{Q}\in\calS(W),\max\pp{t_0(W),f(W)}\leq f(\tilde Q)}}\tilde{I}(\tilde Q) \nonumber\\
&= \min_{\substack{\tilde{Q}_{Y|X_2}=W_{Y|X_2},\max\pp{t_0(W),f(W)}\leq f(\tilde Q)}}I_{\tilde{Q}}(X_1;X_2,Y)\nonumber\\
&\geq \min_{\substack{\tilde{Q}_{Y|X_2}=W_{Y|X_2},f(W)\leq f(\tilde Q)}}I_{\tilde{Q}}(X_1;X_2,Y).\label{min}
\end{align}
Now, as before, we note that (using the fact that the minimization over $\tilde Q$ in \eqref{min} is such that $\tilde{Q}_{Y|X_2}=W_{Y|X_2}$)
\begin{align}
I_{\tilde{Q}}(X_1;X_2,Y) &= I_{\tilde{Q}}(X_1;X_2)+I_{\tilde{Q}}(X_1;Y|X_2)\\
&= I_{\tilde{Q}}(X_1;X_2)+H_W(Y|X_2)\nonumber\\
&\ \ \ \ -H_{\tilde{Q}}(Y|X_1,X_2),\\
f(W)& = -H_W(Y|X_1,X_2),\label{propre1}
\end{align}
and
\begin{align}
f(\tilde Q) = -D(\tilde{Q}||W|P_X)-H_{\tilde{Q}}(Y|X_1,X_2).\label{propre2}
\end{align}
Thus, we have \eqref{CapDiv}, shown at the top of the next page, 
\begin{figure*}[!t]
\normalsize
\setcounter{MYtempeqncnt}{\value{equation}}
\setcounter{equation}{12}
\begin{align}
&\min_{\substack{\tilde{Q}_{Y|X_2}=W_{Y|X_2},f(W)\leq f(\tilde Q)}}I_{\tilde{Q}}(X_1;X_2,Y) \nonumber\\
&= \min_{\substack{\tilde{Q}_{Y|X_2}=W_{Y|X_2},\\ D(\tilde{Q}||W|P_X)+H_{\tilde{Q}}(Y|X_1,X_2)\leq H_W(Y|X_1,X_2)}}I_{\tilde{Q}}(X_1;X_2)+H_W(Y|X_2)-H_{\tilde{Q}}(Y|X_1,X_2) \nonumber\\
&\geq I_W(X_1;Y|X_2)+\min_{\substack{\tilde{Q}_{Y|X_2}=W_{Y|X_2},\\ D(\tilde{Q}||W|P_X)+H_{\tilde{Q}}(Y|X_1,X_2)\leq H_W(Y|X_1,X_2)}}I_{\tilde{Q}}(X_1;X_2)+D(\tilde{Q}||W|P_X)\nonumber\\
&= I_W(X_1;Y|X_2)\label{CapDiv}
\end{align}
\hrulefill
\vspace*{4pt}
\end{figure*}
where the inequality follows from the fact that $-H_{\tilde{Q}}(Y|X_1,X_2)\geq D(\tilde{Q}||W|P_X)-H_W(Y|X_1,X_2)$ induced by the optimization constraint, and the last equality is achieved by taking $\tilde{Q}=W$ and $\tilde{Q}_{X_1,X_2} = P_{X_1}P_{X_2}$. The constraint in \eqref{con3} is handled in a similar manner. Indeed, using the same manipulations, we get \eqref{supported}, shown at the top of the next page, 
\begin{figure*}[!t]
\normalsize
\setcounter{MYtempeqncnt}{\value{equation}}
\setcounter{equation}{13}
\begin{align}
&\min_{\substack{\tilde{Q}_{Y|X_2} = W_{Y|X_2},\hat{Q}_{Y|X_1} = \tilde{Q}_{Y|X_1}\\ D(\hat{Q}||W|P_X)+H_{\hat{Q}}(Y|X_1,X_2)\leq H_W(Y|X_1,X_2)}}\ppp{\tilde{I}(\tilde{Q})+\pp{I_{\hat{Q}}(X_1;X_2)+H_{\tilde{Q}}(Y|X_1)-H_{\hat Q}(Y|X_1,X_2)-R_2}_+}\nonumber\\
&\geq\min_{\substack{\tilde{Q}_{Y|X_2} = W_{Y|X_2},\hat{Q}_{Y|X_1} = \tilde{Q}_{Y|X_1}\\ D(\hat{Q}||W|P_X)+H_{\hat{Q}}(Y|X_1,X_2)\leq H_W(Y|X_1,X_2)}}\left\{\tilde{I}(\tilde{Q})+\left[I_{\hat{Q}}(X_1;X_2)+D(\hat{Q}||W|P_X)+H_{\tilde{Q}}(Y|X_1)\right.\right.\nonumber\\
&\left.\left.\ \ \  \ \ \ \ \ \ \ \ \  \ \ \ \ \ \ \ \ \  \ \ \ \ \ \ \ \ \  \ \ \ \ \ \ \ \ \ \ \ \ \ \ \ \ \ \ \ \ \ \ \ \ \ \ \ \ \ \vphantom{\left\{\tilde{I}(\tilde{Q})+\left[I_{\hat{Q}}(X_1;X_2)+D(\hat{Q}||W|P_X)+H_{\tilde{Q}}(Y|X_1)\right.\right.}-H_{W}(Y|X_1,X_2)-R_2\right]_+\right\}\nonumber\\
&=\min_{\substack{\tilde{Q}_{Y|X_2} = W_{Y|X_2},\tilde{Q}_{Y|X_1}=W_{Y|X_1}}}\left\{I_{\tilde{Q}}(X_1;Y)+I_{\tilde{Q}}(X_1;X_2|Y)+\left[H_{\tilde{Q}}(Y|X_1)-H_{W}(Y|X_1,X_2)-R_2\right]_+\right\}\nonumber\\
&=\min_{\substack{\tilde{Q}_{Y|X_2} = W_{Y|X_2},\tilde{Q}_{Y|X_1}=W_{Y|X_1}}}\left\{I_{W}(X_1;Y)+I_{\tilde{Q}}(X_1;X_2|Y)+\left[H_{W}(Y|X_1)-H_{W}(Y|X_1,X_2)-R_2\right]_+\right\}\nonumber\\
&=I_{W}(X_1;Y)+\left[I_{W}(X_2;Y|X_2)-R_2\right]_+,\label{supported}
\end{align}
\hrulefill
\vspace*{4pt}
\end{figure*}
where the inequality is due to the fact that $-H_{\tilde{Q}}(Y|X_1,X_2)\geq D(\tilde{Q}||W|P_X)-H_W(Y|X_1,X_2)$, the second equality follows by taking $\hat{Q}=W$, and the last equality follows by taking $\tilde{Q}$ such that $X_2-Y-X_1$ is a Markov chain. 

Finally, we show that the constraint in \eqref{con2} is superfluous. To this end, we will show that for $R_2<I_W(X_2;Y|X_1)$, the r.h.s. of \eqref{con2} reduces to $R_1+R_2<I(X_1,X_2;Y)$, which is dominated by \eqref{con1} and \eqref{con3} (or, equivalently, by \eqref{CapDiv} and \eqref{supported}), and for $R_2\geq I_W(X_2;Y|X_1)$, \eqref{con2} reduces to $R_1<I_{W}(X_1;Y)$, already supported by \eqref{con3} (see \eqref{supported}). Whence, \eqref{con2} is redundant. Indeed, for $R_2\geq I_W(X_2;Y|X_1)$, the r.h.s. of \eqref{con2} can be lower bounded as in \eqref{UpperAppD1}, presented at the top of the next page, 
\begin{figure*}[!t]
\normalsize
\setcounter{MYtempeqncnt}{\value{equation}}
\setcounter{equation}{14}
\begin{align}
&\min_{\substack{\tilde{Q}\in\calS(W),\hat{Q}\in\hat{S}(\tilde Q),\hat{I}(\hat{Q})\leq R_2\\ \max\pp{t_0(W),f(W)}\leq f(\hat Q)+R_2-\hat{I}(\hat Q)}}\tilde{I}(\tilde Q)\geq \min_{\substack{\tilde{Q}\in\calS(W),\hat{Q}\in\hat{S}(\tilde Q),\hat{I}(\hat{Q})\leq R_2\\ t_0(W)\leq f(\hat Q)+R_2-\hat{I}(\hat Q)}}\tilde{I}(\tilde Q)\nonumber\\
& \stackrel{(a)}{=} \min_{\substack{\tilde{Q}\in\calS(W),\hat{Q}\in\hat{S}(\tilde Q),\hat{I}(\hat{Q})\leq R_2\\ \max_{\hat{Q}:\;\hat{Q}\in\calS(W),\hat{I}(\hat{Q})\leq R_2}\pp{f(\hat{Q})-I_{\hat{Q}}(X_2;X_1,Y)}\leq f(\hat Q)-I_{\hat{Q}}(X_2;X_1,Y)}}I_{\tilde{Q}}(X_1;X_2,Y)\nonumber\\
&\stackrel{(b)}{\geq} \min_{\substack{\tilde{Q}\in\calS(W),\hat{Q}\in\hat{S}(\tilde Q),\hat{I}(\hat{Q})\leq R_2\\ f(W)-I_W(X_2;X_1,Y)\leq f(\hat Q)-I_{\hat{Q}}(X_2;X_1,Y)}}I_{\tilde{Q}}(X_1;X_2,Y)\nonumber\\
& \stackrel{(c)}{=} \min_{\substack{\tilde{Q}\in\calS(W),\hat{Q}\in\hat{S}(\tilde Q),\hat{I}(\hat{Q})\leq R_2\\ -H_W(Y|X_1)\leq -D(\hat{Q}||W|P_X)-I_{\hat{Q}}(X_1;X_2)-H_{\tilde{Q}}(Y|X_1)}}H_W(Y)-H_{\tilde{Q}}(Y|X_1)+I_{\tilde{Q}}(X_1;X_2|Y)\nonumber\\
& \stackrel{(d)}{\geq} I_{W}(X_1;Y)+\min_{\substack{\tilde{Q}\in\calS(W),\hat{Q}\in\hat{S}(\tilde Q),\hat{I}(\hat{Q})\leq R_2\\ -H_W(Y|X_1)\leq -D(\hat{Q}||W|P_X)-I_{\hat{Q}}(X_1;X_2)-H_{\tilde{Q}}(Y|X_1)}}D(\hat{Q}||W|P_X)+I_{\hat{Q}}(X_1;X_2)+I_{\tilde{Q}}(X_1;X_2|Y)\nonumber\\
& \stackrel{(e)}{=} I_{W}(X_1;Y)\label{UpperAppD1}
\end{align}
\hrulefill
\vspace*{4pt}
\end{figure*}
where in (a) we use the definition of $t_0(W)$ in \eqref{t0def}, (b) follows from the assumption that $R_2\geq I_W(X_2;Y|X_1)$, (c) is due to \eqref{propre1}-\eqref{propre2}, (d) follows from the fact that $-H_{\tilde{Q}}(Y|X_1)\geq D(\hat{Q}||W|P_X)+I_{\hat{Q}}(X_1;X_2)-H_W(Y|X_1)$ induced by the optimization constraint, and (e) is achieved by taking $\hat{Q}=W$, $\hat{Q}_{X_1X_2} = \hat{Q}_{X_1}\hat{Q}_{X_2}$, and $\tilde{Q}$ such that $X_2-Y-X_1$ is a Markov chain. Thus, for $R_2\geq I_W(X_2;Y|X_1)$, we obtained that $R_1<I_{W}(X_1;Y)$, as required. On the other hand, for $R_2<I_W(X_2;Y|X_1)$, the r.h.s. of \eqref{con2} can be lower bounded as shown in \eqref{UpperAppD2}, given in the next page, 
\begin{figure*}[!t]
\normalsize
\setcounter{MYtempeqncnt}{\value{equation}}
\setcounter{equation}{15}
\begin{align}
&\min_{\substack{\tilde{Q}\in\calS(W),\hat{Q}\in\hat{S}(\tilde Q),\hat{I}(\hat{Q})\leq R_2\\ \max\pp{t_0(W),f(W)}\leq f(\hat Q)+R_2-\hat{I}(\hat Q)}}\tilde{I}(\tilde Q)\geq \min_{\substack{\tilde{Q}\in\calS(W),\hat{Q}\in\hat{S}(\tilde Q),\hat{I}(\hat{Q})\leq R_2\\ f(W)\leq f(\hat Q)+R_2-\hat{I}(\hat Q)}}\tilde{I}(\tilde Q)\nonumber\\
&\stackrel{(a)}{=} \min_{\substack{\tilde{Q}\in\calS(W),\hat{Q}\in\hat{S}(\tilde Q),\hat{I}(\hat{Q})\leq R_2\\ -H_W(Y|X_1,X_2)\leq -D(\hat{Q}||W|P_X)-H_{\hat{Q}}(Y|X_1)-I_{\hat{Q}}(X_1;X_2)+R_2}}H_W(Y)-H_{\tilde{Q}}(Y|X_1)+I_{\tilde{Q}}(X_1;X_2|Y)\nonumber\\
&\stackrel{(b)}{=} \min_{\substack{\tilde{Q}\in\calS(W),\hat{Q}\in\hat{S}(\tilde Q),\hat{I}(\hat{Q})\leq R_2\\ -H_W(Y|X_1,X_2)\leq -D(\hat{Q}||W|P_X)-H_{\hat{Q}}(Y|X_1)-I_{\hat{Q}}(X_1;X_2)+R_2}}H_W(Y)-H_{\hat{Q}}(Y|X_1)+I_{\tilde{Q}}(X_1;X_2|Y)\nonumber\\
& \stackrel{(c)}{\geq} I_W(X_1,X_2;Y)-R_2\nonumber\\
&\ \ \ +\min_{\substack{\tilde{Q}\in\calS(W),\hat{Q}\in\hat{S}(\tilde Q),\hat{I}(\hat{Q})\leq R_2\\ -H_W(Y|X_1,X_2)\leq -D(\hat{Q}||W|P_X)-H_{\hat{Q}}(Y|X_1)-I_{\hat{Q}}(X_1;X_2)+R_2}}D(\hat{Q}||W|P_X)+I_{\hat{Q}}(X_1;X_2)+I_{\tilde{Q}}(X_1;X_2|Y)\nonumber\\
&\geq I_W(X_1,X_2;Y)-R_2\label{UpperAppD2}
\end{align}
\hrulefill
\vspace*{4pt}
\end{figure*}
where (a) is due to \eqref{propre1}-\eqref{propre2}, (b) is because $\hat{Q}\in\calS(\tilde{Q})$ and thus $H_{\tilde{Q}}(Y|X_1)=H_{\hat{Q}}(Y|X_1)$, and (c) follows from the fact that $-H_{\hat{Q}}(Y|X_1)\geq D(\hat{Q}||W|P_X)+I_{\hat{Q}}(X_1;X_2)-H_W(Y|X_1,X_2)-R_2$ induced by the optimization constraint. Whence, we obtained that $R_1+R_2\leq I_W(X_1,X_2;Y)$, as required. 

\section*{Acknowledgment}

The authors would like to thank the associate editor, Jun Chen, and the anonymous referees for their suggestions and comments which helped improving the content of this paper. The authors are also grateful to Jonathan Scarlett for very helpful comments on a previous version of this paper, and for drawing our attention to \cite{ScarlettCog} (and its extended version which is not published yet), which was unknown to the authors at the time of writing the original version of this paper.

\ifCLASSOPTIONcaptionsoff
  \newpage
	\newpage
\fi
\bibliographystyle{IEEEtran}
\bibliography{strings}
\end{document}